\theoremstyle{plain}
\newtheorem{proposition}{Proposition}
\newtheorem{assumption}{Assumption}
\newtheorem{theorem}{Theorem}
\theoremstyle{definition}
 \newenvironment{definition}
   {\pushQED{\qed}\definitionx}
   {\popQED\enddefinitionx}
\newenvironment{problem}
  {\pushQED{\qed}\problemx}
  {\popQED\endproblemx}
\newenvironment{example}
  {\pushQED{\qed}\examplex}
  {\popQED\endexamplex}
\theoremstyle{remark}
\newtheorem{remark}{Remark}
\pgfplotsset{compat=1.14}
\tikzset{
    %Define standard arrow tip
    >=stealth',
    %Define style for boxes
    punkt/.style={
           rectangle,
           rounded corners,
           draw=black, very thick,
           text width=12em,
           minimum height=2em,
           text centered},
    % Define arrow style
    pil/.style={
           ->,
           thick,
           shorten <=2pt,
           shorten >=2pt,},
	cross/.style={
    	cross out, 
        draw=black, 
        minimum size=2*(#1-\pgflinewidth), 
        inner sep=0pt, 
        outer sep=0pt},
%default radius will be 1pt. 
	cross/.default={1pt}
}    
\preto\chapter{\glsresetall}
\newacronym{bsc}{BSC}{Bounded Satisfiability Checking}
\newacronym{cegis}{CEGIS}{Counterexample-Guided Inductive Synthesis}
\newacronym{iis}{IIS}{Irreducibly Inconsistent Set}
\newacronym{ips}{IPS}{Intelligent Physical System}
\newacronym{lp}{LP}{Linear Programming}
\newacronym{ltl}{LTL}{Linear Temporal Logic}
\newacronym{milp}{MILP}{Mixed-Integer Linear Programming}
\newacronym{mpc}{MPC}{Model Predictive Control}
\newacronym{mtl}{MTL}{Metric Temporal Logic}
\newacronym{ompl}{OMPL}{Open Motion Planning Library}
\newacronym{rtl}{RTL}{Temporal Logic over Reals}
\newacronym{smt}{SMT}{Satisfiability Modulo Theories}
\newacronym{stl}{STL}{Signal Temporal Logic}
\newcommand{\idstlpred}[1][]{\ensuremath{\gls*{pred}^{\ifthenelse{\isempty{#1}}{\gls*{stlfunc}}{#1}}}}
\newcommand{\idstlalways}[2]{\ensuremath{\gls*{always}_{#1} #2}}
\newcommand{\idstlevent}[2]{\ensuremath{\gls*{eventually}_{#1} #2}}
\newcommand{\idstluntil}[3]{\ensuremath{#1 \gls*{until}_{#2} #3}}
\newcommand{\idstlrelease}[3]{\ensuremath{#1 \gls*{release}_{#2} #3}}
\newcommand{\encform}[1]{\left|\left(#1\right)\right|}
\let\NAT@parse\undefined
\title{ \LARGE \bf
Automatic Trajectory Synthesis for Real-Time Temporal Logic\thanks{This work was supported in part by the National Science Foundation under Grant IIS-1724070, and Grant CNS-1830335, and in part by the Army Research Laboratory under  Grant W911NF-17-1-0072}
}
\author{Rafael Rodrigues da Silva$^{1}$ Vince Kurtz$^1$, and Hai Lin$^{1}$% <-this % stops a space
	\thanks{$^{1}$ All authors are with Department of Electrical Engineering, University of Notre Dame, Notre Dame, IN 46556, USA.
		{\tt\small (rrodri17@nd.edu;~vkurtz@nd.edu;~hlin1@nd.edu)}}
}
\begin{document}

\maketitle
\thispagestyle{empty}
\pagestyle{empty}

\begin{abstract}
Many safety-critical systems must achieve high-level task specifications with guaranteed safety and correctness. Much recent progress towards this goal has been made through controller synthesis from temporal logic specifications. Existing approaches, however, have been limited to relatively short and simple specifications. Furthermore, existing methods either consider some prior discretization of the state-space, deal only with a convex fragment of temporal logic, or are not provably complete. We propose a scalable, provably complete algorithm that synthesizes continuous trajectories to satisfy non-convex \gls*{rtl} specifications. We separate discrete task planning and continuous motion planning on-the-fly and harness highly efficient boolean satisfiability (SAT) and \gls*{lp} solvers to find dynamically feasible trajectories that satisfy non-convex \gls*{rtl} specifications for high dimensional systems. The proposed design algorithms are proven sound and complete, and simulation results demonstrate our approach's scalability.  
\end{abstract}

%%%%%%%%%%%%%%%%%%%%%%%%%%%%%%%%%%%%%%%%%%%%%%%%%%%%%%%%%%%%%%%%%%%%%%%%%%%%%%%%%%%%%%%%%%%%
\section{Introduction}
%%%%%%%%%%%%%%%%%%%%%%%%%%%%%%%%%%%%%%%%%%%%%%%%%%%%%%%%%%%%%%%
Autonomous \glspl*{ips} must be capable of interpreting and automatically achieving high-level task specifications. \textit{Symbolic control} proposes to fulfill this need by automatically designing controllers that satisfy formal logic specifications. Temporal logics such as \acrfull*{rtl} and \acrfull*{stl} can express a wide variety of tasks for \glspl*{ips} \cite{baier2008principles}. Furthermore, temporal logic formulas are close to natural language and can even be interpreted by verbal commands \cite{finucane2010ltlmop}. 

However, today's large-scale \glspl*{ips} present unprecedented scalability challenges for symbolic control techniques, and existing symbolic control algorithms cannot solve many real-world problems. This scalability challenge stems from the need to combine logical constraints (from task specifications) with continuous motion restrictions (from physical dynamics).

Early efforts in symbolic control relied on discrete abstractions of continuous dynamical systems. Much work focused on obtaining an equivalent discrete and finite quotient transition system (see \cite{belta2007symbolic,kloetzer2008fully,gol2013language,plaku2016motion,belta2017formal} and references therein). Given an equivalent transition system, logical constraints can be handled with efficient search techniques in the discrete space. Finding such discrete abstractions is difficult for high-dimensional systems, however, and these approaches are usually limited to systems with less than five continuous state variables \cite{rungger2013specification}. 

Recently, a growing body of work has focused on the synthesis of continuous trajectories from high-level logical specifications directly (see \cite{wolff2016optimal,plaku2016motion,shoukry2018smc} and references therein). One of the significant challenges in this approach is the combination of logical constraints and physical dynamical constraints. Together, these non-convex constraints mean that even determining whether or not a satisfying trajectory exists is a difficult problem. For this reason, existing trajectory synthesis approaches are only provably complete for bounded-time specifications \cite{belta2019formal}. 

Another challenge caused by joint logical/physical constraints is scalability to high-dimensional systems (those with more than ten continuous state variables) \cite{wolff2016optimal}. State-of-the-art solvers based on \gls*{milp} are exponential in the number of logical predicates, severely limiting scalability to complex systems \cite{franzle2007efficient}. Meanwhile, sampling-based and heuristic approaches can achieve impressive results on certain problems, but are not complete and perform poorly on narrow passages \cite{zhang2008efficient}. 

We directly address this issue for \gls*{rtl} specifications using a two-layer control architecture. By separating the non-convex logical specification (discrete task planning layer) from physical system dynamics (continuous motion planning layer) on-the-fly, we can achieve superior scalability and provable completeness unbounded specifications. 

We focus on \gls*{rtl} in particular because \gls*{rtl} describes specifications over continuous variables and is not time-bounded by definition---the very sort of specification that existing symbolic control techniques struggle to handle. Furthermore, \gls*{rtl} is closely related to commonly used temporal logics like Signal Temporal Logic (STL) and Linear Temporal Logic (LTL). The main difference between RTL and these more commonly used logics is that STL and LTL can include time bounds. In fact, much existing work on symbolic control is restricted to bounded-time subsets of STL and LTL, which enables completeness guarantees \cite{raman2014model}. By using RTL, we essentially consider a complementary subset of specifications---those without time bounds. For this reason, extending our results to STL and LTL should be relatively straightforward.

In the discrete planning layer, we use \gls*{bsc} techniques and highly efficient SAT solvers to overcome nonconvexity in the logical specification. Then, in the continuous motion planning layer, finding a corresponding continuous trajectory is as simple as solving a Linear Program (LP). Inspired by the framework of \gls*{cegis}, these two layers work together to ensure completeness: if a continuous trajectory cannot be found for a given discrete plan, a counterexample is generated to guide the discrete planner at the next iteration. 

Our main contribution is a trajectory synthesis method that is provably sound and complete for unbounded real-time temporal logic specifications. We show that this method is scalable to systems with over 10 state variables and complex logical specifications. Simulation results indicate that our approach is over an order of magnitude faster than the current state-of-the-art. 

The rest of the paper is organized as follows. We review related work in Section \ref{sec:rw}. After presenting the necessary preliminaries and a formal problem statement in Section \ref{sec:preliminaries}, we outline our proposed approach in Section \ref{sec:overview}. Sections \ref{sec:dplan} and \ref{sec:fsearch} provide a detailed description of the discrete task planning and continuous motion planning algorithms. Section \ref{sec:idrtl} details how discrete task planning and continuous motion planning work together, including proofs of soundness and completeness. Finally, Section \ref{sec:results} provides simulation results that illustrate the speed and scalability of our approach, and Section \ref{sec:conclusion} concludes the paper. 

%%%%%%%%%%%%%%%%%%%%%%%%%%%%%%%%%%%%%%%%%%%%%%%%%%%%%%%%%%%%%%%%%%%%%%%%%%%%%%%%%%%%%%%%%%%%
\section{Related Work}\label{sec:rw}
%%%%%%%%%%%%%%%%%%%%%%%%%%%%%%%%%%%%%%%%%%%%%%%%%%%%%%%%%%%%%%%
Existing approaches for symbolic control based on trajectory synthesis can be roughly divided into three categories: \gls*{milp} based 
\added{\cite{wolff2016optimal,raman2014model,sadraddini2015robust,saha2016milp,liu2017communication,lindemann2017robust}}, sampling based \cite{he2015towards,plaku2016motion},  and \gls*{smt} based \cite{shoukry2018smc} approaches. 

The basic idea of the \gls*{milp} approach is to rewrite statements with logical expressions into mixed-integer constraints. The addition of auxiliary binary variables to facilitate this rewriting, however, renders the problem intractable for long trajectories. Thus MILP-based approaches such as BluSTL \cite{raman2014model} have focused on \gls*{mpc}, which limits the duration (i.e., number of time indices) of the search.  LTL OPT \cite{plaku2016motion} proposed an alternative encoding to synthesize controllers from a fragment of \gls*{ltl} and \added{from \gls*{mtl} \cite{saha2016milp}} for longer time horizons. However, this approach faces the same limitations and struggles to efficiently handle nonconvex logical constraints with a long duration (greater than $100$ time indices). 

When considering only a convex fragment of \gls*{stl}, the problem can be efficiently encoded as a \acrfull*{lp} problem \cite{raman2014model} instead. Furthermore, the satisfaction of such an \gls*{stl} formula can be measured using robust semantics, which allows for efficient controller synthesis using control barrier functions \cite{lindemann2019control} and prescribed performance control \cite{lindemann2017prescribed}. However, this convex fragment of \gls*{stl} cannot describe many tasks required by real-world \gls*{ips}. For instance, a quadrotor performing an automated inspection task needs to return to a charging station in a reasonable time infinitely often: this requires nested existential quantifications (always-eventually) and thus cannot be described by a convex fragment of \gls*{stl}. Similarly, a robot operating in a warehouse might have a task that requires logical disjunction: pick up one box OR another box before moving to a goal destination.

Sampling-based approaches such as the \Gls*{ompl} \gls*{ltl} planner \cite{sucan2012the-open-motion-planning-library} propose to combine sampling-based motion planning with discrete search algorithms. Sampling-based motion planning methods are relatively easy to implement and can provide fast solutions to some difficult problems. However, such approaches are suboptimal and are not guaranteed to find a solution if one exists, a property referred to as \textit{(in)completeness}. Instead, they ensure weaker notions of \textit{asymptotical optimality} \cite{karaman2011sampling} and \textit{probabilistic completeness} \cite{hsu2006probabilistic}, meaning that an optimal solution will be provided, if one exists, given sufficient runtime of the algorithm. These difficulties are exemplified in poor performance on problems with narrow passages \cite{zhang2008efficient}. 

Our proposed approach builds on \gls*{smt} based symbolic control, which has been used to generate dynamically-feasible trajectories for \gls*{ltl} specifications \cite{shoukry2018smc}. Modern \gls*{smt} solvers can efficiently find satisfying valuations of extensive formulas with complex Boolean structures combining various decidable theories such as lists, arrays, bit vectors, linear integer arithmetic, and linear real arithmetic \cite{barrett2009satisfiability}. \gls*{smt} based symbolic control from \gls*{ltl} specifications showed encouraging performance for motion planning problems. However, existing approaches are not provably complete.  Furthermore, the implementation of real-time specifications is difficult, as existing methods do not offer explicit bounds regarding when the synthesis algorithm will terminate. 

To the best of our knowledge, our approach is the first trajectory-based symbolic controller for real-time temporal logic that is provably sound and complete for nonconvex unbounded specifications. Furthermore, we present comparative results showing that our approach scales well to a long duration (greater than $100$ time indices) tasks and high-dimensional (greater than $10$ continuous state variables) system dynamics.

%%%%%%%%%%%%%%%%%%%%%%%%%%%%%%%%%%%%%%%%%%%%%%%%%%%%%%%%%%%%%%%%%%%%%%%%%%%%%%%%%%%%%%%%%%%%%%%%%%%%%%%%%%%%%%%%%%%%%%%%%%%%%%%%%%%%%%%%%%%%%%%%%%%%%%%%%%%%%%%%%%%%%%%%%%
\section{Preliminaries}\label{sec:preliminaries}
%%%%%%%%%%%%%%%%%%%%%%%%%%%%%%%%%%%%%%%%%%%%%%%%%%%%%%%%%%%%%%%%%%%%%%%%%%%%%%%%%%%%%%%%%%%%%%%%%%%%%%%%%%%%%%%%%%%%%%%%%%%%%%%%%%%%%%%%%%%%%%%%%%%%%%%%%%%%%%%%%%%%%%%%%%
%%%%%%%%%%%%%%%%%%%%%%%%%%%%%%%%%%%%%%%%%%%%%%%%%%%%%%%%%%%%%%%%%%%%%%%%%%%%%%%%%%%%%%%%%%%%%%%%%%%%%%%%%%%%%%%%%%%%%%%%%%%%%%%%%%%%%%%%%%%%%%%%%%%%%%%%%%%%%%%%%%%%%%%%%%
\subsection{System}\label{sec:system}
%%%%%%%%%%%%%%%%%%%%%%%%%%%%%%%%%%%%%%%%%%%%%%%%%%%%%%%%%%%%%%%%%%%%%%%%%%%%%%%%%%%%%%%%%%%%%%%%%%%%%%%%%%%%%%%%%%%%%%%%%%%%%%%%%%%%%%%%%%%%%%%%%%%%%%%%%%%%%%%%%%%%%%%%%%

Consider the following discrete-time linear control system:
\begin{equation}\label{eq:system}
    \gls*{statec}_{k+1} = A\gls*{statec}_{k} + B\gls*{inputc}_{k}, 
\end{equation}
where $\gls*{statec}_k \in \gls*{statecdom} \subset \mathbb{R}^{\gls*{statecnb}}$ are the state variables, $\gls*{inputc}_k \in \gls*{inputcdom} \subset \mathbb{R}^{\gls*{inputcnb}}$ are the control inputs, $\gls*{statecdom} := \{ \gls*{statec}_k \in \mathbb{R}^{\gls*{statecnb}} | A_{\gls*{statecdom}} \gls*{statec}_k \leq \boldsymbol{b}_{\gls*{statecdom}}  \}$ and $\gls*{inputcdom} := \{ \gls*{inputc}_k \in \mathbb{R}^{\gls*{inputcnb}} | A_{\gls*{inputcdom}} \gls*{inputc}_k \leq \boldsymbol{b}_{\gls*{inputcdom}}  \}$ are full dimensional polytopes,  $A$, $B$, $A_{\gls*{statecdom}}$, $A_{\gls*{inputcdom}}$ are matrices and $\boldsymbol{b}_{\gls*{statecdom}}$, $\boldsymbol{b}_{\gls*{inputcdom}}$ are vectors with proper dimensions. We assume that the system is stable (i.e., $A$ has positive real-valued eigenvalues).

In fact, System (\ref{eq:system}) can arise from linearization and sampling of a more general continuous system. In this case, we denote the sampling period as $\gls*{ts} = t_{k+1} - t_k $.

We will use model checking techniques to verify and control the dynamic behavior of the system. For this purpose, we model this system as a Kripke structure $M_c$. Kripke structures are a type of transition system model that can represent a large class of systems. They are formally defined as follows: 
\begin{definition}\label{def:model}
    A tuple $M = \langle S, Act, T, I, L, \Sigma, F \rangle$ is a Kripke structure where $S$ is a set of states, $I \subseteq S$ is a set of initial states, $Act$ is a set of actions (aka, inputs), $T \subseteq S \times Act \times S$ is a transition relation, $L : S \mapsto 2^{\Sigma}$ is a labeling function over a finite set of symbols $\Sigma$, and $F \subseteq S$ is a set of accepted states.
\end{definition}

A run of $M$ is a sequence $\gls*{run} = s_0 \xrightarrow{\alpha_0} s_1 \xrightarrow{\alpha_1} s_2 \dots s_K$, where $s_k \in S$, $\alpha_k \in Act$, $s_0 \in I$, $s_K \in F$, and $s_k \xrightarrow{\alpha_k} s_{k+1}$ if and only if $(s_k,\alpha_k,s_{k+1}) \in T$. This run generates a path of $M$ which is defined as a sequence of labels $\sigma = L(s_0)L(s_1)L(s_2)\dots L(s_K)$. When we need to model infinite behaviors of $M$, we accept a run if it visits the acceptance set $F$ infinitely often. We call such structures that model infinite behavior \textit{fair} Kripke structures.     

We define a continuous structure $M_c$ for the continuous system (\ref{eq:system}): $M_c = \langle S_c, Act_c, T_c, I_c, L_c, \Sigma, F_c \rangle$, where: $S_c = \gls*{statecdom}$, $I_{c} = \{ \gls*{statecini} \}$ with $\gls*{statecini} \in \gls*{statecdom}$, $Act_c = \gls*{inputcdom}$, $\gls*{statec}_k \xrightarrow{\gls*{inputc}_k} \gls*{statec}_{k+1}$ if and only if $\gls*{statec}_{k+1} = A \gls*{statec}_k + B \gls*{inputc}_k$, and $F_c = \gls*{statecdom}$. The labels $L_c$ and symbols $\Sigma$ depend on the logical specification, and are defined in the next subsection. Therefore, a \textit{run} (trajectory) of $M_c$ is a sequence $\gls*{run}_c = \gls*{statec}_{0}\xrightarrow{\gls*{inputc}_0}\gls*{statec}_{1}\xrightarrow{\gls*{inputc}_1}\dots$.

%%%%%%%%%%%%%%%%%%%%%%%%%%%%%%%%%%%%%%%%%%%%%%%%%%%%%%%%%%%%%%%%%%%%%%%%%%%%%%%%%%%%%%%%%%%%%%%%%%%%%%%%%%%%%%%%%%%%%%%%%%%%%%%%%%%%%%%%%%%%%%%%%%%%%%%%%%%%%%%%%%%%%%%%%%
\subsection{Linear Temporal Logic over Reals}\label{sec:rtl} 
%%%%%%%%%%%%%%%%%%%%%%%%%%%%%%%%%%%%%%%%%%%%%%%%%%%%%%%%%%%%%%%%%%%%%%%%%%%%%%%%%%%%%%%%%%%%%%%%%%%%%%%%%%%%%%%%%%%%%%%%%%%%%%%%%%%%%%%%%%%%%%%%%%%%%%%%%%%%%%%%%%%%%%%%%%

Instead of abstracting the continuous state space directly into finite symbols to provide labels for $M_c$, we use a formal language that allows us to represent state constraints as half-spaces. As we will see in Section \ref{sec:dplan}, this allows us to separate logical constraints from continuous dynamics without the expensive computation of a discrete abstraction. Specifically, we consider high-level specifications are given as \gls*{rtl} formulas \cite{lin2014mission}.

\begin{definition}
    \gls*{rtl} formulas are defined recursively according to the following syntax:
\begin{align*}	
	\gls*{stateformula} := & \idstlpred | \gls*{neg} \idstlpred  | \gls*{stateformula}_1 \gls*{and} \gls*{stateformula}_2 | \gls*{stateformula}_1 \gls*{or} \gls*{stateformula}_2 && \triangleright \text{ state} \\
    \gls*{rtlformula} := & \gls*{stateformula} | \gls*{neg} \gls*{stateformula} | \gls*{rtlformula}_1 \gls*{and} \gls*{rtlformula}_2 | \gls*{rtlformula}_1 \gls*{or} \gls*{rtlformula}_2 | \idstluntil{\gls*{rtlformula}_1}{}{\gls*{rtlformula}_2} | \idstlrelease{\gls*{rtlformula}_1}{}{\gls*{rtlformula}_2}, && \triangleright \text{ RTL}
\end{align*}
where \gls*{rtlformula}, $\gls*{rtlformula}_1$, $\gls*{rtlformula}_2$  are \gls*{rtl} formulas, $\gls*{stateformula}$, $\gls*{stateformula}_1$ and $\gls*{stateformula}_2$ are state formulas, and $\idstlpred \in \gls*{predset}$ is an atomic proposition. Propositions $\idstlpred : \mathbb{R}^{\gls*{statecnb}} \mapsto \{\gls*{true}, \gls*{false}\}$ are defined by a function $\gls*{stlfunc}$, which we assume is linear affine, i.e.,  $\gls*{stlfunc}(\gls*{statec}) = \boldsymbol{h}^\intercal \gls*{statec} + a$, $\boldsymbol{h} \in \mathbb{R}^{\gls*{statecnb}}$ and $a \in \mathbb{R}$.  
\end{definition}

An RTL formula \gls*{rtlformula} is defined in terms of state formulas \gls*{stateformula}. Note that all state formulas are RTL formulas, but not all RTL formulas are state formulas. We denote a state $\gls*{statec}_k \in \mathbb{R}^{\gls*{statecnb}}$ satisfying a state formula by $\gls*{statec} \gls*{sat} \gls*{stateformula}$, and define the notion of satisfaction recursively: $\gls*{statec} \gls*{sat} \idstlpred$ if and only if $\gls*{stlfunc}(\gls*{statec}) > 0$,  $\gls*{statec} \gls*{sat} \gls*{neg} \idstlpred$ if and only if $-\gls*{stlfunc}(\gls*{statec}) > 0$, $\gls*{statec} \gls*{sat} \gls*{stateformula}_1 \gls*{and} \gls*{stateformula}_2$ if and only if $\gls*{statec} \gls*{sat} \gls*{stateformula}_1$ and $\gls*{statec} \gls*{sat} \gls*{stateformula}_2$, and $\gls*{statec} \gls*{sat} \gls*{stateformula}_1 \gls*{or} \gls*{stateformula}_2$ if and only if $\gls*{statec} \gls*{sat} \gls*{stateformula}_1$ or $\gls*{statec} \gls*{sat} \gls*{stateformula}_2$. With these definitions, we can derive standard Boolean shorthands like negation $\gls*{neg}$, implication \gls*{implies}, and biconditional \gls*{iff}.

These state formulas allow us to define symbols $ \Sigma $, and the labeling function $L_c$ for the transition system is $M_c$ as follows:
\begin{definition}
    For each state formula $\gls*{stateformula}$ in the RTL formula $\gls*{rtlformula}$, define the symbol $p^{\gls*{stateformula}}$. Then $\Sigma = \{p^{\gls*{stateformula}}\}$ is the set of all such symbols. The labeling function $L_c : \mathbb{R}^{\gls*{statecnb}} \mapsto 2^{\Sigma}$ maps states to a set of symbols, where $p^{\gls*{stateformula}} \in L_c(\gls*{statec})$ for a state $\gls*{statec} \in \mathbb{R}^{\gls*{statecnb}}$ if and only if $\gls*{statec} \gls*{sat} \gls*{stateformula}$.
\end{definition}

% We represent these state formulas as symbols $p^{\gls*{stateformula}} \in \Sigma$ and denote a labeling function $L_c : \mathbb{R}^{\gls*{statecnb}} \mapsto 2^{\Sigma}$ such that $p^{\gls*{stateformula}} \in L_c(\gls*{statec})$ for a state $\gls*{statec} \in \mathbb{R}^{\gls*{statecnb}}$ of $M_{c}$ if and only if $\gls*{statec} \gls*{sat} \gls*{stateformula}$.   

In this work, we assume that a valid transition of $M_c$ changes satisfaction of at most one predicate. In particular, transitions only occur between adjacent labeled state spaces. This assumption is minimally restrictive, since the system $M_c$ approximate the behavior of a continuous-time system, and regions in state space determine predicates.
\begin{assumption}\label{a:adjacent}
    We assume that a valid transition $(\gls*{statec}_k,\gls*{inputc}_k,\gls*{statec}_{k+1}) \in T_c$ of $M_c$ occurs only if there exists at most one predicate $\idstlpred$ in the specification \gls*{rtlformula} such that $\gls*{statec}_k \gls*{sat} \idstlpred$ and $\gls*{statec}_{k+1} \gls*{sat} \gls*{neg} \idstlpred$.
\end{assumption}

\begin{example}
Consider an integrator $x_{k+1} = x_k + T_s u_k$, where $u_k$, the input variable, is bounded $|u| \leq 1 m/s$, and  $T_s$ is the sampling time. Consider the formula $(x \leq 1) \gls*{or} (x \geq 2)$. The assumption is satisfied if $Ts < 1s$, since at each timestep, at most one of the predicates ($x \leq 1$ or $x \geq 2$) can change. If the sampling time is too large, however, the state could jump from $(x\leq1)$ to $(x\geq2)$ in a single step, violating the assumption.
\end{example}

The meaning (semantics) of an \gls*{rtl} formula is interpreted over a run \gls*{run} of $M_c$.  We denote a run $\gls*{run}$ satisfying an \gls*{rtl} formula \gls*{rtlformula} by $\gls*{run} \gls*{sat} \gls*{rtlformula}$.  
We write $\gls*{run} \gls*{sat}_k \gls*{rtlformula}$ when the run $\gls*{statec}_{k}\xrightarrow{\gls*{inputc}_k}\gls*{statec}_{k+1}\xrightarrow{\gls*{inputc}_{k+1}}\dots$ satisfies the \gls*{rtl} formula \gls*{rtlformula}. 

\begin{definition}
 The following semantics define the validity of a formula \gls*{rtlformula} with respect to the run $\gls*{run}$: 
\begin{itemize}

    \item {$\gls*{run} \gls*{sat} \gls*{rtlformula}$ if and only if $\gls*{run} \gls*{sat}_0 \gls*{rtlformula}$}
    \item {$\gls*{run} \gls*{sat}_{k} \gls*{stateformula}$ if and only if $\gls*{statec}_k \gls*{sat} \gls*{stateformula}$,}
    \item {$\gls*{run} \gls*{sat}_{k} \gls*{rtlformula}_1 \gls*{and} \gls*{rtlformula}_2$ if and only if $\gls*{run} \gls*{sat}_{k} \gls*{rtlformula}_1$ and $\gls*{run} \gls*{sat}_{k} \gls*{rtlformula}_2$,}
    \item {$\gls*{run} \gls*{sat}_{k} \gls*{rtlformula}_1 \gls*{or} \gls*{rtlformula}_2$ if and only if $\gls*{run} \gls*{sat}_{k} \gls*{rtlformula}_1$ or $\gls*{run} \gls*{sat}_{k} \gls*{rtlformula}_2$,}
    \item \added{ {$\gls*{run} \gls*{sat}_{k} \gls*{rtlformula}_1 \gls*{until} \gls*{rtlformula}_2$ if and only if $\exists t_{k^\prime} \geq t_k$ s.t. $\gls*{run} \gls*{sat}_{{k^\prime}}\gls*{rtlformula}_2$, and $\forall t_k \leq t_{k^{\prime\prime}} < t_{k^\prime}, \gls*{run} \gls*{sat}_{{k^{\prime\prime}}}\gls*{rtlformula}_1$,}}
    \item \added{{$\gls*{run} \gls*{sat}_{k} \gls*{rtlformula}_1 \gls*{release} \gls*{rtlformula}_2$ if and only if $\exists t_{k^\prime} \geq t_k$ s.t. $\gls*{run} \gls*{sat}_{{k^\prime}}\gls*{rtlformula}_1$, and $\gls*{run} \gls*{sat}_{{k^{\prime\prime}}}\gls*{rtlformula}_2, \forall t_k \leq t_{k^{\prime\prime}} < t_{k^\prime}$, or $\gls*{run} \gls*{sat}_{{k^\prime}}\gls*{rtlformula}_2, \forall t_{k^\prime} \geq t_k$.}}
\end{itemize}    
\end{definition}

 The operator until $\gls*{rtlformula}_1 \gls*{until} \gls*{rtlformula}_2$ means that the sub-formula $\gls*{rtlformula}_1$ must remain true until $\gls*{rtlformula}_2$ becomes true. On the other hand, a specification  $\gls*{rtlformula}_1$ releases $\gls*{rtlformula}_2$ ($\gls*{rtlformula}_1 \gls*{release} \gls*{rtlformula}_2$) means that $\gls*{rtlformula}_2$ must remain true until $\gls*{rtlformula}_1$ is true. If $\gls*{rtlformula}_1$ is never true, $\gls*{rtlformula}_2$ must remain true forever. Moreover, these definitions allow us to derive the operators ``eventually'' $\idstlevent{}{\gls*{rtlformula}} = \idstluntil{\gls*{true}}{}{\gls*{rtlformula}}$ and ``always'' $\idstlalways{}{\gls*{rtlformula}} = \idstlrelease{\gls*{false}}{}{\gls*{rtlformula}}$.

%The \gls*{rtl} semantics are very similar to \gls*{ltl} semantics without the next operator. The only difference is that the RTL until and release operators are valid if there exists an instant $t_{k^\prime}$ such that $\gls*{run} \gls*{sat}_{{k^\prime}}\gls*{rtlformula}_1 \gls*{and} \gls*{rtlformula}_2$. 

\begin{definition}
We define the set of subformulas (closure) $cl(\gls*{rtlformula})$ of an  \gls*{rtl} formula \gls*{rtlformula} as the smallest set satisfying the following conditions: $\gls*{rtlformula} \in cl(\gls*{rtlformula})$, if $\circ \gls*{rtlformula}_1 \in cl(\gls*{rtlformula})$ for $\circ \in \{ \gls*{eventually}_{}, \gls*{always}_{} \}$ then $\gls*{rtlformula}_1 \in cl(\gls*{rtlformula})$, and if $\gls*{rtlformula}_1 \circ \gls*{rtlformula}_2 \in cl(\gls*{rtlformula})$ for $\circ \in \{ \gls*{or}, \gls*{and}, \gls*{until}, \gls*{release} \}$ then $\gls*{rtlformula}_1,\gls*{rtlformula}_2 \in cl(\gls*{rtlformula})$.    
\end{definition}

\begin{remark}
Note that \gls*{rtl} formulas are time-unbounded by definition. Since dealing with unbounded formulas is particularly difficult for existing symbolic control methods, using RTL rather than related formal logic like STL or MTL allows us to focus on the challenges particular to time-unbounded formulas.
\end{remark}

%%%%%%%%%%%%%%%%%%%%%%%%%%%%%%%%%%%%%%%%%%%%%%%%%%%%%%%%%%%%%%%%%%%%%%%%%%%%%%%%%%%%%%%%%%%%%%%%%%%%%%%%%%%%%%%%%%%%%%%%%%%%%%%%%%%%%%%%%%%%%%%%%%%%%%%%%%%%%%%%%%%%%%%%%%
\subsection{Problem formulation}\label{sec:problem}
%%%%%%%%%%%%%%%%%%%%%%%%%%%%%%%%%%%%%%%%%%%%%%%%%%%%%%%%%%%%%%%%%%%%%%%%%%%%%%%%%%%%%%%%%%%%%%%%%%%%%%%%%%%%%%%%%%%%%%%%%%%%%%%%%%%%%%%%%%%%%%%%%%%%%%%%%%%%%%%%%%%%%%%%%%

The \gls*{rtl} symbolic control problem is formally defined as follows:

\begin{problem}\label{prob:1}
    Given an \gls*{rtl} formula \gls*{rtlformula}, and a dynamical system $M_{c}$, design a control signal $\gls*{inputc} = \gls*{inputc}_{0}\gls*{inputc}_{1}\dots$ such that the corresponding run %generated a dynamically-feasible trajectory
    $\gls*{run} := \gls*{statec}_{0}\xrightarrow{\gls*{inputc}_0}\gls*{statec}_{1}\xrightarrow{\gls*{inputc}_1}\dots$ of the system $M_{c}$ satisfies \gls*{rtlformula}. 
\end{problem}

%%%%%%%%%%%%%%%%%%%%%%%%%%%%%%%%%%%%%%%%%%%%%%%%%%%%%%%%%%%%%%%%%%%%%%%%%%%%%%%%%%%%%%%%%%%%%%%%%%%%%%%%%%%%%%%%%%%%%%%%%%%%%%%%%%%%%%%%%%%%%%%%%%%%%%%%%%%%%%%%%%%%%%%%%%
\section{Overview}\label{sec:overview}
%%%%%%%%%%%%%%%%%%%%%%%%%%%%%%%%%%%%%%%%%%%%%%%%%%%%%%%%%%%%%%%%%%%%%%%%%%%%%%%%%%%%%%%%%%%%%%%%%%%%%%%%%%%%%%%%%%%%%%%%%%%%%%%%%%%%%%%%%%%%%%%%%%%%%%%%%%%%%%%%%%%%%%%%%%

Solving Problem \ref{prob:1} directly in terms of the continuous system $M_c$ is quite difficult due largely to the nonconvexity introduced by the logical specification $\gls*{rtlformula}$. To overcome this nonconvexity, we separate the problem into two parts: discrete task planning and continuous motion planning, as shown in Fig.~\ref{fig:diag1}. In the discrete planning phase, we determine a sequence of convex regions in the state space that enforces satisfaction of the task specification $\gls*{rtlformula}$. Given a discrete plan, finding a corresponding continuous trajectory (motion planning) can be reduced to a simple linear programming problem. 

To find a discrete task plan, we first propose a finite-state abstraction $M_d$ which is related to $M_c$ through a simulation relation \cite{baier2008principles}. Unlike early work in symbolic control, this abstraction is built on-the-fly from the predicates of specification $\gls*{rtlformula}$. Furthermore, we consider discrete plans to be a fair Kripke structure, which allows us to consider unbounded specifications elegantly. Finally, we propose an encoding that allows us to find a satisfying discrete plan by solving a Boolean satisfiability problem (SAT). While SAT is an NP-complete problem, many fast solvers exist,  and SAT/SMT solver performance has been increasing steadily in recent years \cite{franzle2007efficient}.  

Given a discrete plan, we show that finding a corresponding continuous run of a system (\ref{eq:system}) can be reduced to solving a linear program (LP). If this LP is infeasible, we treat the corresponding discrete plan as an infeasible \textit{counterexample}, which is passed back to the discrete planning layer.

A key insight is that the information from previous infeasible discrete plans can be used to generate new plans. Specifically, we show how off-the-shelf incremental \gls*{smt} solvers like Z3 \cite{de2008z3} can use such information from past iterations to improve scalability drastically. 

We prove that our approach is sound (any run generated by our approach satisfies the specification $\gls*{rtlformula}$) and complete (if any satisfying run exists, our approach will find a satisfying run). Furthermore, we demonstrate the scalability of our approach in several simulation examples. Unlike approaches that seek to obtain an equivalent discrete and finite abstraction, our approach obtains a simulation abstraction and does not require feedback controllers that guarantee the transitions. Unlike other trajectory-based approaches, our method guarantees soundness and completeness for unbounded-time specifications. Furthermore, our approach is scalable to high-dimensional systems and complex specifications.

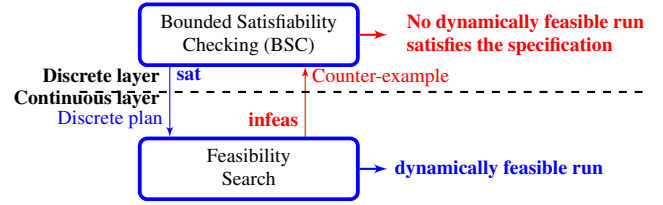
\begin{figure}[htp]
    \tikzstyle{block} = [draw, rounded corners=1mm, color=blue, text=black, line width=0.5mm, rectangle, minimum height=3em, minimum width=6em]
    \centering
    \begin{tikzpicture}[auto, >=latex', scale=0.75, transform shape]
        \node[block, minimum width=11em] (dplan) {\begin{tabular}{c}Bounded Satisfiability \\ Checking (BSC) \end{tabular}};
        \node[block, minimum width=11em,below=1.25cm of dplan] (reachsearch) {\begin{tabular}{c}Feasibility \\ Search\end{tabular} };
        \node[below left = -1mm and -0.5cm of dplan] {\textbf{Discrete layer}};
        \node[below left =  3mm and -0.5cm of dplan] {\textbf{Continuous layer}};
        \draw[->,color=blue] ([xshift=-14mm] dplan.south) -- node[left,near end] {\color{blue} Discrete plan} node[right,pos=0.1] {\color{blue} \textbf{sat}} ([xshift=-14mm] reachsearch.north);
        \draw[->,color=red] ([xshift=10mm]reachsearch.north) -- node[right,pos=0.85] {\color{red} Counter-example} node[left,near start] {\color{red} \textbf{infeas}} ([xshift=10mm] dplan.south);
        \draw[->,thick,color=blue] (reachsearch.east) -- node[right, pos=1] {\color{blue} \textbf{dynamically feasible run}} ++(0.5cm,0);
        \draw[->,thick,color=red] (dplan.east) -- node[right,pos=1] {\color{red} \begin{tabular}{l} \textbf{No dynamically feasible run} \\ \textbf{satisfies the specification} \end{tabular}} ++(0.5cm,0);
        \draw[-,thick,dashed] ([xshift=-3cm,yshift=-4.5mm] dplan.south) -- +(10cm,0);
    \end{tikzpicture}
    \caption{Pictorial representation of proposed approach. }
    \label{fig:diag1}
\end{figure}

%%%%%%%%%%%%%%%%%%%%%%%%%%%%%%%%%%%%%%%%%%%%%%%%%%%%%%%%%%%%%%%%%%%%%%%%%%%%%%%%%%%%%%%%%%%%%%%%%%%%%%%%%%%%%%%%%%%%%%%%%%%%%%%%%%%%%%%%%%%%%%%%%%%%%%%%%%%%%%%%%%%%%%%%%%
\section{Discrete Task Planning}\label{sec:dplan}
%%%%%%%%%%%%%%%%%%%%%%%%%%%%%%%%%%%%%%%%%%%%%%%%%%%%%%%%%%%%%%%%%%%%%%%%%%%%%%%%%%%%%%%%%%%%%%%%%%%%%%%%%%%%%%%%%%%%%%%%%%%%%%%%%%%%%%%%%%%%%%%%%%%%%%%%%%%%%%%%%%%%%%%%%%

In the discrete task planning layer, we generate a sequence of convex constraints that ensures the satisfaction of the specification $\gls*{rtlformula}$. To generate such constraints, we propose a finite discrete transition system $M_d$, which abstracts the behavior of the continuous system $M_c$ with respect to the specification. This discrete system $M_d$, the logical specification $\gls*{rtlformula}$, and any counterexamples can be encoded as Boolean formulas and leverage incremental SAT/SMT solvers to rapidly find a discrete plan, which we represent as a Kripke structure $M_p$. 

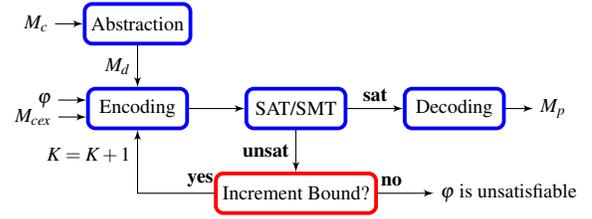
\begin{figure}[htp]
    \tikzstyle{block} = [draw, rounded corners=1mm, color=blue, text=black, line width=0.5mm, rectangle, minimum height=2em, minimum width=5em]
    \tikzstyle{block2} = [draw, rounded corners=1mm, color=red, text=black, line width=0.5mm, rectangle, minimum height=2em, minimum width=5em]
    \centering
    \begin{tikzpicture}[auto, >=latex', scale=0.75, transform shape]	
        \node[block] (abs) {Abstraction};
        \node[block,below=0.75cm of abs] (enc) {Encoding};
        \node[block,right=1cm of enc] (smt) {SAT/SMT};
        \node[block2,below=0.75cm of smt] (cond) {Increment Bound?};
        \node[block,right=1cm of smt] (dec) {Decoding};
        \draw[<-] (abs.west) -- node[left,pos=1] {$M_c$} ++(-0.5cm,0);
        \draw[<-] ([yshift= 1.5mm] enc.west) -- node[left,pos=1] {$\gls*{rtlformula}$} ++(-0.5cm,0);
        \draw[<-] ([yshift=-1.5mm] enc.west) -- node[left,pos=1] {$M_{cex}$} ++(-0.5cm,0);
        \draw[->] (cond.west) -- node[above,midway] {\textbf{yes}} ++(-0.5cm,0) -| node[left,pos=0.8] {$K = K+1$} (enc.south);
        \draw[->] (cond.east) -- node[above,midway] {\textbf{no}} ++(0.5cm,0) --node[right,pos=1] {\gls*{rtlformula} is unsatisfiable} ++(0.5cm,0);
        \draw[->] (abs.south) -- node[left,midway] {$M_d$} (enc.north);
        \draw[->] (enc) -- (smt);
        \draw[->] (smt) -- node[above,midway] {\textbf{sat}} (dec);
        \draw[->] (smt.south) -- node[left,midway] {\textbf{unsat}} (cond);
        \draw[->] (dec.east) -- node[right,pos=1] {$M_p$} ++(0.5cm,0); 
    \end{tikzpicture}
    \caption{Graphical description of Discrete Task Planning. }
    \label{fig:diag2}
\end{figure}

This process is illustrated in Figure \ref{fig:diag2}. We first generate a discrete abstraction $M_d$ of the continuous system $M_c$. We then encode the abstract system $M_d$, specification $\gls*{rtlformula}$, and counterexamples $M_{cex}$ into a Boolean formula which, is verified with an SAT/SMT solver. If this formula is satisfiable, we decode the satisfying evaluation of the variables to a Kripke structure $M_p$, which represents behaviors of $M_{d}$ that satisfies the specification. Otherwise, we increase the problem bound $K$ until we determine that no solution exists. 

%%%%%%%%%%%%%%%%%%%%%%%%%%%%%%%%%%%%%%%%%%%%%%%%%%%%%%%%%%%%%%%%%%%%%%%%%%%%%%%%%%%%%%%%%%%%%%%%%%%%%%%%%%%%%%%%%%%%%%%%%%%%%%%%%%%%%%%%%%%%%%%%%%%%%%%%%%%%%%%%%%%%%%%%%%
\subsection{Abstraction}
%%%%%%%%%%%%%%%%%%%%%%%%%%%%%%%%%%%%%%%%%%%%%%%%%%%%%%%%%%%%%%%%%%%%%%%%%%%%%%%%%%%%%%%%%%%%%%%%%%%%%%%%%%%%%%%%%%%%%%%%%%%%%%%%%%%%%%%%%%%%%%%%%%%%%%%%%%%%%%%%%%%%%%%%%%

The discrete abstraction $M_d$ is a Kripke structure $M_d = \langle S_d, Act_d, T_d, I_d, L_d, \Sigma, F_d \rangle$ over a finite set of states $S_d = \{ s_1,s_2,\dots,s_{|S_d|} \}$ and an empty set of inputs $Act_d = \emptyset$. $M_d$ is formally related to the continuous system $M_c$ through the notion of a \textit{simulation relation} \cite{baier2008principles}. 

\begin{definition}\label{def:simulation}
    A relation $R \subseteq S \times S^\prime$ is a \textit{simulation relation} from Kripke structure $M$ to $M^\prime$ (i.e., $M^\prime$ simulates $M$) if:
    \begin{enumerate}
        \item for every $s_0 \in I$ there exists $s_0^\prime \in I^\prime$ such that $(s_0,s_0^\prime) \in R$;
        \item for every $s \in F$ there exists $s' \in F_d$ such that $(s,s') \in R$;
        \item for every $(s,s^\prime) \in R$ we have $L(s) = L^\prime(s^\prime)$;
        \item for every $(s_k,s_k^\prime) \in R$ we have that:
        \item[] for every $a_k \in Act$ with $(s_k,a_k,s_{k+1}) \in T$ there exists $a_k^\prime \in Act^\prime$ with $(s_k^\prime,a_k^\prime,s_{k+1}^\prime) \in T^\prime$ satisfying $(s_{k+1},s_{k+1}^\prime) \in R$.
    \end{enumerate}     
\end{definition} 

We will construct the discrete abstraction $M_d$ such that $M_d$ simulates $M_c$. Intuitively, this means that the discrete model $M_d$ can express every behavior of $M_c$ with respect to the specification. 

To construct the discrete abstraction, note that an \gls*{rtl} formula \gls*{rtlformula} can be used to construct convex polytopic partitions $\gls*{poly}$ such that the same predicates hold for all continuous states $\gls*{statec}_k \in \gls*{poly}$. An example of such partitions is shown in Figure \ref{fig:ex1}. We use this property to construct a discrete abstraction $M_d$ which simulates $M_c$, as follows:

\begin{enumerate}
    \item Construct a finite set of polytopes $\gls*{polyseq}$ representing the state formulas $\gls*{stateformula} \in cl(\gls*{rtlformula})$, where each polytope $\gls*{poly}^{\boldsymbol{p}} \in \gls*{polyseq}$ represent a state space such that $L_c(\gls*{statec}) = \boldsymbol{p}$ for all $\gls*{statec} \in \gls*{poly}^{\boldsymbol{p}}$, and $\boldsymbol{p}$ is a set of symbols. Algorithm \ref{alg:abs} describes how to construct these polytopes. 
    \item Each polytope in $\gls*{polyseq}$  corresponds to a state $s \in S_d$. We denote the operation that recovers the polytope of a state $s \in S_d$ by $\gls*{poly}(s) = \gls*{poly}^{\boldsymbol{p}}$. The initial state $I_d = \{ s_0 \}$ is the state $s_0 \in S_d$ such that $\gls*{statecini} \in \gls*{poly}(s_0)$. Moreover, the accepting states $F_d = S_d$. The labeling function $L_d : S_d \mapsto \Sigma$ is defined such that $p \in L_d(s)$ if and only if $L_c(\gls*{statec}) = \boldsymbol{p}$ for all $\gls*{statec} \in \gls*{poly}(s)$. Observe that $L_d(s) = \boldsymbol{p}$ of $\gls*{poly}(s) = \gls*{poly}^{\boldsymbol{p}}$.   
    \item Finally, $(\gls*{poly}_k,\gls*{poly}_{k+1}) \in T_d$ if and only if $\gls*{poly}_k$ and $\gls*{poly}_{k+1}$ are adjacent. We call two polytopes adjacent if they are equal or if their intersection is a polytope of dimension $\gls*{statecnb}-1$. For example, a polytope with dimension $\gls*{statecnb}-1$ is a line if $\gls*{statecnb} = 2$ or a plane if $\gls*{statecnb} = 3$.
\end{enumerate}

\begin{algorithm}
    \caption{Partition from State Formulas}\label{alg:abs}
    \begin{algorithmic}
    \REQUIRE {\small $M_{c}$}
    \STATE {\small $\gls*{polyseq} \gets \gls*{poly}^{\emptyset} = \gls*{statecdom}$;} 
    \FOR {\small$\gls*{stateformula}_i \in cl(\gls*{rtlformula})$;}
        \FOR {\small$\gls*{poly}^{p_i} \in toPolytopes(\gls*{stateformula}_i)$;}
            \STATE {\small $\gls*{polyseq}_{p_i} \gets \gls*{poly}^{p_i}$;$\gls*{polyseq}^\prime \gets \emptyset$;$\gls*{polyseq}_{p_i}^\prime \gets \emptyset$;} 
            \FOR {\small$\gls*{poly}^{\boldsymbol{p}} \in \gls*{polyseq}$;}
                \FOR{\small$\gls*{poly}_j^{p_i} \in \gls*{polyseq}_{p_i}$;}
                    \STATE {\small $\gls*{polyseq}^\prime \gets \gls*{polyseq}^\prime \cup \gls*{poly}^{\boldsymbol{p}\cup\{p_i\}} = \gls*{poly}^{\boldsymbol{p}} \cap \gls*{poly}^{p_i}$;} 
                    \STATE {\small $\gls*{polyseq}^\prime \gets \gls*{polyseq}^\prime \cup \gls*{poly}_1^{\boldsymbol{p}} \cup \dots \cup \gls*{poly}_{N_1}^{\boldsymbol{p}} = \gls*{poly}^{\boldsymbol{p}} \setminus \gls*{poly}^{p_i}$;} 
                    \STATE {\small $\gls*{polyseq}_{p_i}^\prime \gets \gls*{polyseq}^\prime \cup \gls*{poly}_1^{p_i} \cup \dots \cup \gls*{poly}_{N_2}^{p_i} = \gls*{poly}_j^{p_i} \setminus \gls*{poly}^{\boldsymbol{p}}$;} 
                \ENDFOR
                \STATE {\small $\gls*{polyseq}_{p_i} \gets \gls*{polyseq}_{p_i}^\prime$;} 
            \ENDFOR
            \STATE {\small $\gls*{polyseq} \gets \gls*{polyseq}^\prime$;} 
        \ENDFOR
    \ENDFOR
    \STATE {\small \RETURN $\gls*{polyseq}$;} 
    \end{algorithmic}
\end{algorithm}

This procedure always generates a transition system $M_d$ which simulates $M_c$.
\begin{proposition}\label{prop:simulation}
    Given an \gls*{rtl} formula \gls*{rtlformula} and a transition system $M_c$, there exists a transition system $M_d$ and a simulation relation $R_d$ such that any run $\gls*{run}_c$ of $M_c$ which satisfies the simulation relation $R_d$ for a run $\gls*{run}_d$ that satisfies the formula \gls*{rtlformula} also satisfies the formula \gls*{rtlformula}, i.e., $\gls*{run}_c \gls*{sat} \gls*{rtlformula}$ only if $\gls*{run}_d \gls*{sat} \gls*{rtlformula}$ and $(\gls*{statec}_k,s_k) \in R_d$ for all $k \in \mathbb{N}$.
\end{proposition}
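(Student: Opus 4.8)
The plan is to exhibit an explicit candidate for $R_d$ and then reduce RTL satisfaction to the shared label sequence. I would define $R_d := \{(\boldsymbol{x},s) \in \mathcal{X} \times S_d : \boldsymbol{x} \in \mathcal{P}(s)\}$, relating each continuous state to the discrete state whose polytope contains it. This is well defined because Algorithm~\ref{alg:abs} produces a family $\boldsymbol{P}$ that partitions $\mathcal{X}$: every $\boldsymbol{x} \in \mathcal{X}$ lies in exactly one cell $\mathcal{P}(s)$, so $R_d$ is total on $\mathcal{X}$ and single-valued.

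Next I would verify the four conditions of Definition~\ref{def:simulation} for $R_d$ as a simulation from $M_c$ to $M_d$. Condition~(1) holds because $I_d = \{s_0\}$ is chosen so that $\bar{\boldsymbol{x}} \in \mathcal{P}(s_0)$, giving $(\bar{\boldsymbol{x}},s_0) \in R_d$. Condition~(2) is immediate since $F_c = \mathcal{X}$ and $F_d = S_d$, so every related pair is accepting. Condition~(3) follows from the construction of $L_d$: all states in one polytope carry a single label, so $(\boldsymbol{x},s) \in R_d$ forces $L_c(\boldsymbol{x}) = L_d(s)$. The substantive case is condition~(4): given $(\boldsymbol{x}_k,s_k) \in R_d$ and a transition $\boldsymbol{x}_k \xrightarrow{\boldsymbol{u}_k} \boldsymbol{x}_{k+1}$ of $M_c$, I take $s_{k+1}$ to be the unique state with $\boldsymbol{x}_{k+1} \in \mathcal{P}(s_{k+1})$, so that $(\boldsymbol{x}_{k+1},s_{k+1}) \in R_d$ holds by definition, and it only remains to show $(s_k,s_{k+1}) \in T_d$, i.e., that $\mathcal{P}(s_k)$ and $\mathcal{P}(s_{k+1})$ are adjacent.

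This transition-matching step is the main obstacle, and it is precisely where Assumption~\ref{a:adjacent} enters. By the assumption, the valid transition changes the satisfaction of at most one predicate, so $\mathcal{P}(s_k)$ and $\mathcal{P}(s_{k+1})$ either share the same label (they are equal, the degenerate adjacency case) or differ only in the sign of a single linear predicate $\mu$. In the latter case the two cells lie on opposite sides of the hyperplane $\{\mu = 0\}$ while agreeing on every other half-space of the partition, so their common boundary lies in $\{\mu = 0\}$; since both cells are full-dimensional, this shared boundary is a facet of dimension $n-1$, which is exactly the adjacency criterion defining $T_d$. Hence $(s_k,s_{k+1}) \in T_d$, and $R_d$ is a simulation relation. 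I would flag the one mild geometric point here: one must rule out the crossing meeting only along a lower-dimensional face, and it is the combination of Assumption~\ref{a:adjacent} with full-dimensionality of the partition cells that prevents this.

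Finally I would lift the relation from states to runs and then to formulas. Starting from any continuous run $\xi_c = \boldsymbol{x}_0 \xrightarrow{\boldsymbol{u}_0} \boldsymbol{x}_1 \to \cdots$, condition~(1) gives $(\boldsymbol{x}_0,s_0) \in R_d$ and repeated application of condition~(4) constructs a run $\xi_d = s_0 \to s_1 \to \cdots$ of $M_d$ with $(\boldsymbol{x}_k,s_k) \in R_d$ for every $k$. Condition~(3) then yields $L_c(\boldsymbol{x}_k) = L_d(s_k)$ for all $k$, so $\xi_c$ and $\xi_d$ induce identical symbol sequences. Since the RTL semantics reference a run only through the satisfaction of state formulas at time indices, a straightforward structural induction over the subformulas in $cl(\varphi)$ — base case the state formulas (equal labels), inductive cases $\wedge$, $\vee$, $\boldsymbol{U}$, $\boldsymbol{R}$ copying the semantic clauses verbatim at each index — gives $\xi_c \vDash_k \varphi \iff \xi_d \vDash_k \varphi$ for all $k$, and in particular $\xi_c \vDash \varphi \iff \xi_d \vDash \varphi$. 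The ``only if'' direction of this equivalence is the claimed statement, while the ``if'' direction simultaneously certifies soundness of the abstraction. This closing induction is routine; the real content of the proof is the adjacency argument establishing condition~(4).
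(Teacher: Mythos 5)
Your construction and overall route coincide with the paper's: define $R_d$ by polytope membership, verify the four conditions of Definition~\ref{def:simulation} one by one, and transfer RTL satisfaction through equality of labels. The paper's own proof does exactly this, merely asserting each condition holds ``by construction'' or ``under Assumption~\ref{a:adjacent}'' where you supply detail, and your closing structural induction over $cl(\varphi)$ is a legitimate expansion of the paper's one-line appeal to condition 3.

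The problem is in your condition-4 argument, at precisely the step you yourself call the main obstacle. The claim that if $\mathcal{P}(s_k)$ and $\mathcal{P}(s_{k+1})$ ``share the same label'' then ``they are equal'' is false for this abstraction: Algorithm~\ref{alg:abs} decomposes non-convex label regions into several convex cells carrying the same label. In the paper's running example, $s_1$ and $s_3$ are distinct, non-adjacent cells both labeled $a$, and there are many distinct cells with the empty label. So ``zero predicates change across the transition'' gives you only ``same label,'' and from same label you cannot conclude equality of cells; a priori such a transition might jump between two distinct, non-adjacent cells with identical labels, and your case (i) would wrongly classify it as matched in $M_d$. The repair has to work at the level of predicate valuations rather than labels: the internal boundaries produced by the set-difference step of Algorithm~\ref{alg:abs} lie on predicate hyperplanes, so every cell of the arrangement of all predicate hyperplanes is contained in a single abstraction cell. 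A transition flipping zero predicates then stays inside one predicate-arrangement cell, hence inside one abstraction cell (your degenerate case, now justified); a transition flipping exactly one predicate $\mu$ moves between two predicate-arrangement cells sharing a facet in the hyperplane $\{\mu = 0\}$, and the abstraction cells containing them are either equal or intersect in that $(n-1)$-dimensional facet, which is essentially your case (ii). With that substitution your proof goes through --- and it then says more than the paper does, since the paper never makes this geometric argument explicit, resting instead on the informal gloss that transitions only occur between adjacent labeled state spaces.
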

\begin{proof}
    We prove the existence of $M_d$ and $R_d$ by construction. Using the proposed abstraction, the initial state contains the initial state, i.e., $\gls*{statecini} \in \gls*{poly}(s_0)$. Hence, the condition 1 of Definition \ref{def:simulation} is satisfied. We also define $F_d = S_d$ such that condition 2 of Definition \ref{def:simulation} is satisfied. By construction, the labeling function $L_d : S_d \mapsto 2^{\Sigma}$ satisfies the condition 3 of Definition \ref{def:simulation}. Under Assumption \ref{a:adjacent}, the relation $T_d$ satisfies condition 4 of Definition \ref{def:simulation}. Finally, by condition 3 of Definition \ref{def:simulation}, $\gls*{run}_c \gls*{sat} \gls*{rtlformula}$ only if $\gls*{run}_d \gls*{sat} \gls*{rtlformula}$ and $(\gls*{statec}_k,s_k) \in R_d$ for all $k \in \mathbb{N}$.
\end{proof}

\begin{example}\label{ex:running} %{\color{red} \bf Too brief, need more details...}
As a motivating example, consider a double integrator in $\mathbb{R}^2$ with a sampling time of $1s$ \added{(i.e. $\ddot{x} = u$ where $x$ and $v = \dot{x}$ are state variables $\boldsymbol{x} = [x,v]^\intercal$ and $u$ is the input variable)}. The system starts at $\gls*{statecini} = [1,-5.5]^\intercal$ and the input is bounded, i.e., \added{$|u| \leq 2$}. This problem is inspired by \cite[Example 11.5]{belta2017formal}. The system must avoid a forbidden region in state space, visit one of two regions of interest, and reach a target, as illustrated in Figure \ref{fig:ex1}. 

We define $18$ atomic propositions which specify predicates representing unsafe states $a$ (blue region), a target $b$ (red region), and areas of interest $c$ (yellow regions). The specification can be written as $\gls*{rtlformula} = \gls*{always} \left( (\idstluntil{\gls*{neg} a}{}{b}) \gls*{and} (\idstluntil{\neg b}{}{c}) \right)$. The first part of this formula ($\neg a \boldsymbol{U} b$) ensures that for all times before reaching the target $b$, the unsafe state $a$ is avoided. The second part of the formula ($\neg b \boldsymbol{U} c$) specifies that region $c$ must be visited before region $b$. We choose this example to illustrate our approach because it considers an underactuated system with an unbounded \gls*{rtl} formula. To the best of our knowledge, no existing trajectory synthesis algorithm from \gls*{rtl} specifications can solve this problem with provable soundness and completeness. 

\begin{figure*}[htp]
    \centering
    \subfloat[Workspace]{\label{fig:ex1}
        \includegraphics[width=0.4\textwidth]{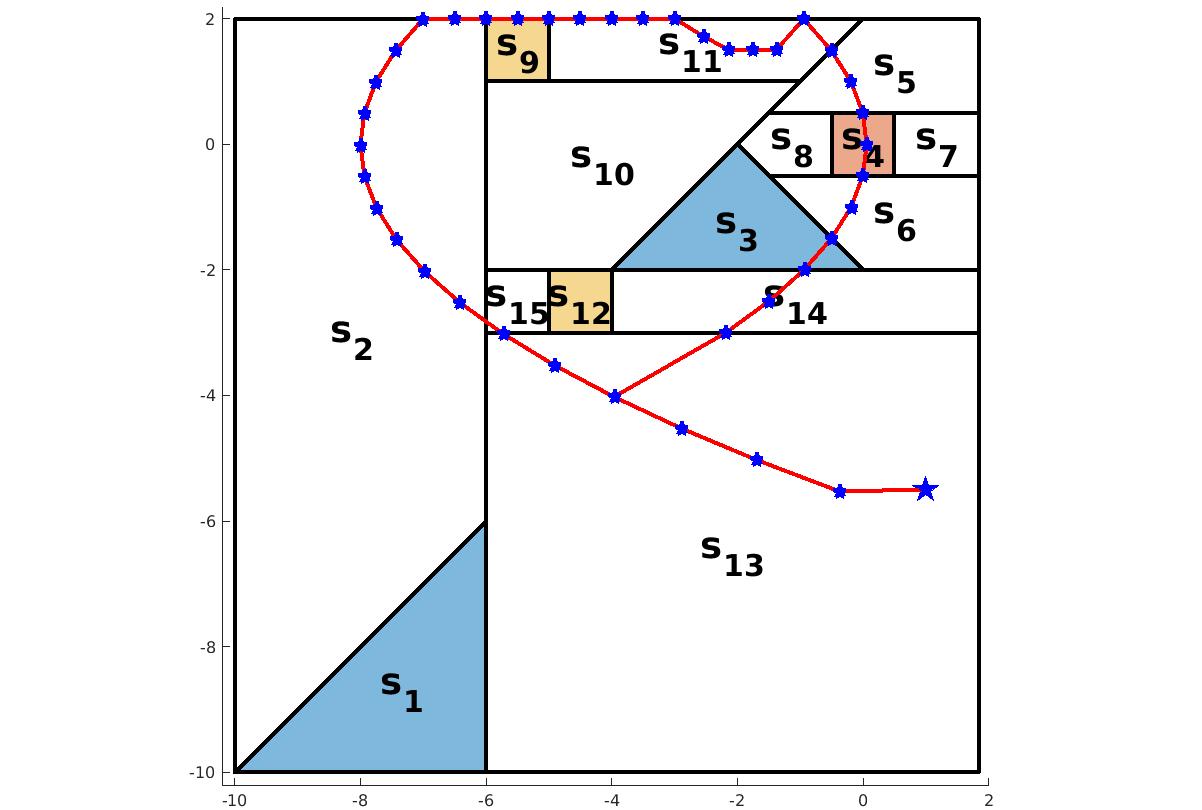}}
        \subfloat[Kripke Structure]{\label{fig:ex1_abs_kripke}
    \scalebox{0.9}{
    \begin{tikzpicture}[->,>=stealth',shorten >=1pt,auto,node distance=0.5cm,
                    semithick]
  \tikzstyle{every state}=[inner sep=0pt, minimum size=15pt]

  \node[state,initial,accepting] (s13) {$s_{13}$};
  \draw[->] (s13) to[out=80,in=100,loop] (s13); 
  
  \node[state,accepting,below right=1cm and 0.25cm of s13,label={below:a}] (s1) {$s_1$};
  \draw[->] (s1) to[out=80,in=100,loop] (s1); 
  \draw[->] (s1) to[out=-170,in=-100] (s13);
  \draw[->] (s13) to[out=-80,in=170] (s1); 
  
  \node[state,accepting,above right=0.5cm and 0.25cm of s1] (s2) {$s_2$};
  \draw[->] (s2) to[out=80,in=100,loop] (s2); 
  \draw[->] (s2) to[out=170,in=-60] (s13);
  \draw[->] (s13) to[out=-80,in=-170] (s2); 
  \draw[->] (s2) to[out=-150,in=60] (s1);
  \draw[->] (s1) to[out=80,in=-170] (s2); 

  \node[state,accepting,below right=1cm and 0.25cm of s2,label={below:b}] (s9) {$s_9$};
  \draw[->] (s9) to[out=80,in=100,loop] (s9); 
  \draw[->] (s9) to[out=-170,in=-100] (s2);
  \draw[->] (s2) to[out=-80,in=170] (s9); 
  
  \node[state,accepting,above right=1cm and 0.25cm of s9] (s10) {$s_{10}$};
  \draw[->] (s10) to[out=80,in=100,loop] (s10); 
  \draw[->] (s10) to[out=-150,in=60] (s9);
  \draw[->] (s9) to[out=80,in=-170] (s10); 
  \draw[->] (s10) to[out=170,in=10] (s2);
  \draw[->] (s2) to[out=-10,in=-170] (s10); 
  
  \node[state,accepting,right=0.25cm of s9] (s11) {$s_{11}$};
  \draw[->] (s11) to[out=80,in=100,loop] (s11); 
  \draw[->] (s11) to[out=170,in=10] (s9);
  \draw[->] (s9) to[out=-10,in=-170] (s11); 
  \draw[->] (s11) to[out=70,in=-80] (s10);
  \draw[->] (s10) to[out=-100,in=110] (s11); 

  \node[state,accepting,right=0.25cm of s10,label={below:a}] (s3) {$s_3$};
  \draw[->] (s3) to[out=80,in=100,loop] (s3); 
  \draw[->] (s3) to[out=170,in=10] (s10);
  \draw[->] (s10) to[out=-10,in=-170] (s3); 

  \node[state,accepting,right=0.25cm of s11] (s5) {$s_5$};
  \draw[->] (s5) to[out=80,in=100,loop] (s5); 
  \draw[->] (s5) to[out=170,in=10] (s11);
  \draw[->] (s11) to[out=-10,in=-170] (s5); 
  \draw[->] (s5) to[out=130,in=-70] (s10);
  \draw[->] (s10) to[out=-60,in=110] (s5); 

  \node[state,accepting,below right=0.25cm and 0.5cm of s5] (s7) {$s_7$};
  \draw[->] (s7) to[out=80,in=100,loop] (s7); 
  \draw[->] (s7) to[out=180,in=-70] (s5);
  \draw[->] (s5) to[out=-50,in=160] (s7); 

  \node[state,accepting,above right=1.5cm and 0.25cm of s3,label={below:b}] (s12) {$s_{12}$};
  \draw[->] (s12) to[out=80,in=100,loop] (s12); 
  \draw[->] (s12) to[out=150,in=70] (s13);
  \draw[->] (s13) to[out=60,in=160] (s12); 
  \draw[->] (s12) to[out=-150,in=60] (s10);
  \draw[->] (s10) to[out=70,in=-160] (s12); 
  \draw[->] (s12) to[out=-130,in=60] (s3);
  \draw[->] (s3) to[out=70,in=-140] (s12); 
  
  \node[state,accepting,above right=0.5cm and 0.25cm of s2] (s15) {$s_{15}$};
  \draw[->] (s15) to[out=80,in=100,loop] (s15); 
  \draw[->] (s15) to[out=150,in=30] (s13);
  \draw[->] (s13) to[out=20,in=160] (s15); 
  \draw[->] (s15) to[out=-160,in=70] (s2);
  \draw[->] (s2) to[out=50,in=-140] (s15); 
  \draw[->] (s15) to[out=-10,in=110] (s10);
  \draw[->] (s10) to[out=130,in=-30] (s15); 
  \draw[->] (s15) to[out=40,in=180] (s12);
  \draw[->] (s12) to[out=170,in=50] (s15); 
  
  \node[state,accepting,below right=0.5cm and 0.25cm of s12] (s14) {$s_{14}$};
  \draw[->] (s14) to[out=80,in=100,loop] (s14); 
  \draw[->] (s14) to[out=150,in=50] (s13);
  \draw[->] (s13) to[out=40,in=160] (s14); 
  \draw[->] (s14) to[out=-160,in=40] (s3);
  \draw[->] (s3) to[out=50,in=-170] (s14); 
  \draw[->] (s14) to[out=110,in=-40] (s12);
  \draw[->] (s12) to[out=-50,in=120] (s14); 

  \node[state,accepting,below right=0.5cm and 0.25cm of s14] (s6) {$s_6$};
  \draw[->] (s6) to[out=80,in=100,loop] (s6); 
  \draw[->] (s6) to[out=120,in=-50] (s14); 
  \draw[->] (s14) to[out=-40,in=110] (s6);
  \draw[->] (s6) to[out=170,in=10] (s3); 
  \draw[->] (s3) to[out=-10,in=-170] (s6);
  \draw[->] (s6) to[out=-140,in=60] (s7); 
  \draw[->] (s7) to[out=50,in=-130] (s6);

  \node[state,accepting,below=0.5cm of s6,label={below:c}] (s4) {$s_4$};
  \draw[->] (s4) to[out=80,in=100,loop] (s4); 
  \draw[->] (s4) to[out=70,in=-70] (s6); 
  \draw[->] (s6) to[out=-120,in=120] (s4);
  \draw[->] (s4) to[out=-150,in=30] (s7); 
  \draw[->] (s7) to[out=10,in=-130] (s4);

  \node[state,accepting,above right=0.25cm and 0.25cm of s5] (s8) {$s_8$};
  \draw[->] (s8) to[out=80,in=100,loop] (s8); 
  \draw[->] (s8) to[out=170,in=-40] (s10); 
  \draw[->] (s10) to[out=-50,in=-170] (s8);
  \draw[->] (s8) to[out=140,in=-40] (s3); 
  \draw[->] (s3) to[out=-50,in=150] (s8);
  \draw[->] (s8) to[out=50,in=-150] (s6); 
  \draw[->] (s6) to[out=-160,in=60] (s8);
  \draw[->] (s8) to[out=10,in=170] (s4); 
  \draw[->] (s4) to[out=-170,in=-10] (s8);

\end{tikzpicture}}
}
    \caption{Graphical representation of the illustrative example. (a) The workspace after the abstraction and a solution for $\gls*{statecini} = [1,-5.5]^\intercal$. 
    (b) A graphical representation of the Kripke structure $M_d$ that abstracts $M_c$.}
    \label{fig:ex1_problem}
\end{figure*}

Algorithm \ref{alg:abs} starts with the workspace and the specification, and generates the discrete abstraction $M_d$. The associated polytopic partitions are shown in Figure \ref{fig:ex1} while the Kripke structure $M_d$ is illustrated in Figure \ref{fig:ex1_abs_kripke}.
\end{example}

%%%%%%%%%%%%%%%%%%%%%%%%%%%%%%%%%%%%%%%%%%%%%%%%%%%%%%%%%%%%%%%%%%%%%%%%%%%%%%%%%%%%%%%%%%%%%%%%%%%%%%%%%%%%%%%%%%%%%%%%%%%%%%%%%%%%%%%%%%%%%%%%%%%%%%%%%%%%%%%%%%%%%%%%%%
\subsection{RTL equivalent Kripke Structure}
%%%%%%%%%%%%%%%%%%%%%%%%%%%%%%%%%%%%%%%%%%%%%%%%%%%%%%%%%%%%%%%%%%%%%%%%%%%%%%%%%%%%%%%%%%%%%%%%%%%%%%%%%%%%%%%%%%%%%%%%%%%%%%%%%%%%%%%%%%%%%%%%%%%%%%%%%%%%%%%%%%%%%%%%%%

Instead of passing a single satisfying run to the continuous planning, we construct a set of \gls*{rtl} language equivalent runs in the form of a Kripke structure. In the continuous planning, this structure essentially defines a sequence of convex constraints, which, if satisfied, guarantees the  specification $\gls*{rtlformula}$.

\begin{definition}
An \gls*{rtl} equivalent Kripke structure $M^\prime$ from a run $\gls*{run}_d$ is a Kriple structure where every run of this structure satisfies the same \gls*{rtl} formulas that the run $\gls*{run}_d$ satisfies. This means: $\gls*{run}_d \gls*{sat} \gls*{rtlformula}$ if and only if $\gls*{run}^\prime \gls*{sat} \gls*{rtlformula}$ for all runs $\gls*{run}^\prime$ of $M^\prime$.
\end{definition}

We illustrate the process that constructs an \gls*{rtl} equivalent Kripke structure $M^\prime$ from a run $\gls*{run}_d$  in Algorithm \ref{alg:dec}. 
If the loop exists, $M^\prime$ is a fair Kripke structure, meaning that it generates infinite runs with a loop.
\added{Intuitively, a dynamical system may take more time to pass through the polytopic constraints of a discrete plan. Thus, we construct a Kripke structure that represent these longer runs but preserving the \gls*{rtl} equivalence.}%\added{Intuitively, this algorithm constructs a Kripke structure of all runs that repeat or introduce a previous state before a state of the run, denoted as repeat and backward operations.}
\begin{algorithm}
    \caption{Construct an \gls*{rtl} equivalent Kripke structure $M^\prime$.}\label{alg:dec}
    \begin{algorithmic}[1]
    \REQUIRE {\small $\gls*{run}_d = s_0\dots s_{L-1}(s_L\dots s_K)^\omega, M_d$}
    \ENSURE{\small $M^\prime = \langle S^\prime, T^\prime, I^\prime, L^\prime, \Sigma, F^\prime \rangle$;} 
    \STATE {\small $i \gets 0$; $S^\prime \gets s_i^\prime$; $T^\prime \gets \{(s_i^\prime,s_i^\prime)\}$; $I_p \gets s_i^\prime$; $L^\prime(s_i^\prime) \gets s_0$;}
    \FOR {\small$k = 1$ \TO $K-1$}
        \STATE {\small $i=i+1$; $S^\prime \gets s_i^\prime$; $L^\prime(s_i^\prime) \gets s_k$;}
        \STATE {\small $T^\prime \gets  \{(s_i^\prime,s_{i-1}^\prime),(s_{i-1}^\prime,s_i^\prime),(s_i^\prime,s_i^\prime)\}$;}
        \IF {\small $k = L$}
        \STATE {\footnotesize $i=i+1$; $S^\prime \gets s_i^\prime$; $L^\prime(s_i^\prime) \gets s_k$; $T^\prime \gets  \{(s_{i-2}^\prime,s_i^\prime),(s_{i-1}^\prime,s_i^\prime),(s_i^\prime,s_i^\prime)\}$;}
        \ENDIF
    \ENDFOR
    \STATE {\small $i=i+1$; $S^\prime \gets s_i^\prime$; $L^\prime(s_i^\prime) \gets s_K$;$T^\prime \gets  \{(s_{i-1}^\prime,s_i^\prime)\}$;}
    \IF {\small $L < K$}
        \STATE {\small $i=i+1$; $S^\prime \gets s_i^\prime$; $L^\prime(s_i^\prime) \gets s_k$; }
        \STATE {\small $T^\prime \gets  \{(s_i^\prime,s_{i-1}^\prime),(s_{i-1}^\prime,s_i^\prime),(s_{i}^\prime,s_{i+1}^\prime),(s_{i+1}^\prime,s_{i+1}^\prime),(s_{L+1}^\prime,s_{i+1}^\prime)$}
        \STATE {\small \hspace{5.9cm} $,(s_{i+1}^\prime,s_{L+1}^\prime)\}$;}
    \ENDIF
    \STATE {\small $F^\prime \gets s_i^\prime$;}
    \end{algorithmic}
\end{algorithm}

\begin{example}\label{ex:dplan}
    Consider the system of Example \ref{ex:running} again. A satisfying run is $\gls*{run}_d = s_{13}(s_2s_9s_{11}s_5s_4s_6s_{14}s_{13})^\omega$.
    Fig.~\ref{fig:ex1-plan} shows a graphical representation of $M^\prime$ for this run. First, note that the labelling function $L^\prime : S^\prime \rightarrow S_d$ maps each state to a state from the abstraction $M_d$. At each step of Algorithm \ref{alg:dec}, we generate a new state $s_i^\prime$ that has a self-loop and back and forward transitions. When there is a loop, we introduce proxy states ($s_2^\prime$ and $s_{10}^\prime$ in Fig.~\ref{fig:ex1-plan}), to ensure that runs of $M^\prime$ follow the same loop as the original run. This forces the system to pass through states $s_3^\prime$, $s_4^\prime$, $\dots$, $s_8^\prime$ in order to visit the accepting state $s_9^\prime$ infinitely often.
\end{example}

\begin{figure}
    \centering
    \subfloat[line 1]{
    \scalebox{0.85}{
    \begin{tikzpicture}[->,>=stealth',shorten >=1pt,auto,node distance=0.75cm,
                    semithick]
  \tikzstyle{every state}=[inner sep=0pt, minimum size=0pt]

  \node[state,label={below:$s_{13}$},initial]      (s0)                    {$s_0^\prime$};

  \draw[->] (s0) to[out=45,in=90,loop] (s0);  
\end{tikzpicture}}}     
    \subfloat[$k = 1$, lines 3 and 4]{
    \scalebox{0.85}{
    \begin{tikzpicture}[->,>=stealth',shorten >=1pt,auto,node distance=0.75cm,
                    semithick]
  \tikzstyle{every state}=[inner sep=0pt, minimum size=0pt]

  \node[state,label={below:$s_{13}$},initial]      (s0)                    {$s_0^\prime$};
  \node[state,label={below:$s_2$},right of= s0]    (s1)                    {$s_1^\prime$};

  \draw[->] (s0) to[out=45,in=90,loop] (s0);  
  
  \draw[->] (s0) to[out=45,in=135]     (s1);
  \draw[->] (s1) to[out=-135,in=-45]   (s0);
  \draw[->] (s1) to[out=45,in=90,loop] (s1);  
\end{tikzpicture}}}  
    \subfloat[$k = 1$, lines 6]{
    \scalebox{0.85}{
    \begin{tikzpicture}[->,>=stealth',shorten >=1pt,auto,node distance=0.75cm,
                    semithick]
  \tikzstyle{every state}=[inner sep=0pt, minimum size=0pt]

  \node[state,label={below:$s_{13}$},initial]      (s0)                    {$s_0^\prime$};
  \node[state,label={below:$s_2$},right of= s0]    (s1)                    {$s_1^\prime$};
  \node[state,label={below:$s_2$},right of= s1]    (s2)                    {$s_2^\prime$};

  \draw[->] (s0) to[out=45,in=90,loop] (s0);  
  
  \draw[->] (s0) to[out=45,in=135]     (s1);
  \draw[->] (s1) to[out=-135,in=-45]   (s0);
  \draw[->] (s1) to[out=45,in=90,loop] (s1);  
  
  \draw[->] (s0) to[out=45,in=90,looseness=2]     (s2);
  \draw[->] (s1) --     (s2);
  \draw[->] (s2) to[out=45,in=90,loop] (s2);  
\end{tikzpicture}}}  

    \subfloat[$k=7$, lines 3 and 4]{
    \scalebox{0.85}{
    \begin{tikzpicture}[->,>=stealth',shorten >=1pt,auto,node distance=0.75cm,
                    semithick]
  \tikzstyle{every state}=[inner sep=0pt, minimum size=0pt]

  \node[state,label={below:$s_{13}$},initial]      (s0)                    {$s_0^\prime$};
  \node[state,label={below:$s_2$},right of= s0]    (s1)                    {$s_1^\prime$};
  \node[state,label={below:$s_2$},right of= s1]    (s2)                    {$s_2^\prime$};
  \node[state,label={below:$s_9$},right of= s2]    (s3)                    {$s_3^\prime$};
  \node[state,label={below:$s_{11}$},right of= s3] (s4)                    {$s_4^\prime$};
  \node[state,label={below:$s_5$},right of= s4]    (s5)                    {$s_5^\prime$};
  \node[state,label={below:$s_4$},right of= s5]    (s6)                    {$s_6^\prime$};
  \node[state,label={below:$s_6$},right of= s6]    (s7)                    {$s_7^\prime$};
  \node[state,label={below:$s_{14}$},right of= s7] (s8)                    {$s_8^\prime$};

  \draw[->] (s0) to[out=45,in=90,loop] (s0);  
  \draw[->] (s0) to[out=45,in=135]     (s1);
  \draw[->] (s0) to[out=45,in=90,looseness=2]     (s2);
  \draw[->] (s1) to[out=-135,in=-45]   (s0);
  
  \draw[->] (s1) to[out=45,in=90,loop] (s1);  
  \draw[->] (s1) --     (s2);

  \draw[->] (s2) to[out=45,in=90,loop] (s2);  
  \draw[->] (s2) to[out=45,in=135]     (s3);
  \draw[->] (s3) to[out=-135,in=-45]   (s2);
  \draw[->] (s3) to[out=45,in=90,loop] (s3);  
  \draw[->] (s3) to[out=45,in=135]     (s4);
  \draw[->] (s4) to[out=-135,in=-45]   (s3);
  \draw[->] (s4) to[out=45,in=90,loop] (s4);  
  \draw[->] (s4) to[out=45,in=135]     (s5);
  \draw[->] (s5) to[out=-135,in=-45]   (s4);
  \draw[->] (s5) to[out=45,in=90,loop] (s5);  
  \draw[->] (s5) to[out=45,in=135]     (s6);
  \draw[->] (s6) to[out=-135,in=-45]   (s5);
  \draw[->] (s6) to[out=45,in=90,loop] (s6);  
  \draw[->] (s6) to[out=45,in=135]     (s7);
  \draw[->] (s7) to[out=-135,in=-45]   (s6);
  \draw[->] (s7) to[out=45,in=90,loop] (s7);  
  \draw[->] (s7) to[out=45,in=135]     (s8);
  \draw[->] (s8) to[out=-135,in=-45]   (s7);
  \draw[->] (s8) to[out=45,in=90,loop] (s8);

\end{tikzpicture}}}

    \subfloat[lines 7-12]{
    \scalebox{0.85}{
    \begin{tikzpicture}[->,>=stealth',shorten >=1pt,auto,node distance=0.75cm,
                    semithick]
  \tikzstyle{every state}=[inner sep=0pt, minimum size=0pt]

  \node[state,label={below:$s_{13}$},initial]      (s0)                    {$s_0^\prime$};
  \node[state,label={below:$s_2$},right of= s0]    (s1)                    {$s_1^\prime$};
  \node[state,label={below:$s_2$},right of= s1]    (s2)                    {$s_2^\prime$};
  \node[state,label={below:$s_9$},right of= s2]    (s3)                    {$s_3^\prime$};
  \node[state,label={below:$s_{11}$},right of= s3] (s4)                    {$s_4^\prime$};
  \node[state,label={below:$s_5$},right of= s4]    (s5)                    {$s_5^\prime$};
  \node[state,label={below:$s_4$},right of= s5]    (s6)                    {$s_6^\prime$};
  \node[state,label={below:$s_6$},right of= s6]    (s7)                    {$s_7^\prime$};
  \node[state,label={below:$s_{14}$},right of= s7] (s8)                    {$s_8^\prime$};
  \node[state,label={below:$s_{13}$},right of= s8,accepting] (s9)                    {$s_9^\prime$};
  \node[state,label={below:$s_{13}$},right of= s9] (s10)                    {$\scriptstyle s_{10}^\prime$};

  \draw[->] (s0) to[out=45,in=90,loop] (s0);  
  \draw[->] (s0) to[out=45,in=135]     (s1);
  \draw[->] (s0) to[out=45,in=90,looseness=2]     (s2);
  \draw[->] (s1) to[out=-135,in=-45]   (s0);
  
  \draw[->] (s1) to[out=45,in=90,loop] (s1);  
  \draw[->] (s1) --     (s2);

  \draw[->] (s2) to[out=45,in=90,loop] (s2);  
  \draw[->] (s2) to[out=45,in=135]     (s3);
  \draw[->] (s3) to[out=-135,in=-45]   (s2);
  \draw[->] (s3) to[out=45,in=90,loop] (s3);  
  \draw[->] (s3) to[out=45,in=135]     (s4);
  \draw[->] (s4) to[out=-135,in=-45]   (s3);
  \draw[->] (s4) to[out=45,in=90,loop] (s4);  
  \draw[->] (s4) to[out=45,in=135]     (s5);
  \draw[->] (s5) to[out=-135,in=-45]   (s4);
  \draw[->] (s5) to[out=45,in=90,loop] (s5);  
  \draw[->] (s5) to[out=45,in=135]     (s6);
  \draw[->] (s6) to[out=-135,in=-45]   (s5);
  \draw[->] (s6) to[out=45,in=90,loop] (s6);  
  \draw[->] (s6) to[out=45,in=135]     (s7);
  \draw[->] (s7) to[out=-135,in=-45]   (s6);
  \draw[->] (s7) to[out=45,in=90,loop] (s7);  
  \draw[->] (s7) to[out=45,in=135]     (s8);
  \draw[->] (s8) to[out=-135,in=-45]   (s7);
  \draw[->] (s8) to[out=45,in=90,loop] (s8);  
  \draw[->] (s8) to[out=45,in=135]     (s9);
  \draw[->] (s9) to[out=-135,in=-45]   (s8);

  \draw[->] (s9) to[out=90,in=90,looseness=0.5]   (s2);
  \draw[->] (s9) to[out=45,in=135]     (s10);
  \draw[->] (s10) to[out=45,in=90,loop] (s10);  
  \draw[->] (s10) to[out=-135,in=-45,looseness=0.75]     (s2);
  \draw[->] (s2) to[out=-60,in=-120,looseness=0.75]     (s10);

\end{tikzpicture}}}
    \caption{Graphical representation of the construction of an \gls*{rtl} equivalent Kripke structure $M^\prime$ from the run $\gls*{run}_d = s_{13}(s_2s_9s_{11}s_5s_4s_6s_{14}s_{13})^\omega$.}
    \label{fig:ex1-plan}
\end{figure}
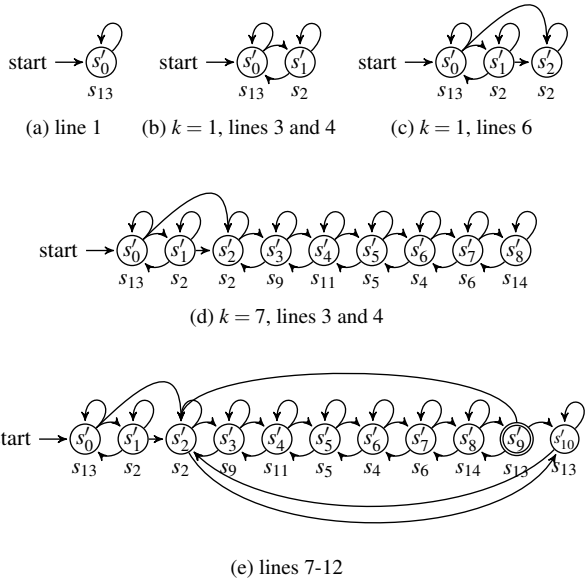

We can prove that Algorithm \ref{alg:dec} is sound and complete:

\begin{proposition}\label{prop:dec}
    Algorithm \ref{alg:dec} constructs a Kripke structure $M^\prime$ if and only if there exists an \gls*{rtl} Kripke structure $M^\prime$ from the discrete abstraction run $\gls*{run}_d$.
\end{proposition}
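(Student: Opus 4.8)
The plan is to prove the equivalence by characterizing the language of the structure emitted by Algorithm \ref{alg:dec} and then appealing to the stutter-invariance of \gls*{rtl}. Concretely, I would show that the accepting (fair) runs of the constructed $M^\prime$ are exactly the finite-stutter variants of the input lasso run $\gls*{run}_d = s_0\dots s_{L-1}(s_L\dots s_K)^\omega$, and that any two stutter-equivalent runs satisfy the same \gls*{rtl} formulas. These two facts together yield the defining property $\gls*{run}_d \gls*{sat} \gls*{rtlformula} \Leftrightarrow \gls*{run}^\prime \gls*{sat} \gls*{rtlformula}$ for every run $\gls*{run}^\prime$ of $M^\prime$, which is the forward (soundness) direction; the backward (completeness) direction then follows because the construction always terminates on a well-formed lasso input.

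For the forward direction I would first read off the language of $M^\prime$ directly from the algorithm. The labeling $L^\prime$ assigns to the chain $s_0^\prime, s_1^\prime, \dots$ exactly the sequence of abstraction states occurring in $\gls*{run}_d$ (lines 1--3, 8, 10), while the self-loops together with the forward/backward edges added at line 4 let any run linger arbitrarily long in a state and move only between labels of consecutive indices. Hence the path of any run of $M^\prime$ agrees with the path of $\gls*{run}_d$ up to finite repetition of consecutive blocks. The crucial point is that the acceptance condition (a fair run must visit $F^\prime = \{s_i^\prime\}$ infinitely often) rules out runs that stutter forever at a single state and therefore forces genuine progression to the accepting state; the proxy states introduced at $k=L$ (line 6) and in the loop-closing block (lines 9--13) then force any such infinite run to traverse the whole loop $s_L\dots s_K$ in its correct cyclic order, so that no ``shortcut'' run exists and the $\omega$-behaviour of $M^\prime$ matches the periodic tail $(s_L\dots s_K)^\omega$ of $\gls*{run}_d$. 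I would prove this by an induction over the loop index $k$, exhibiting for each accepting run of $M^\prime$ the corresponding stuttering of $\gls*{run}_d$.

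Next I would invoke stutter-invariance of \gls*{rtl}, proved by structural induction over $cl(\gls*{rtlformula})$. Because \gls*{rtl} has no next operator and its until/release semantics quantify over time indices $t_{k^\prime}\ge t_k$ with no explicit time bound, the truth of $\gls*{rtlformula}$ depends only on the order in which predicates change along a run, not on how many consecutive steps each predicate persists; the base case (state formulas) is immediate since stuttering preserves labels, and the temporal cases follow from reindexing the witnessing time points. Combining this with the run characterization gives the RTL-equivalence property, establishing the forward direction. For the backward direction, the algorithm is deterministic and performs $O(K)$ operations on any valid lasso $\gls*{run}_d$, so it always terminates and outputs a structure $M^\prime$ which the preceding argument shows is RTL equivalent; since an RTL equivalent structure always exists (e.g. $\gls*{run}_d$ with a self-loop on its final state is a trivial witness), the two statements coincide.

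The main obstacle I anticipate is the careful bookkeeping needed to show that the proxy-state gadget of lines 9--13 admits \emph{no} spurious accepting runs: one must verify that the forward/backward and loop-closing transitions neither let a fair run exit the intended loop prematurely nor reach $F^\prime$ without completing a full cycle, and that every intended stutter variant is in fact realizable. The stutter-invariance lemma itself is routine but must be stated explicitly, as it is the logical engine that makes the finite-repetition freedom of $M^\prime$ harmless; the delicate part is coupling it precisely with the acceptance condition so that infinitely-stuttering (non-progressing) runs are correctly excluded.
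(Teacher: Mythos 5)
Your overall architecture---first characterize the runs of $M^\prime$, then prove a logical-invariance lemma, then combine---parallels the paper's proof, which introduces two run constructors, $repeat\_at$ and $backward\_at$, and argues by induction starting from the shortest run (whose path equals that of $\boldsymbol{\xi}_d$). However, there is a genuine gap at your first step: the accepting runs of $M^\prime$ are \emph{not} exactly the finite-stutter variants of $\boldsymbol{\xi}_d$. Line 4 of Algorithm \ref{alg:dec} inserts backward transitions $(s_i^\prime,s_{i-1}^\prime)$, so $M^\prime$ admits runs with back-and-forth excursions, e.g.\ with label sequence $s_0 s_1 s_0 s_1 s_2\cdots$; removing consecutive repetitions from such a run does not recover $s_0 s_1 s_2\cdots$, so it is not stutter-equivalent to $\boldsymbol{\xi}_d$. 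You half-acknowledge these runs when you describe paths agreeing ``up to finite repetition of consecutive blocks,'' but block repetition is strictly weaker than stutter equivalence, and your subsequent appeal to stutter-invariance silently conflates the two. As a result, the logical engine of your proof covers only the self-loop (repeat) transitions and says nothing about the backward ones, so the forward direction cannot be completed as written.

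This gap is not cosmetic, because preservation under back-and-forth excursions is \emph{not} a generic consequence of the absence of a next operator. For instance, $\square(\neg a \vee \square a)$ is a legitimate next-free formula within RTL syntax (negation sits on the predicate, and $\square\varphi$ abbreviates $\perp \boldsymbol{R}\, \varphi$), and it is stutter-invariant; it holds on $s_0 s_1 s_2^\omega$ when $a$ is true exactly on $s_1,s_2$, yet it fails on the backward variant $s_0 s_1 s_0 s_1 s_2^\omega$, since position $1$ satisfies $a$ but is followed by a $\neg a$ state. This is precisely why the paper devotes a separate, dedicated argument to the backward case: it shows that inserting a single pair $(s_{i-1}^\prime)^{1}(s_i^\prime)^{1}$ after a witnessing index merely reindexes the until/release witnesses of the parent run, and it runs the induction jointly over the sequence of constructor applications and the formula structure. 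To repair your proposal you would need to state and prove this backward-insertion preservation lemma---restricted to the specific adjacent-state excursions the algorithm creates, and exploiting RTL's restricted (positive, next-free) syntax together with the witness structure of the parent run---in addition to your stutter-invariance lemma; stuttering alone cannot carry the argument.
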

\begin{proof}
We will prove by structural induction. We first define the path constructors that generate paths of $M^\prime$ recursively: $repeat\_at(i,\gls*{run}^{\prime}_{parent})$ and $backward\_at(i,\gls*{run}^{\prime}_{parent})$. Given a run $\gls*{run}^{\prime}_{parent} = (s_0^\prime)^{\ell_0^\prime}\dots(s_{K^\prime}^\prime)^{\ell_{K^\prime}^\prime}$ of $M^\prime$ with length $\sum_{i=0}^{K^\prime} \ell_i^\prime$, these operators allow us to construct another run of $\gls*{run}^{\prime}_{child}= (s_0^\prime)^{\ell_0^{\prime\prime}}\dots(s_{K^{\prime\prime}}^\prime)^{\ell_{K^{\prime\prime}}^{\prime\prime}}$ of $M^\prime$ with length $\sum_{i=0}^{K^{\prime\prime}} \ell_i^{\prime\prime}$ which is longer, i.e., $\sum_{i=0}^{K^{\prime\prime}} \ell_i^{\prime\prime} > \sum_{i=0}^{K^\prime} \ell_i^\prime$, as follows:
\begin{enumerate}
    \item $repeat\_at(i,\gls*{run}^{\prime}_{parent}) = \gls*{run}^{\prime}_{child} \in (S^\prime)^* : \gls*{run}^{\prime}_{child} = (s_0^\prime)^{\ell_0^\prime}\dots(s_i^\prime)^{\ell_i^\prime+1}$ $\dots (s_{K^\prime}^\prime)^{\ell_{K^\prime}^\prime}$;
    \item $backward\_at(i,\gls*{run}^{\prime}_{parent}) = \gls*{run}^{\prime}_{child} \in (S^\prime)^* : \gls*{run}^{\prime}_{child} = (s_0^\prime)^{\ell_0^\prime}\dots(s_i^\prime)^{\ell_i^\prime}$ $(s_{i-1}^\prime)^{1}(s_i^\prime)^{1}\dots (s_{K^\prime}^\prime)^{\ell_{K^\prime}^\prime}$ if $(s_i^\prime,s_{i-1}^\prime) \in T^\prime$
\end{enumerate}

These operators generate all possible runs of $M^\prime$ because they represent the transitions in $T^\prime$ generated by Algorithm \ref{alg:dec}. In particular, the operator $repeat\_at(i,\gls*{run}^{\prime}_{parent})$ represents the transitions $(s_i^\prime,s_i^\prime) \in T^\prime$, and the operator $backward\_at(i,^\prime\gls*{run}^{\prime}_{parent})$ the transitions $(s_i^\prime,s_j^\prime) \in T^\prime$ when $i \neq j$. In other words, the combination of these operators over the shortest run of $M^\prime$ generates all possible runs of this structure. 

Now, we can start the structural induction proof. First, note that the shortest run $\gls*{run}^{\prime^\star}$ of $M^\prime$ has path equal to the path of the satisfying $\gls*{run}_d$ of $M_d$. As a result, $\gls*{run}^{\prime^\star} \vDash \gls*{rtlformula}$.

Next, we assume that the run $\gls*{run}^{\prime}_{parent}$ satisfies the specification, (i.e., $\gls*{run}^\prime_{parent} \vDash \gls*{rtlformula}$). Any run generated by the operators $repeat\_at(i,\gls*{run}^{\prime}_{parent})$ \added{and $backward\_at(i,^\prime\gls*{run}^{\prime}_{parent})$} satisfies the specification because \added{the \gls*{rtl} semantics permits repetitions}. \added{For example, consider that  $\gls*{run}^{\prime}_{parent} \vDash_k \gls*{rtlformula}_1 \gls*{until} \gls*{rtlformula}_2$. Thus, there exists an instant $k^\prime > k$ such that $\gls*{run}^{\prime}_{parent} \vDash_{k^\prime} \gls*{rtlformula}_2$ and $\gls*{run}^{\prime}_{parent} \vDash_{k^\prime - 1} \gls*{rtlformula}_1$. As a result, if we apply the backward operator at instant $k^\prime$, this means that that this formula is satisfied at instants $k \leq k^{\prime\prime} \leq k^\prime + 2$ because $\gls*{run}^{\prime}_{parent} \vDash_{k^\prime + 1} \gls*{rtlformula}_1$ and $\gls*{run}^{\prime}_{parent} \vDash_{k^\prime + 2} \gls*{rtlformula}_2$.}  This holds for the release operator as well. Therefore, the proposition holds by structural induction.
\end{proof}

\begin{remark}
\added{Intuitively, this is analogous to oscillation behaviors exhibited by continuous dynamical systems. The discrete plan indicates regions that the continuous trajectory should evolve through. Sometimes we may need to revisit a region to drive the system trajectory to a goal region, which requires the backward operation. }
\end{remark}

%%%%%%%%%%%%%%%%%%%%%%%%%%%%%%%%%%%%%%%%%%%%%%%%%%%%%%%%%%%%%%%%%%%%%%%%%%%%%%%%%%%%%%%%%%%%%%%%%%%%%%%%%%%%%%%%%%%%%%%%%%%%%%%%%%%%%%%%%%%%%%%%%%%%%%%%%%%%%%%%%%%%%%%%%%
\subsection{Counterexamples}\label{sec:counterexamples}
%%%%%%%%%%%%%%%%%%%%%%%%%%%%%%%%%%%%%%%%%%%%%%%%%%%%%%%%%%%%%%%%%%%%%%%%%%%%%%%%%%%%%%%%%%%%%%%%%%%%%%%%%%%%%%%%%%%%%%%%%%%%%%%%%%%%%%%%%%%%%%%%%%%%%%%%%%%%%%%%%%%%%%%%%%

\added{The discrete abstraction simulates the system; thus, this abstraction has runs that do not render valid runs of the system, which we denote as dynamically infeasible runs. So, we also identify an \gls*{iis} \cite{chinneck1991locating} for Problem \ref{prob:1}. An \gls*{iis} defines an infeasible subset of constraints such that removing any one constraint renders the subset feasible. We call the constraints in this \gls*{iis} counter-examples.}
\begin{definition}
    Given a feasibility problem with a set of constraints $\mathcal{C}$, an \emph{Irreducibly Inconsistent Set} $\mathcal{I}$ is a subset $\mathcal{I} \subseteq \mathcal{S}$ such that: (1) the feasibility problem with the constraint set $\mathcal{I}$ is infeasible; and (2) $\forall c \in \mathcal{I}$, the feasibility problem with constraint set $\mathcal{I}\setminus \{c\}$ is feasible. 
\end{definition}

\added{Similarly to the discrete plan, we represent these counter-examples as an RTL equivalent Kripke Structure. When we identify an abstraction run that is not feasible, we construct a Kripke structure representing its RTL equivalent runs. In summary, this structure runs are all runs that we can generate using the repeat and backward operators from Proposition \ref{prop:dec}.  
}

\added{We denote the set of counter-examples found so far as $M_{cex}$, where each counter-example $M_{cex}^i \in M_{cex}$ is a Kripke structure. We construct this Kripke structure using Algorithm \ref{alg:dec} in the same way that we construct discrete plans. Therefore, we can discard all unfeasible runs by the product of the discrete abstraction $M_d$ and the complement of the counter-examples $M_{cex}^i$, i.e., $M_d \times M_{cex}^{ic}$.}

\begin{example}\label{ex:bsc:cex}
Consider the system of Example \ref{ex:running} but starting at $\gls*{statecini} = [-4,-8]^\intercal$ instead. Since the input is bounded, we will not be able to generate a run $\gls*{run}_c$ from $s_{13}$ to $s_2$ without passing through $s_1$. Consequently, any run of $M_d$ with prefix $s_{13} (s_{13})^*s_2$, shown in Fig.~\ref{fig:ex1feas}, is dynamically infeasible. We pass the shortest run of the prefix $s_{13} (s_{13})^*s_2$ (i.e., $s_{13}s_2$) to Algorithm \ref{alg:dec}, Fig.~\ref{fig:ex1feas_line1} and \ref{fig:ex1feas_line7}. Then, we add a suffix to this Kripke structure to accept all prefixes, i.e., $s_{13} (s_{13})^*s_2(s_1+\dots+s_{15})^*$, Fig.~\ref{fig:ex1feas_pref}.
\begin{figure}[htp]
    \centering
    % \subfloat[Workspace]{\centering
    %     \includegraphics[width=0.24\textwidth]{running_unfeas.jpg}}
    \subfloat[line 1]{\label{fig:ex1feas_line1}
    \centering
    \scalebox{0.85}{
    \begin{tikzpicture}[->,>=stealth',shorten >=1pt,auto,node distance=0.75cm,
                    semithick]
  \tikzstyle{every state}=[inner sep=0pt, minimum size=0pt]
  \node[state,label={below:$s_{13}$},initial]      (s0)                    {$s_0^\prime$};

  \draw[->] (s0) to[out=45,in=90,loop] (s0);  

  \end{tikzpicture}}}
    \subfloat[line 7]{\label{fig:ex1feas_line7}
    \centering
    \scalebox{0.85}{
    \begin{tikzpicture}[->,>=stealth',shorten >=1pt,auto,node distance=0.75cm,
                    semithick]
  \tikzstyle{every state}=[inner sep=0pt, minimum size=0pt]
  \node[state,label={below:$s_{13}$},initial]      (s0)                    {$s_0^\prime$};
  \node[state,label={below:$s_2$},right of= s0,accepting]    (s1)                    {$s_1^\prime$};

  \draw[->] (s0) to[out=45,in=90,loop] (s0);  
  \draw[->] (s0) to[out=45,in=135]     (s1);

  \end{tikzpicture}}}
  \subfloat[All prefixes]{\label{fig:ex1feas_pref}
    \centering
    \scalebox{0.85}{
    \begin{tikzpicture}[->,>=stealth',shorten >=1pt,auto,node distance=0.75cm,
                    semithick]
  \tikzstyle{every state}=[inner sep=0pt, minimum size=0pt]
  \node[state,label={below:$s_{13}$},initial]      (s0)                    {$s_0^\prime$};
  \node[state,label={below:$s_2$},right of= s0,accepting]    (s1)                    {$s_1^\prime$};
  \node[state,label={below:$S_d$},right of= s1,accepting]    (s2)                    {$s_2^\prime$};

  \draw[->] (s0) to[out=45,in=90,loop] (s0);  
  \draw[->] (s0) to[out=45,in=135]     (s1);
  \draw[->] (s1) to[out=45,in=135]     (s2);
  \draw[->] (s2) to[out=45,in=90,loop] (s2);  

\end{tikzpicture}}}
    \caption{Example of a counter-example construction for the discrete plan $s_{13}(s_2s_9s_{11}s_5s_4s_6s_{14}s_{13})^\omega$ when starting at $\gls*{statecini} = [-4,-8]^\intercal$.} 
    \label{fig:ex1feas}
\end{figure}
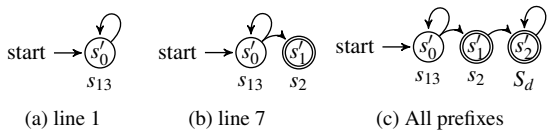

\end{example}
%{\color{red} \bf Mke it a formal example ...}

%%%%%%%%%%%%%%%%%%%%%%%%%%%%%%%%%%%%%%%%%%%%%%%%%%%%%%%%%%%%%%%%%%%%%%%%%%%%%%%%%%%%%%%%%%%%%%%%%%%%%%%%%%%%%%%%%%%%%%%%%%%%%%%%%%%%%%%%%%%%%%%%%%%%%%%%%%%%%%%%%%%%%%%%%%
\subsection{Encoding}
%%%%%%%%%%%%%%%%%%%%%%%%%%%%%%%%%%%%%%%%%%%%%%%%%%%%%%%%%%%%%%%%%%%%%%%%%%%%%%%%%%%%%%%%%%%%%%%%%%%%%%%%%%%%%%%%%%%%%%%%%%%%%%%%%%%%%%%%%%%%%%%%%%%%%%%%%%%%%%%%%%%%%%%%%%

%{\color{red} \bf Suggest to add this as a "Remark" at the end of this subsection... }

Given the abstract system $M_d$, the specification $\gls*{rtlformula}$, and a Kripke structure $M_{cex}$ representing counterexamples, we encode the problem of finding a satisfying run of $M_d$ as a Boolean satisfiability problem \added{using techniques presented in \cite{heljanko2005incremental,schuppan2006linear}}. To do this, we separate the encoding into three components: the abstract system encoding $ \encform{M_d}_K$, the specification encoding $\encform{\gls*{rtlformula}}_K$, and the counter-example encoding $\encform{M_{cex}}_K$. These encodings can be combined into one Boolean formula
\begin{equation}\label{eq:encoding}
    \encform{M_d,\gls*{rtlformula},M_{cex},K} := \encform{M_d}_K \gls*{and} \encform{\gls*{rtlformula}}_K \gls*{and} \encform{M_{cex}}_K.
\end{equation}

Particularly, these encodings are indexed by the bound $K$. The basic idea is to search for short (small $K$) solutions first, then incrementally increase this bound until we reach a value, after which the specification $\gls*{rtlformula}$ is unsatisfiable. This iterative structure allows us to harness incremental solvers like Z3 \cite{de2008z3} to efficiently find satisfying evaluations for specifications with a high $K$. 

The encodings for the abstract system, specification, and counterexamples are presented below. 

\begin{remark}
The linear nature of this encoding means that the number of constraints increases linearly with $K$. Additionally, its incremental nature allows incremental SAT/SMT solvers to use learned clauses from previous iterations, drastically improving performance.
\end{remark}

%%%%%%%%%%%%%%%%%%%%%%%%%%%%%%%%%%%%%%%%%%%%%%%%%%%%%%%%%%%%%%%%%%%%%%%%%%%%%%%%%%%%%%%%%%%%%%%%%%%%%%%%%%%%%%%%%%%%%%%%%%%%%%%%%%%%%%%%%%%%%%%%%%%%%%%%%%%%%%%%%%%%%%%%%%
\subsubsection{Abstract System}\label{subsubsec:abstract_sys}
%%%%%%%%%%%%%%%%%%%%%%%%%%%%%%%%%%%%%%%%%%%%%%%%%%%%%%%%%%%%%%%%%%%%%%%%%%%%%%%%%%%%%%%%%%%%%%%%%%%%%%%%%%%%%%%%%%%%%%%%%%%%%%%%%%%%%%%%%%%%%%%%%%%%%%%%%%%%%%%%%%%%%%%%%%
%To encode $M_d$ as a Boolean formula, we need to capture the infinite behavior of $M_d$ with a finite representation. We do this in one of two ways: either a finite run $\gls*{run}_{d}^K = s_0\dots s_K$ is sufficient to represent all its infinite extensions, or a finite run in $K$-loop form $\gls*{run}_d^{K,L} = s_0\dots s_{L-1}(s_{L}\dots s_{K})^\omega$ captures the infinite behavior.

%With this in mind, the abstract system $M_d$ can be represented symbolically as a Boolean formula that captures finite paths with length $K$,
The abstract system $M_d$ can be represented symbolically as a Boolean formula that captures finite paths with length $K$,
\begin{equation*}
\encform{M_d}_K := I(\hat{s}_0) \gls*{and} \bigwedge_{i=0}^{K-1} T(\hat{s}_i,\hat{s}_{i+1}) \gls*{and} F(\hat{s}_K),
\end{equation*}  
where $\hat{s}_k$ models states $s_k$ as bit vectors, $I(\hat{s}_0)$ and $F(\hat{s}_K)$ are Boolean formulas ensuring that the state $\hat{s}_0$ is the one of the initial states and $\hat{s}_K$ is one of accepting states, and $T(\hat{s}_i,\hat{s}_{i+1})$ encodes the requirements of a transition from $\hat{s}_i$ to $\hat{s}_{i+1}$. We also define the transformation $\chi(\hat{s}_k) = s_k$ which returns the state $s_k$ in $M_d$ corresponding to an evaluation of $\hat{s}_k$.

\begin{example}
Considering the system and specification of Example \ref{ex:running}. The states can be abstracted with a vector of four bits, i.e., $\hat{s}_k \in \{ 0, 1, \dots, 15 \}$ and $\chi(\hat{s}_k) = s_{\hat{s}_k-1}$. The initial conditions are encoded by the formula $I(\hat{s}_0) = \hat{s}_0 = 12$. Since, in this example, we do not restrict the final state, the formula that represents the accepting states is trivially true. Finally, the transitions at instant $k$ are encoded by a formula that is the disjunction of sub-formulas representing each valid transition $(s_i,s_j)$ as follows: $\hat{s}_k = \chi^{-1}(s_i) \gls*{and} \hat{s}_{k+1} = \chi^{-1}(s_j)$.
\end{example}

\subsubsection{Specification} %{\color{red} \bf Check the notations, not very consistent ...}
Following the RTL semantics, we encode the specification $\gls*{rtlformula}$ recursively by considering \textit{formula variables} $\encform{\gls*{rtlformula}^\prime}_k$. For every subformula $\gls*{rtlformula}^\prime \in cl(\gls*{rtlformula})$, a variable $\encform{\gls*{rtlformula}^\prime}_k$ is introduced and interpreted as true if and only if $\gls*{run}_d \gls*{sat}_k \gls*{rtlformula}^\prime$. 

The Boolean encoding of propositional operators in \gls*{rtl} formulas for the instants $0 \leq k \leq K$ is as follows:
\begin{itemize}
\item {\small $\encform{p}_k \gls*{iff} \bigvee\limits_{s_d \in \{ s_d \in S_d : p \in L_d(s_d) \} }  \hat{s}_k = \chi(s_d)$} ,
\item {\small $\encform{\gls*{rtlformula}_1 \gls*{and} \gls*{rtlformula}_2}_k \gls*{iff} \encform{\gls*{rtlformula}_1}_k \gls*{and} \encform{\gls*{rtlformula}_2}_k$},
\item {\small $\encform{\gls*{rtlformula}_1 \gls*{or} \gls*{rtlformula}_2}_k \gls*{iff} \encform{\gls*{rtlformula}_1}_k \gls*{or} \encform{\gls*{rtlformula}_2}_k$}.
\end{itemize} 

For subformulas with temporal operators, we refer to future formula variables to ensure the temporal behavior. Thus, we have,
\begin{itemize}
\item \added{{\small $\encform{\idstluntil{\gls*{rtlformula}_1}{}{\gls*{rtlformula}_2}}_k \gls*{iff} \Big(\encform{\gls*{rtlformula}_2}_k \gls*{or} \big(\encform{\gls*{rtlformula}_1}_k \gls*{and} \encform{\idstluntil{\gls*{rtlformula}_1}{}{\gls*{rtlformula}_2}}_{k+1}\big)\Big)$}},
\item \added{{\small $\encform{\idstlrelease{\gls*{rtlformula}_1}{}{\gls*{rtlformula}_2}}_k \gls*{iff} \Big(\encform{\gls*{rtlformula}_2}_k \gls*{and} \big( \encform{\gls*{rtlformula}_1}_k) \gls*{or} \encform{\idstlrelease{\gls*{rtlformula}_1}{}{\gls*{rtlformula}_2}}_{k+1}\big)\Big)$}}.
\end{itemize}

\begin{example}
Consider the formula $x \leq 1 \gls*{until} x \geq 1$. The encoding for a length $K$ is,
\begin{align*}
    & \encform{x \leq 1 \gls*{until} x \geq 1}_0 = \\
    & \bigwedge_{k=0}^K \left(
    \begin{aligned}
    & (\encform{x \leq 1}_k \gls*{iff} p_{1,k}) \gls*{and} (\encform{x \geq 1}_k \gls*{iff} p_{2,k}) \gls*{and} \\
    & \Bigg(\encform{x \leq 1 \gls*{until} x \geq 1}_k \gls*{iff}  \Big(\encform{x \geq 1}_k  \\
    & \hspace{2cm} \gls*{or} \big(\encform{x \leq 1}_k \gls*{and} \encform{x \leq 1 \gls*{until} x \geq 1}_{k+1}\big)\Big)\Bigg)
    \end{aligned}
    \right),
\end{align*}
where $p_{i,k}$ are Boolean variables encoding $p_i \in L_d(\chi(\bar{s}_k))$.
\end{example}

We still need to take into account the possible infinite behavior encoded in the specification. As mentioned above, we can model infinite behavior as a finite run with a loop. For this reason, we introduce Boolean variables $l_k$, which are true only if the loop starts at instant $k$, $In_k$, which holds only if the instant $k$ is within a loop, and $Exists$, which holds if and only if a loop exists. \added{Furthermore, we divide the loop-related constraints as the base, $ K $-independent (i.e., $1 \leq k \leq K$), and $ K $-dependent constraints.  We assert the base constraint only once, at the initialization. At each bound increment, we delete the old $ K $-dependent constraint assertions and assert $ K $-independent and $ K $-dependent constraints. This procedure allows us to keep most of the constraints between steps and harness incremental solver techniques.}

Consequently, the following constraint $\encform{Loop}_K$ defines a loop:
\begin{itemize}
\item \textbf{Base}: {\small $l_0 \gls*{iff} \gls*{false}$, and $In_0 \gls*{iff} \gls*{false}$},
\item $\boldsymbol{1 \leq k \leq K}$: {\small $l_k \gls*{implies} \hat{s}_{k-1}=\hat{s}_E$, $In_k \gls*{iff} In_{k-1} \gls*{or} l_k$, and $In_{k-1} \gls*{implies} \gls*{neg}l_k$},
\item $\boldsymbol{K}$\textbf{-dependent}: {\small $Exists \gls*{iff} In_K$, and $\hat{s}_E \gls*{iff} \hat{s}_K$}.
\end{itemize} 
Note that we introduced a proxy variable $\hat{s}_E$ to separate the $K$-dependent constraints. 

Additionally, to compensate for change in the bound $K$, we define a set of constraints for the last state $\encform{LastState}_K$. For each subformula $\gls*{rtlformula}^\prime \in cl(\gls*{rtlformula})$, we have,
\begin{itemize}
\item \textbf{Base}: {\small $\gls*{neg}Exists \gls*{implies} \big( \encform{\gls*{rtlformula}^\prime}_L \gls*{iff} \gls*{false} \big)$},
\item $\boldsymbol{1 \leq k \leq K}$: {\small $l_k \gls*{implies} \big( \encform{\gls*{rtlformula}^\prime}_L \gls*{iff} \encform{\gls*{rtlformula}^\prime}_k \big)$},
\item $\boldsymbol{K}$\textbf{-dependent}: {\small $\encform{\gls*{rtlformula}^\prime}_E \gls*{iff} \encform{\gls*{rtlformula}^\prime}_K$ and $\encform{\gls*{rtlformula}^\prime}_L \gls*{iff} \encform{\gls*{rtlformula}^\prime}_{K+1}$}.
\end{itemize}

The encoding above allows the case where $\encform{\idstluntil{\gls*{rtlformula}_1}{}{\gls*{rtlformula}_2}}$ is true for all indices of the loop even if $\encform{\gls*{rtlformula}_2}$ is not at any index of the loop, which violates the \gls*{rtl} semantics. As a result, we introduce the eventually constraints $\encform{EventRTL}_K$ and its auxiliary formula variables $\encform{\gls*{eventually} \gls*{rtlformula}_2}_E$ and $\encform{\idstluntil{\gls*{rtlformula}_1}{}{\gls*{rtlformula}_2}}_E$ such that,
\begin{itemize}
\item \textbf{Base}: {\small  $Exists \gls*{implies} \big( \encform{\idstluntil{\gls*{rtlformula}_1}{}{\gls*{rtlformula}_2}}_E \gls*{implies} \encform{\gls*{eventually} \gls*{rtlformula}_2}_E \big)$, and $\encform{\gls*{eventually} \gls*{rtlformula}_2}_0 \gls*{iff} \gls*{false}$},
\item $\boldsymbol{1 \leq k \leq K}$: {\small $\encform{\gls*{eventually} \gls*{rtlformula}_2}_k \gls*{iff} \encform{\gls*{eventually} \gls*{rtlformula}_2}_{k-1}$, or $\big( In_k \gls*{and} \encform{\gls*{rtlformula}_2}_k \big)$},
\item $\boldsymbol{K}$\textbf{-dependent}: {\small $\encform{\gls*{eventually} \gls*{rtlformula}_2}_E \gls*{iff} \encform{\gls*{eventually} \gls*{rtlformula}_2}_K$}.
\end{itemize} 

Similarly, for subformulas $\idstlrelease{\gls*{rtlformula}_1}{}{\gls*{rtlformula}_2} \in cl{\gls*{rtlformula}}$, we have,
\begin{itemize}
\item \textbf{Base}:  {\small $Exists \gls*{implies} \big( \encform{\idstlrelease{\gls*{rtlformula}_1}{}{\gls*{rtlformula}_2}}_E \gls*{if} \encform{\gls*{always} \gls*{rtlformula}_2}_E \big)$, and $\encform{\gls*{always} \gls*{rtlformula}_2}_0 \gls*{iff} \gls*{true}$},
\item $\boldsymbol{1 \leq k \leq K}$: {\small $\encform{\gls*{always} \gls*{rtlformula}_2}_k \gls*{iff} \encform{\gls*{always} \gls*{rtlformula}_2}_{k-1}$, and $\big( \gls*{neg} In_k \gls*{or} \encform{\gls*{rtlformula}_2}_k \big)$},
\item $\boldsymbol{K}$\textbf{-dependent}: {\small $\encform{\gls*{always} \gls*{rtlformula}_2}_E \gls*{iff} \encform{\gls*{always} \gls*{rtlformula}_2}_K$}.
\end{itemize} 

Putting these pieces together, the resulting Boolean formula that encodes an \gls*{rtl} formula is
\begin{align*}
\encform{\gls*{rtlformula}}_K := & \encform{Loop}_K \gls*{and} \encform{LastState}_K \gls*{and} \\
& \hspace{2cm}\encform{EventRTL}_K \gls*{and} \encform{\gls*{rtlformula}}_0.
\end{align*}

\begin{example}
Consider the specification from Example \ref{ex:running}. First, observe that the temporal operator always $\gls*{always} \varphi_1$  is equivalent to a release formula $\perp \gls*{release} \varphi_1$, where $\varphi_1 = \left( (\idstluntil{\gls*{neg} a}{}{b}) \gls*{and} (\idstluntil{\neg b}{}{c}) \right)$. From the encoding  $\encform{\perp \gls*{release}  \varphi}_k$, it follows that $\encform{\perp \gls*{release}  \varphi}_{K+1}$ should always hold true. Then, because of the encoding $\encform{LastState}_K$, there must always exists a loop. An example of a satisfying run is $s_{13}(s_{12}s_{14}s_{6}s_4s_6s_{14})^\omega$. Notice that there is a loop that will enforce that the states $s_{12}$ and $s_4$ ($c$ and $b$) will always be visited infinitely often. Note that invalid runs have one of the following characteristics: (i) they do not have a loop, (ii) they have a loop but do not visit $s_4$ and one of the states $s_9$ and $s_{12}$ inside the loop, (iii) or they visit $s_1$ or $s_3$. 
\end{example}

\subsubsection{Counter-Examples}

\added{We can encode these counterexamples in the same way as we encoded the Kripke structure $M_d$ of the abstract system:  $\encform{M_{cex}}_K = \bigwedge_{i = 1}^{|M_{cex}|} \gls*{neg} \encform{ M_{cex}^i}_K$, where $\encform{ M_{cex}^i}_K = I_{cex}(\hat{s}_0) \gls*{and} \bigwedge_{i=0}^{K-1} T_{cex}(\hat{s}_i,\hat{s}_{i+1}) \gls*{and} F_{cex}(\hat{s}_K)$.}

\subsection{Soundness}
We establish the correctness of the proposed encoding by the following proposition:
\begin{proposition}\label{prop:dplan_soundness}
    Given a Kripke structure $M_d$, an \gls*{rtl} formula \gls*{rtlformula}, and a set of counterexamples $M_{cex}$, a run $\gls*{run}$ of $M_d$ satisfies the specification (i.e., $\gls*{run}_d \gls*{sat} \gls*{rtlformula}$) if there exists $K \in \mathbb{N}$ such that the encoding $\encform{M_d,\gls*{rtlformula},M_{cex},K}$ is satisfiable.
\end{proposition}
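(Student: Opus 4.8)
The plan is to prove soundness by showing that any satisfying assignment of the Boolean formula $\encform{M_d,\gls*{rtlformula},M_{cex},K}$ decodes into a genuine lasso-shaped run $\gls*{run}_d$ of $M_d$ whose formula variables faithfully mirror the \gls*{rtl} semantics, so that the asserted conjunct $\encform{\gls*{rtlformula}}_0$ forces $\gls*{run}_d \gls*{sat} \gls*{rtlformula}$. First I would fix a satisfying assignment and decode the state bit vectors through $\chi$ to obtain a state sequence $s_0 \dots s_K$, using the loop variables $l_k$, $In_k$, and $Exists$ to locate the loop entry $L$ and form $\gls*{run}_d = s_0 \dots s_{L-1}(s_L \dots s_K)^\omega$. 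From the conjuncts of $\encform{M_d}_K$, namely $I(\hat{s}_0)$, $\bigwedge_{i=0}^{K-1} T(\hat{s}_i,\hat{s}_{i+1})$, and $F(\hat{s}_K)$, together with the loop constraints $l_k \gls*{implies} \hat{s}_{k-1}=\hat{s}_E$ and $\hat{s}_E \gls*{iff} \hat{s}_K$, I would verify that $\gls*{run}_d$ begins in the initial state, respects $T_d$ at every step, ends in an accepting state, and closes consistently into a valid infinite run of $M_d$.

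The core of the argument is a structural induction over the closure $cl(\gls*{rtlformula})$ establishing that, under the decoded run, the truth value assigned to each formula variable $\encform{\gls*{rtlformula}^\prime}_k$ satisfies $\encform{\gls*{rtlformula}^\prime}_k \gls*{iff} (\gls*{run}_d \gls*{sat}_k \gls*{rtlformula}^\prime)$. The base case for an atomic proposition $p$ follows directly from the equivalence $\encform{p}_k \gls*{iff} \bigvee \hat{s}_k = \chi(s_d)$ and the definition of the labeling function $L_d$; the cases for $\gls*{and}$ and $\gls*{or}$ are immediate from the recursive encodings and the inductive hypothesis. For the temporal operators I would use the one-step unfolding equivalences for $\gls*{until}$ and $\gls*{release}$ to mirror the recursive \gls*{rtl} semantics along the finite prefix, and then appeal to $\encform{LastState}_K$—which ties $\encform{\gls*{rtlformula}^\prime}_{K+1}$ and $\encform{\gls*{rtlformula}^\prime}_L$ back to the loop entry via $l_k \gls*{implies}(\encform{\gls*{rtlformula}^\prime}_L \gls*{iff} \encform{\gls*{rtlformula}^\prime}_k)$—to close the recursion consistently around the loop.

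The main obstacle is exactly the loop case for the eventualities: a purely local unfolding of $\gls*{rtlformula}_1 \gls*{until} \gls*{rtlformula}_2$ admits a spurious fixed point in which the variable is true throughout the loop even though $\gls*{rtlformula}_2$ never holds inside it, which would falsely report satisfaction. I would resolve this using $\encform{EventRTL}_K$, whose auxiliary variable $\encform{\gls*{eventually} \gls*{rtlformula}_2}_E$ accumulates (via $In_k \gls*{and} \encform{\gls*{rtlformula}_2}_k$) whether $\gls*{rtlformula}_2$ holds somewhere inside the loop, and whose base constraint $Exists \gls*{implies}(\encform{\idstluntil{\gls*{rtlformula}_1}{}{\gls*{rtlformula}_2}}_E \gls*{implies} \encform{\gls*{eventually}\gls*{rtlformula}_2}_E)$ forces any pending until-promise to be discharged within the loop; the dual constraint with $\encform{\gls*{always}\gls*{rtlformula}_2}_E$ handles $\gls*{release}$. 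Establishing that these constraints precisely exclude the vacuous loop assignments, thereby pinning the formula variables to the genuine (least/greatest) fixed point on the infinite run, is the delicate part, and it is where the cited incremental encodings \cite{heljanko2005incremental,schuppan2006linear} do the heavy lifting.

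Finally, with the equivalence in hand, the asserted conjunct $\encform{\gls*{rtlformula}}_0$ yields $\gls*{run}_d \gls*{sat}_0 \gls*{rtlformula}$, i.e.\ $\gls*{run}_d \gls*{sat} \gls*{rtlformula}$. I would close by noting that the counterexample conjunct $\encform{M_{cex}}_K = \bigwedge_{i=1}^{|M_{cex}|} \gls*{neg}\encform{M_{cex}^i}_K$ only removes previously-identified dynamically-infeasible runs from the search space and does not alter the satisfaction relation, so its presence is harmless for soundness: the decoded run is still a run of $M_d$ that satisfies $\gls*{rtlformula}$.
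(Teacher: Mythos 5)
Your proposal is correct and follows essentially the same route as the paper's own proof: decode the satisfying assignment into a (lasso) run of $M_d$, then argue by structural induction over $cl(\gls*{rtlformula})$ that the formula variables respect the \gls*{rtl} semantics, using the one-step unfolding of $\gls*{until}$/$\gls*{release}$ together with $\encform{LastState}_K$ and $\encform{EventRTL}_K$ to exclude the spurious loop fixed points, and observing that $\encform{M_{cex}}_K$ only prunes runs and cannot harm soundness. The only cosmetic differences are that you claim the full biconditional between formula variables and semantics where the paper proves only the direction needed for soundness, and that you leave the no-loop case (handled in the paper via $\gls*{neg}Exists$ forcing $\encform{\gls*{rtlformula}^\prime}_L$ false) implicit.
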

\begin{proof}
     We first prove that if there exists a bound $K$ such that  $\encform{M_d,\gls*{rtlformula},M_{cex},K}$ is satisfiable, then $\gls*{run}_d \gls*{sat}_K \gls*{rtlformula}$. Then, the proposition follows because if the run $\gls*{run}_d$ satisfies the specification with bound $K$, then it satisfies the specification, i.e., $\gls*{run}_d \gls*{sat}_K \gls*{rtlformula}$ implies that $\gls*{run}_d \gls*{sat} \gls*{rtlformula}$. The sufficiency of checking runs with only $K$ steps follows from the loop structure described in Section\ref{subsubsec:abstract_sys}.

    It is easily seen that the constraint $\encform{M_d}_K \wedge \encform{M_{cex}}_K$ encodes all valid finite runs of model $M_d \setminus M_{cex}$ of length $K$. Moreover, the loop constraints $\encform{Loop}_K$ ensure two cases of satisfying runs: (a) when a loop exists, there will exists an unique index $1 \leq j \leq K$ such that $\hat{s}_j = \hat{s}_K$ determinines when the loop starts, and (b) when there is no loop, the run $\gls*{run}_d$ is a prefix of $M_d$. 
    
    Now, we prove that for any subformula $\gls*{rtlformula}^\prime \in cl(\gls*{rtlformula})$, $\gls*{run}_d \gls*{sat}_k \gls*{rtlformula}^\prime$ if $\encform{\gls*{rtlformula}^\prime}_k$ is true. It is easy to see that the claim holds for the cases where \gls*{rtlformula} is an atomic proposition. Moreover, the claim also trivially holds by induction when \gls*{rtlformula} is a boolean function of atomic propositions. We still need to prove the claim for formulas with temporal operators. In fact, the encoding follows the one-step identity of temporal operators $\gls*{until}$ and $\gls*{release}$. Specifically, $\encform{\idstluntil{\gls*{rtlformula}_1}{}{\gls*{rtlformula}_2}}_k$ is true if either (i) $\exists k^\prime \in [k..K]$ s.t. $\encform{\gls*{rtlformula}_2}_i$ is true and $\encform{\gls*{rtlformula}_1}_{k^{\prime\prime}}$ is true for all $k \leq k^{\prime\prime} \leq k^\prime$, or (ii) the proxy variable $\encform{\idstluntil{\gls*{rtlformula}_1}{}{\gls*{rtlformula}_2}}_{K+1}$ is true and $\encform{\gls*{rtlformula}_1}_{k^{\prime\prime}}$ is true for all $k \leq k^{\prime\prime} \leq K$. When no loop exists, $\encform{\gls*{rtlformula}}_{K+1}$ is false; consequently, the claim holds by induction, when the claim holds for the subformulas. Now, when the loop exists \added{starting at index $1 \leq j \leq K$}, the constraint $\encform{LastState}_K$ ensures that the proxy variables $\encform{\gls*{rtlformula}^\prime}_{K+1} \equiv \encform{\gls*{rtlformula}^\prime}_{j}$. Moreover, the constraint $\encform{EventRTL}_K$ ensures that there exists $j \leq k \leq K$ s.t. $\encform{\idstluntil{\gls*{rtlformula}_1}{}{\gls*{rtlformula}_2}}_k$ is true only if there exists $j \leq k^\prime \leq K$ s.t. $\encform{\gls*{rtlformula}_2}_{k^\prime}$ is true. \added{Reciprocally, the enconding of the temporal operator $\encform{\idstluntil{\gls*{rtlformula}_1}{}{\gls*{rtlformula}_2}}_k$ ensures that $\gls*{rtlformula}_1$ holds until $k^\prime$, i.e., $\encform{\gls*{rtlformula}_1}_k$ holds true for $0 \leq k < k^\prime$.} Thus, the claim holds by induction. The same reasoning applied to the case of the temporal operator $\gls*{release}$. Consequently, any subformula $\gls*{rtlformula}^\prime \in cl(\gls*{rtlformula})$, $\gls*{run}_d \gls*{sat}_k \gls*{rtlformula}^\prime$ if $\encform{\gls*{rtlformula}^\prime}_k$ is true. Therefore, there exists a run $\gls*{run}$ of $M_d \setminus M_{cex}$ such that  $\gls*{run}_d \gls*{sat} \gls*{rtlformula}$ if there exists $K$ such that $\encform{M_d,\gls*{rtlformula},M_{cex},K}$ holds.
\end{proof}

%%%%%%%%%%%%%%%%%%%%%%%%%%%%%%%%%%%%%%%%%%%%%%%%%%%%%%%%%%%%%%%%%%%%%%%%%%%%%%%%%%%%%%%%%%%%%%%%%%%%%%%%%%%%%%%%%%%%%%%%%%%%%%%%%%%%%%%%%%%%%%%%%%%%%%%%%%%%%%%%%%%%%%%%%%
\subsection{Completeness}
%%%%%%%%%%%%%%%%%%%%%%%%%%%%%%%%%%%%%%%%%%%%%%%%%%%%%%%%%%%%%%%%%%%%%%%%%%%%%%%%%%%%%%%%%%%%%%%%%%%%%%%%%%%%%%%%%%%%%%%%%%%%%%%%%%%%%%%%%%%%%%%%%%%%%%%%%%%%%%%%%%%%%%%%%%
The proposed incremental encoding allows us to use incremental SAT/SMT solvers and determine when to stop increasing the bound $ K $. In this regard, our procedure for completeness is based on an inductive procedure proposed in \cite{schuppan2006linear}. The main idea is to check if a longer discrete run that satisfies the specification may still exist by removing the $K$-dependent constraints from the encoding. The longest initialized loop-free run\added{, i.e., a run where the initial state of the run is an initial state of the system, and all states are distinct,} that satisfies the specification is called the \textit{recursive diameter} and is used as the upper bound for the completeness threshold. We use a straightforward encoding of this loop-free run predicate, whose size is quadratic with the bound (i.e., $O(K^2)$) \cite{schuppan2006linear}. 

First, we define a \textit{completeness formula} $\encform{M_d,\gls*{rtlformula},M_{cex},K}_C$ which consists of exactly the encoding $\encform{M_d,\gls*{rtlformula},M_{cex},K}$ with all $K$-dependent constraints removed. Intuitively, $\encform{M_d,\gls*{rtlformula},M_{cex},K}_C$ is satisfied only if there exists runs $\gls*{run}_d$ of $M_d$ that satisfy the specification with length $K$ or longer bounds. Moreover, we propose a \textit{simple run} formula which is satisfiable for only initialised loop-free runs. Let $\encform{\hat{s}_{\gls*{rtlformula}}}_k$ be a bit vector of values of all formula variables $\encform{\gls*{rtlformula}}_k$, the simple run predicate is defined as follows:
\begin{equation*}
    \encform{SR}_K := \bigwedge_{0 \leq i < j \leq K} \big( \hat{s}_i \neq \hat{s}_j \gls*{or} \gls*{neg} In_i \gls*{or} \gls*{neg} In_j \gls*{or} \encform{\hat{s}_{\gls*{rtlformula}}}_i \neq \encform{\hat{s}_{\gls*{rtlformula}}}_j \big).
\end{equation*}

Now, we prove the completeness of this encoding. Note that as an intermediate result, we determine some $ K $ above which increasing it does not change the satisfaction.

\begin{proposition}\label{prop:dplan_completeness}
    Given a Kripke structure $M_d$, an \gls*{rtl} formula \gls*{rtlformula}, and a set of counterexamples $M_{cex}$, there is no run $\gls*{run}_d$ in $M_d$ discarding counterexamples that satisfies the specification if for some $K\geq 0$ $\encform{M_d,\gls*{rtlformula},M_{cex},K}_C \gls*{and} \encform{SR}_K$ is unsatisfiable and either $K = 0$ or $\encform{M_d,\gls*{rtlformula},M_{cex},K-1}$ is unsatisfiable.  
\end{proposition}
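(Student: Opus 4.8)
The plan is to prove the contrapositive: assuming some run $\xi_d$ of $M_d$ (after discarding $M_{cex}$) satisfies $\varphi$, I would show that at least one of the two hypotheses must fail, i.e. that either $\encform{M_d,\varphi,M_{cex},K}_C \wedge \encform{SR}_K$ is satisfiable or else $K>0$ and $\encform{M_d,\varphi,M_{cex},K-1}$ is satisfiable. The two cases are separated by the length $L^\star$ of a \emph{shortest} satisfying run. Such an $L^\star$ exists, and $\encform{M_d,\varphi,M_{cex},L^\star}$ is then satisfiable: a concrete satisfying run of length $L^\star$ induces a satisfying assignment simply by reading the formula-variable valuations off along the run, which is the converse direction of Proposition \ref{prop:dplan_soundness}.

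In the first case, $L^\star \le K-1$. Because every polytope is adjacent to itself, every state of $M_d$ carries a self-loop, and the RTL fragment used here has no next-operator and is therefore stutter-invariant; this is exactly the content of the repeat operator of Proposition \ref{prop:dec}. Hence a satisfying run of length $L^\star$ can be padded with self-loops up to length $K-1$ while preserving $\xi_d \vDash \varphi$, so $\encform{M_d,\varphi,M_{cex},K-1}$ is satisfiable, contradicting the second hypothesis (the degenerate $K=0$ subcase is vacuous, since there is no run of negative length). I would spell out that padding reproduces the formula-variable valuations consistently, so that the bounded encoding still holds.

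In the second case, $L^\star \ge K$. Here I would work in the extended state space whose points are pairs consisting of the bit-vector state $\hat{s}_k$ together with the valuation $\encform{\hat{s}_\varphi}_k$ of all formula variables — precisely the space on which $\encform{SR}_K$ enforces distinctness. The key claim is that the length-$K$ prefix of a shortest satisfying run is simple in this extended space: if two indices $i<j\le K$ shared the same extended configuration, the segment between them could be excised to produce a strictly shorter run that still satisfies $\varphi$ (because equal formula-variable valuations record identical temporal obligations, the suffix from $i$ may be replaced by the suffix from $j$), contradicting the minimality of $L^\star$. Thus this prefix satisfies $\encform{SR}_K$, and since it is the initial segment of a genuinely longer satisfying run it satisfies the relaxed encoding $\encform{M_d,\varphi,M_{cex},K}_C$ — which is exactly $\encform{M_d,\varphi,M_{cex},K}$ with the loop-closing $K$-dependent constraints deleted. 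This contradicts the first hypothesis. I would also record the monotonicity that makes $K$ behave as a completeness threshold: truncating a longer simple witness yields a shorter one, so unsatisfiability of $\encform{M_d,\varphi,M_{cex},K}_C \wedge \encform{SR}_K$ at this $K$ rules out every larger bound, justifying the identification of the largest feasible $K$ with the recursive diameter.

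The main obstacle is the loop-excision step in the second case: I must argue rigorously that cutting a loop whose two endpoints agree on both the discrete state and the \emph{full} formula-variable valuation preserves $\xi_d \vDash \varphi$, including the liveness obligations tracked by the auxiliary variables $\encform{\Diamond \varphi_2}$ and $\encform{\square \varphi_2}$ introduced in Section \ref{subsubsec:abstract_sys}, and that it keeps the run outside every counterexample structure $M_{cex}^i$ (which holds because the counterexample structures are themselves closed under the repeat and backward operators). This is where the tableau-style encoding does the real work; it is the analogue for our setting of the recursive-diameter completeness argument of Schuppan and Biere \cite{schuppan2006linear}, whose invariant that equal extended configurations are interchangeable I would adapt.
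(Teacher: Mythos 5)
Your proposal is correct and takes essentially the same route as the paper's own proof: both rest on the Schuppan--Biere completeness-threshold argument, namely that any satisfying run of length at least $K$ can be contracted by excising loops in the extended state space (discrete state together with the formula-variable valuation) to yield a simple witness for $\encform{M_d,\varphi,M_{cex},K}_C \wedge \encform{SR}_K$, while any shorter satisfying run would render $\encform{M_d,\varphi,M_{cex},K-1}$ satisfiable, so unsatisfiability of both rules out every satisfying run. If anything, your contrapositive formulation via a shortest satisfying witness --- with explicit attention to the converse (completeness) direction of Proposition \ref{prop:dplan_soundness}, to the liveness bookkeeping variables, and to closure of the counterexample structures under padding and excision --- spells out precisely the monotonicity and compression claims that the paper's proof asserts without detailed justification.
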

\begin{proof}
    First, note that new counterexamples will not change the satisfiability of past checking iterations. Moreover, as mentioned above, $\encform{M_d,\gls*{rtlformula},M_{cex},K^\prime}_C$ is unsatisfiable implies that $\encform{M_d,\gls*{rtlformula},M_{cex},K}_C$ is unsatisfiable for all $K \geq K^\prime$.
    
    Consider that for some $K\geq 0$ $\encform{M_d,\gls*{rtlformula},M_{cex},K}_C$ $\gls*{and} \encform{SR}_K$ is unsatisfiable and either $\encform{M_d,\gls*{rtlformula},M_{cex},K-1}$ is unsatisfiable or $K = 0$. If $K=0$, it implies that $\encform{M_d,\gls*{rtlformula},M_{cex},0}_C$ is unsatisfiable because $\encform{SR}_0$ is empty; thus, $\encform{M_d,\gls*{rtlformula},M_{cex},K}_C$ is unsatisfiable for all $k \geq 0$. Now, notice that if there exist $K$ such that $\encform{M_d,\gls*{rtlformula},M_{cex}^\prime,K}$ is satisfiable, then $\encform{M_d,\gls*{rtlformula},M_{cex}^\prime,k}_C\gls*{and} \encform{SR}_K$ is satisfiable for $k \leq K$. Thus, if  $\encform{M_d,\gls*{rtlformula},M_{cex}^\prime,K}\gls*{and} \encform{SR}_K$ is unsatifiable, $\encform{M_d,\gls*{rtlformula},M_{cex}^\prime,k}$ is unsatisfiable for all $k \geq K$. Using Proposition \ref{prop:dplan_soundness}, we conclude that there is no run $\gls*{run}_d$ in $M_d$ without counterexamples such that satisfies the specification.
\end{proof}

%{\color{red}  Strange statement ... For further details on the correctness of this encoding, we refer the interested reader to \cite{schuppan2006linear}, which provides more thorough proofs of soundness and completeness for a similar encoding.  }

%%%%%%%%%%%%%%%%%%%%%%%%%%%%%%%%%%%%%%%%%%%%%%%%%%%%%%%%%%%%%%%%%%%%%%%%%%%%%%%%%%%%%%%%%%%%%%%%%%%%%%%%%%%%%%%%%%%%%%%%%%%%%%%%%%%%%%%%%%%%%%%%%%%%%%%%%%%%%%%%%%%%%%%%%
\section{Continuous Motion Planning}\label{sec:fsearch}
%%%%%%%%%%%%%%%%%%%%%%%%%%%%%%%%%%%%%%%%%%%%%%%%%%%%%%%%%%%%%%%%%%%%%%%%%%%%%%%%%%%%%%%%%%%%%%%%%%%%%%%%%%%%%%%%%%%%%%%%%%%%%%%%%%%%%%%%%%%%%%%%%%%%%%%%%%%%%%%%%%%%%%%%%

\added{The continuous planning checks if there exists a dynamically feasible run of the system that satisfies a discrete plan. We decode a satisfying run of $M_d$ from the discrete planning and construct an \gls*{rtl} equivalent Kripke structure $M_p$ from this run.}

As described above, the discrete plan has infinite runs. As a result, we need a stop criteria to decide when to decide that the tree has no solution and generate a counter-example.
%The basic idea of continuous motion planning is to find a run of the continuous system which satisfies the specification. 
To do so, we harness the simulation relation between the continuous system $M_c$ and the discrete system $M_d$. The basic idea is to take a particular run from the discrete plan and try to find a corresponding run in $M_c$. If we cannot find such a run, we return a counter-example describing runs of $M_d$ to discard in future plans. Unlike counter-examples in traditional model checking, which prove that a system does not always satisfy the specification, this counter-example proves that there exists a prefix of the plan $M_d$ which cannot satisfy the dynamic constraints. 

Similarly to the discrete planning, the continuous planning searches for a periodic run of the form $\gls*{run}_c^{H,N} = \gls*{statec}_0\xrightarrow{\gls*{inputc}_0}\gls*{statec}_1\xrightarrow{\gls*{inputc}_1}\dots\gls*{statec}_{N-1}\xrightarrow{\gls*{inputc}_{N-1}}(\gls*{statec}_N\xrightarrow{\gls*{inputc}_N}\dots\dots \gls*{statec}_H)^\omega$. This structure, commonly used in to address infinite behavior of temporal logic specifications \cite{wolff2016optimal}, allows us to consider infinite runs with a finite representation. This requirement is necessary when computing a trajectory using \gls*{lp} solvers, and it is frequently used in trajectory synthesis approaches \cite{wolff2016optimal}. 

Now, we formally define the notion of dynamic feasibility.

\begin{definition}\label{def:feasiblerun}
    A run $\gls*{run}_c$ of the continuous system $M_c$ is dynamically feasible run of the discrete plan $M_p$, denoted as $\gls*{run}_c \in M_p$ if and only if there exists a run $\gls*{run}_p^\prime = s_{0,p}\dots s_{N-1,p}(s_{N,p}\dots s_{H,p})^\omega$ of the discrete plan $M_p$ with bound $H$ and loop starting at $N$ such that the the following problem is feasible:
    \begin{equation}\label{eq:feasiblerun}
        \begin{aligned}
        \text{find } & \gls*{run}_c \\
        \text{s.t. } & \gls*{statec}_0 = \gls*{statecini}, \gls*{statec}_H = \gls*{statec}_{N-1}, \text{ and }  \\
        \forall k \geq 0: & \gls*{statec}_{k+1} \in Post_{\gls*{tolfeas}}(\gls*{statec}_k,\gls*{inputc}_k), \text{ and } \gls*{statec}_{k+1} \in \gls*{poly}\big(L_p(s_{p,k+1})\big),
        \end{aligned}
    \end{equation}
    where:
    \begin{itemize}
        \item $Post_{\gls*{tolfeas}}(\gls*{statec}_k,\gls*{inputc}_k) := \{ \gls*{statec} \in \mathbb{R}^{\gls*{statecnb}} : \gls*{statec} = A \gls*{statec}_k + B \gls*{inputc}_k + \gls*{tolfeas}, \gls*{inputc}_k \in \gls*{inputcdom} \}$,
        %\item \added{a run $\gls*{run}_p^\prime$ of the discrete plan has the form $\gls*{run}_p^\prime = (s_0^p)^{\ell_0^\prime}\dots$ $(s_{L-1}^p)^{\ell_{L-1}^\prime} ((s_{L^\prime}^p)^{\ell_{L^\prime}^\prime} \dots \big(s_{K^\prime}^p)^{\ell_{K^\prime}^\prime}\big)^\omega$,}
        %\item \added{the run bound is $H = \sum_{i = 0}^{K^\prime} \ell_i^\prime$,}
        %\item \added{the loop starts at $N = \sum_{i = 0}^{L^\prime} \ell_i^\prime$,}
        %\item The transformation $\imath[k]$ transforms a time index $k$ into a discrete state index $i$.
        \item \added{$\gls*{statec}_{k} \in \gls*{poly}\big(L_p(s_{p,k+1})\big)$} denotes that the continuous state $\gls*{statec}_k$ is contained in the polytope corresponding to the discrete state that is the label of the discrete plan state at instant $k$, i.e., $L_p(s_{p,k+1}) \in S_d$.
    \end{itemize}
\end{definition}

The constraint \added{$\gls*{statec}_{k} \in \gls*{poly}\big(L_p(s_{p,k+1})\big)$} enforces that the continuous state resides in a corresponding polytope, which corresponds to a step of a run in the discrete plan $M_p$. We assume that inputs are bounded; thus, $\gls*{inputc}_k \in \gls*{inputcdom}$ corresponds to the input constraints. Dynamic feasibility is enforced by the constraint $\gls*{statec}_{k+1} \in Post_{\gls*{tolfeas}}(\gls*{statec}_k,\gls*{inputc}_k)$. 

\begin{remark}\label{rem:delta}
These feasibility constraints are written in terms of $\delta$-completeness \cite{gao2012delta}. This takes into consideration the fact that some systems exhibit behavior (Zeno behavior) in which improvement towards the satisfaction of given constraints is arbitrarily slow. Setting some positive but arbitrarily small value of $\delta$ allows us to find finite-length satisfying runs for such cases. Note, though, that $\delta$ can be arbitrarily small to correspond to a negligible value.
\end{remark}

\begin{example}
    These feasibility constraints are written in terms of $\delta$-completeness \cite{gao2012delta}. This completeness considers that some systems exhibit behavior (Zeno behavior) in which improvement towards the satisfaction of given constraints is arbitrarily slow. Setting some positive but arbitrarily small value of $\delta$ allows us to find finite-length satisfying runs for such cases. Note, though, that $\delta$ can be arbitrarily small to correspond to a negligible value.
\end{example}

In summary, there are two main challenges to solving the feasibility problem (\ref{eq:feasiblerun}). First, we need to determine when no feasible run $\gls*{run}_c \in M_p$ exists. Second, if no such run exists, we need to find an appropriate counterexample $M_{cex}^i$. 

\subsection{Dynamical Feasibility}

\added{We address these challenges by identifying necessary and sufficient conditions for a feasible run $\gls*{run}_c \in M_p$. The basic idea is to check the feasibility of a run by incrementally adding constraints.}

\added{We use the discrete plan run in a particular form to define the necessary and sufficient conditions. We can represent a discrete plan in the form of $\boldsymbol{\xi}_p = (s_0^p)^{\ell_0}(s_1^p)^{\ell_1}\dots (s_{L-1}^p)^{\ell_{L-1}}\big((s_L^p)^{\ell_L}\dots$ $(s_K^p)^{\ell_K}\big)^\omega$, where each two sequent states $s_i^p$ and $s_{i+1}^p$ are distinct (i.e.,  $s_i^p \neq s_{i+1}^p$). This form allows us to see the conditions as part of a polytopic tunnel in the system workspace, i.e., $\mathcal{P}(L_p(s_0^p))\mathcal{P}(L_p(s_1^p))\dots\mathcal{P}(L_p(s_K^p))$.}

\added{Furthermore, we call a segment of this run the sequence $s_{i-1}(s_i)^{\ell_i}s_{i+1}$ (or $s_i(s_i)^{\ell_i-1}s_{i+1}$ if $i = 0$), where $s_i = L_p(s_i^p)$ is the abstraction state $s_i$ that labels the plan state $s_i^p$.  }

\added{First, there exists a dynamically feasible run $\boldsymbol{\xi}_c \in M_p$ only if all prefixes of the corresponding run $\boldsymbol{\xi}_p$ are also feasible.  Formally, we denote a prefix of a run $\gls*{run}_p$ of $M_p$ as $Prefix(\gls*{run}_p,P) := \{ (s_0^p)^{\ell_0}(s_1^p)^{\ell_1}\dots (s_{P-1}^p)^{\ell_{P-1}} s_P^p \in (S_d)^* : \ell_i > 0$ for $ i = 0,\dots,P-1\}$. As a result, a prefix is feasible if (\ref{eq:feas}) is feasible for this prefix dropping the loop constraints (i.e., $\gls*{statec}_H = \gls*{statec}_{N-1}$). Therefore, this is a necessary condition for the existence of a dynamically feasible run and is defined as follows.}
\begin{definition}\label{def:feasprefix}
    A prefix $Prefix(\gls*{run}_p,P)$ is said to be \emph{feasible} if and only if the solution of the following \bf \gls*{lp} is less than $\gls*{tolfeas}$:
    \begin{equation}\label{eq:feasprefix}
        \begin{aligned}
            \max\limits_{k \in [0..H-1]} & \min\limits_{\substack{
            \gls*{inputc}_k \in \mathbb{R}^{\gls*{inputcnb}}, \\
            \gls*{statec}_k, \gls*{statec}_{k+1} \in \mathbb{R}^{\gls*{statecnb}}
            }}\| \gls*{statec}_{k+1} - A \gls*{statec}_k - B \gls*{inputc}_k \|_{\infty}\\
            \text{s.t. } &  \gls*{statec}_0 = \gls*{statecini}, \gls*{inputc}_k \in \gls*{inputcdom}, \gls*{statec}_{k+1} \in \gls*{poly}\big(L_p(s_{p,k+1})\big).
            \end{aligned}
    \end{equation}
\end{definition}

\added{Next, if a prefix of the corresponding run $\boldsymbol{\xi}_p$ is feasible, any segment $s_{i-1}(s_i^p)^{\ell_i}s_{i+1}^p$ (or $s_i(s_i^p)^{\ell_i-1}s_{i+1}^p$ if $i = 0$) in this prefix is also feasible. }
\begin{definition}\label{def:feas}
    \added{A segment $s_{i-1}(s_i)^{\ell_i}s_{i+1}$ (or $s_i(s_i)^{\ell_i-1}s_{i+1}$ if $i = 0$)} is said to be \emph{feasible} within $\ell_i$ steps
    if and only if the solution of the following \textbf{
 \gls*{lp} is less than $\gls*{tolfeas}$}:
    \begin{equation}\label{eq:feas}
        \begin{aligned}
            \max\limits_{k \in [0..\ell_i-1]} & \min\limits_{\substack{
        \gls*{inputc}_k \in \mathbb{R}^{\gls*{inputcnb}}, \\
        \gls*{statec}_k, \gls*{statec}_{k+1} \in \mathbb{R}^{\gls*{statecnb}}
        }}  \| \gls*{statec}_{k+1} - A \gls*{statec}_k - B \gls*{inputc}_k \|_{\infty} \\
        \text{s.t. } & 
        \gls*{statec}_0 \in \bar{\gls*{poly}}, \gls*{statec}_{\ell_i} \in \gls*{poly}(s_{i+1}), \\
        & \gls*{statec}_{k+1} \in \gls*{poly}(s_i) \textbf{ if } 0 < k < \ell_i, \gls*{inputc}_k \in \gls*{inputcdom},
        \end{aligned}
    \end{equation}
    where $\bar{\gls*{poly}} = \gls*{poly}(s_{i-1})$ if $i > 0$, otherwise $\bar{\gls*{poly}} = \{ \gls*{statec} = \gls*{statecini} \}$.
\end{definition}

\added{Finally, a segment $s_{i-1}(s_i)^{\ell_i}s_{i+1}$ (or $s_i(s_i)^{\ell_i-1}s_{i+1}$ if $i = 0$) is feasible only if it is reachable.  Intuitively, the reachability drop the constraint on intermediate system states to satisfies the plan state $s_i^p$ (i.e., we do not require that $\boldsymbol{x}_k \in \mathcal{P}(s_i)$ for $0 < k < \ell_i$). We formally define the reachability as follows.}
\begin{definition}\label{def:reach}
    \added{A segment $s_{i-1}(s_i)^{\ell_i}s_{i+1}$ (or $s_i(s_i)^{\ell_i-1}s_{i+1}$ if $i = 0$)} is said to be \emph{reachable} within $\ell_i$ steps if and only if the solution of the following \textbf{ \gls*{lp} is negative}:
    \begin{equation}\label{eq:reach}
        \begin{aligned}
            \max\limits_{k \in [0..\ell_i - 1]} & \min\limits_{\substack{
            \gls*{inputc}_k \in \mathbb{R}^{\gls*{inputcnb}}, \\
            \gls*{statec}_0, \gls*{statec}_{\ell_i} \in \mathbb{R}^{\gls*{statecnb}}
            }}\min   A_{\gls*{inputcdom}} \gls*{inputc}_k - \boldsymbol{b}_{\gls*{inputcdom}} \\
            \text{s.t. } & \gls*{statec}_0 \in \bar{\gls*{poly}}, \gls*{statec}_{\ell_i} \in \gls*{poly}(s_{i+1}), \\
            & \gls*{statec}_{\ell_i} - A^{\ell_i} \gls*{statec}_0 =  A^{\ell_i-1} B \gls*{inputc}_0 + A^{\ell_i - 2} B \gls*{inputc}_1 + \ldots + B \gls*{inputc}_{k^\prime},
        \end{aligned}
    \end{equation}
    where $\bar{\gls*{poly}} = \gls*{poly}(L_p(s_{i-1}^p))$ if $i > 0$, otherwise $\bar{\gls*{poly}} = \{ \gls*{statec} = \gls*{statecini} \}$.
\end{definition}

\begin{remark}
\added{Note that the constraints of (\ref{def:feasprefix}), (\ref{eq:feas}) and (\ref{eq:reach}) are linear and max-min problems can be encoded as \gls*{lp} problems using slack variables \cite{boyd2004convex}. Additionally, notice that the objective of this problem is reduce the distance of the variables to the half-spaces of a polytope. As consequence, when the solution is not positive (or greater than $\gls*{tolfeas}$), all variables in the solution are inside this polytope. Furthermore, we highlight that $\gls*{tolfeas}$ can be chosen arbitrarily small to correspond to a negligible value.}
\end{remark}

\added{We now present necessary and sufficient conditions for the existence of a feasible run $\gls*{run}_c \in M_p$:}

\begin{proposition}\label{prop:nscond}
   Given a discrete plan $M_p$ and a dynamical system $M_{c}$, there exists\added{ a feasible run $\gls*{run}_c \in M_p$} if and only if there exists \added{a discrete plan run $\gls*{run}_p$} such that:
    \begin{enumerate}
        \item \added{for $0 \leq i < K$,}
        \begin{enumerate}
            \item \added{each segment $s_{i-1}(s_i^p)^{\ell_i}s_{i+1}^p$ is reachable,} \label{enum:nscond1}
            \item \added{each segment $s_{i-1}(s_i^p)^{\ell_i}s_{i+1}^p$ is feasible,} \label{enum:nscond2}
        \end{enumerate}
        \item \added{for $0 < P \leq K$,} any prefix \added{$Prefix(\gls*{run}_p,P)$} is feasible, and \label{enum:nscond3}
        \item \added{Problem (\ref{eq:feasiblerun}) is feasible}.  \label{enum:nscond4}
    \end{enumerate}
\end{proposition}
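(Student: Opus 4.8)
The plan is to establish the equivalence by proving the two implications separately, observing at the outset that the ``sufficient'' direction is essentially immediate while the ``necessary'' direction carries the real content. For sufficiency, suppose some discrete plan run $\gls*{run}_p$ satisfies conditions \ref{enum:nscond1}--\ref{enum:nscond4}. Condition \ref{enum:nscond4} asserts precisely that Problem~\eqref{eq:feasiblerun} is feasible for $\gls*{run}_p$; by Definition~\ref{def:feasiblerun} this is exactly the statement that a dynamically feasible run $\gls*{run}_c \in M_p$ exists. Hence sufficiency follows directly, and conditions \ref{enum:nscond1}--\ref{enum:nscond3} function only as the incrementally checkable necessary conditions that the algorithm verifies before ever forming the full problem. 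This means the logical content of the proposition reduces to showing that those three conditions are genuine consequences of feasibility.

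For necessity I would start from a feasible run $\gls*{run}_c \in M_p$ and invoke Definition~\ref{def:feasiblerun} to extract the witnessing discrete plan run $\gls*{run}_p$ (with its stuttering counts $\ell_i$, bound $H$, and loop index $N$) together with a continuous solution $(\gls*{statec}_k,\gls*{inputc}_k)_k$ of Problem~\eqref{eq:feasiblerun}. Condition \ref{enum:nscond4} then holds for this $\gls*{run}_p$ by construction. To obtain prefix feasibility (condition \ref{enum:nscond3}) and segment feasibility (condition \ref{enum:nscond2}), I would restrict this global solution to the relevant time window: because it satisfies $\gls*{statec}_{k+1}\in Post_{\gls*{tolfeas}}(\gls*{statec}_k,\gls*{inputc}_k)$ it meets the dynamics to within $\gls*{tolfeas}$ at every step, and because it satisfies $\gls*{statec}_{k+1}\in\gls*{poly}(L_p(s_{p,k+1}))$ it respects the polytopic tunnel. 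The restricted point is therefore admissible for the max--min programs \eqref{eq:feasprefix} and \eqref{eq:feas}, so their optima do not exceed $\gls*{tolfeas}$, which is what Definitions~\ref{def:feasprefix} and~\ref{def:feas} require, with the strict-versus-non-strict threshold reconciled by the freedom in choosing the tolerance in those definitions slightly larger than the $\gls*{tolfeas}$ used in $Post_{\gls*{tolfeas}}$.

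The crux, and the step I expect to be the main obstacle, is establishing that each segment is also \emph{reachable} in the sense of Definition~\ref{def:reach} (condition \ref{enum:nscond1}). The difficulty is a mismatch in the dynamics: reachability \eqref{eq:reach} is written with the \emph{exact} unrolled recurrence $\gls*{statec}_{\ell_i} - A^{\ell_i}\gls*{statec}_0 = \sum_{j} A^{\ell_i-1-j} B \gls*{inputc}_j$ and with the intermediate tunnel memberships dropped, whereas the global solution only satisfies the $\gls*{tolfeas}$-relaxed dynamics $Post_{\gls*{tolfeas}}$. My plan is to exploit exactly what reachability discards: the endpoint containments $\gls*{statec}_0\in\gls*{poly}(s_{i-1})$ and $\gls*{statec}_{\ell_i}\in\gls*{poly}(s_{i+1})$ survive, and removing the intermediate membership requirements leaves enough slack in the inputs to absorb the accumulated $\gls*{tolfeas}$-error and satisfy the exact recurrence. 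Since the global inputs lie in the full-dimensional polytope $\gls*{inputcdom}$ (strictly, or strictly after an arbitrarily small perturbation as $\gls*{tolfeas}\to 0$), the inner minimization of $A_{\gls*{inputcdom}}\gls*{inputc}_k-\boldsymbol{b}_{\gls*{inputcdom}}$ is negative for each $k$, so the outer maximum is negative and \eqref{eq:reach} holds. Care is needed to handle the nested $\max_k\min$ structure of \eqref{eq:reach} correctly and to treat the boundary case $i=0$, where $\bar{\gls*{poly}}=\{\gls*{statec}=\gls*{statecini}\}$ collapses to the single initial point.

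Finally, I would tie the two directions together by noting that the existential quantifier over $\gls*{run}_p$ in the statement is discharged by the \emph{same} run extracted from the witness in the necessity direction and is supplied as data in the sufficiency direction, so the stuttering counts $\ell_i$ remain consistent across all four conditions. Combining the immediate sufficiency argument with the segment, prefix, and reachability necessity arguments then yields the claimed equivalence.
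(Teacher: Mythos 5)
Your proposal is correct and follows essentially the same route as the paper's proof: sufficiency is immediate because the last condition (feasibility of Problem \eqref{eq:feasiblerun}) is, by Definition \ref{def:feasiblerun}, exactly the existence of a feasible run $\gls*{run}_c \in M_p$, and necessity is obtained by using the continuous witness of Problem \eqref{eq:feasiblerun} to certify each of the segment, prefix, and reachability LP conditions. If anything, you are more careful than the paper, whose one-sentence necessity argument (``since the run satisfies the dynamical constraints\dots it proves all conditions'') and circular-looking contradiction argument for sufficiency silently skip both the mismatch you flag between the $\gls*{tolfeas}$-relaxed dynamics in \eqref{eq:feasiblerun} and the exact recurrence in the reachability LP \eqref{eq:reach}, and the strict-versus-non-strict threshold issue in Definitions \ref{def:feasprefix} and \ref{def:feas}.
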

\begin{proof}
 ($\Rightarrow$) If there exists a feasible run \added{$\gls*{run}_c \in M_p$}, then, by Definition \ref{def:feasiblerun}, there exits a run \added{$\gls*{run}_p$} such that \added{$\gls*{statec}_k \in \gls*{poly}(L_p(s_{p,k}))$} for all $k \geq 0$. Since the run $\gls*{run}_c$ satisfies the dynamical constraints of $M_c$, it proves all conditions for $\gls*{run}_p$. ($\Leftarrow$) We will prove by contradiction. Assume that there exists a run \added{$\gls*{run}_p$} such that all conditions hold but there exists no feasible run \added{$\gls*{run}_c \in M_p$}. However, all segments and prefixes of the run $\gls*{run}_p$ are feasible; thus, there must exists a $\gls*{run}_c$ that satisfies the Definition \ref{def:feasiblerun}, which contradicts the assumption and proves the theorem.
\end{proof}

\subsection{Counter-example}

The necessary and sufficient conditions for feasibility allow us to identify \added{constraints of} the \gls*{iis}\added{, i.e., the counter-examples}. \added{We use Algorithm \ref{alg:feascheck} to extract these constraints from a discrete plan run $\gls*{run}_p^\prime$.}

%We use the following facts to find an \gls*{iis} for the conditions of Proposition \ref{prop:nscond}, as illustrated in Algorithm \ref{alg:feascheck}. In summary, this algorithm considers that prefixes with bound less or equal than $P^\ast$ are \gls*{iis} if any of the proposed \gls*{lp}s return a value greater than $\lambda_{\ref{eq:reach}}^{\ast}, \lambda_{\ref{eq:feas}}^{\ast}, \lambda_{\ref{eq:feasprefix}}^{\ast}$. This algorithm is executed during the search for a node of the discrete plan $M_p$ with a dynamically feasible label. In the search, we keep track of the minimum values for the parameters $\lambda_{\ref{eq:reach}}$, $\lambda_{\ref{eq:feas}}$, and $\lambda_{\ref{eq:feasprefix}}$ for infeasible prefixes with bound $P^\ast$. Therefore, if we do not improve these parameters or do not increase the feasible prefix, we found a counter-example.

\begin{algorithm}
    \caption{Feasibility Checking}\label{alg:feascheck}
    \begin{algorithmic}[1]
    \REQUIRE {\small $\gls*{run}_p^\prime,M_{c},\delta, \lambda_{\ref{eq:reach}}^{\ast}, \lambda_{\ref{eq:feas}}^{\ast}, \lambda_{\ref{eq:feasprefix}}^{\ast}, P^{\ast} $}
    \ENSURE {\small $\lambda_{\ref{eq:reach}}, \lambda_{\ref{eq:feas}}, \lambda_{\ref{eq:feasprefix}}, P, \gls*{run}_{cex}^\prime$}
    \FOR {\small$i = 2$ \TO $K^\prime$}
        \STATE {\small $P=i$;$\lambda_{\ref{eq:reach}} := \infty$; $\lambda_{\ref{eq:feas}} := \infty$; $\lambda_{\ref{eq:feasprefix}} := \infty$; $\gls*{run}_{cex}^\prime := \emptyset$;}
        \STATE {\small $\gls*{run}_{seg} := s_0(s_0)^{\ell_0^\prime-1}s_1$; (or $\gls*{run}_{seg} := s_i(s_{i+1})^{\ell_{i+1}^\prime}s_{i+2}$; if $i = 0$)}
        \STATE {\small $\lambda = $ \gls*{lp} (\ref{eq:reach}) for $\gls*{run}_{seg}$;}
        \STATE {\small \algorithmicif \hspace{1pt}\gls*{lp} (\ref{eq:reach}) is infeasible and $\ell_{i+1}^\prime \geq \gls*{statecnb}$\hspace{1pt}\algorithmicthen \hspace{1pt} $\gls*{run}_{cex}^\prime := \gls*{run}_{seg}$; \hspace{1pt}\algorithmicreturn}
        \STATE {\small \algorithmicelse \hspace{1pt} \algorithmicif \hspace{1pt}\gls*{lp} (\ref{eq:reach}) is infeasible\hspace{1pt}\algorithmicthen \hspace{1pt}\algorithmicreturn}
        \STATE {\small \algorithmicelse \hspace{1pt} \algorithmicif \hspace{1pt}$\lambda \geq \lambda_{\ref{eq:reach}}^{\ast} - \delta$ and $P = P^\ast$\hspace{1pt}\algorithmicthen \hspace{1pt}$\gls*{run}_{cex}^\prime := \gls*{run}_{seg}$; \hspace{1pt}\algorithmicreturn}
        \STATE {\small \algorithmicelse \hspace{1pt} \algorithmicif \hspace{1pt}$\lambda > \delta$ and $P < P^\ast$\hspace{1pt}\algorithmicthen \hspace{1pt}$\gls*{run}_{cex}^\prime := \gls*{run}_{seg}$; \hspace{1pt}\algorithmicreturn}
        \STATE {\small \algorithmicelse \hspace{1pt} \algorithmicif \hspace{1pt}$\lambda > \delta$\hspace{1pt}\algorithmicthen \hspace{1pt}$\lambda_{\ref{eq:reach}} := \lambda$; \hspace{1pt}\algorithmicreturn}
        \STATE {\small $\lambda = $ \gls*{lp} (\ref{eq:feas}) for $\gls*{run}_{seg}$;}
        \STATE {\small \algorithmicif \hspace{1pt}$\lambda \geq \lambda_{\ref{eq:feas}}^{\ast} - \delta$ and $P = P^\ast$\hspace{1pt}\algorithmicthen \hspace{1pt}$\gls*{run}_{cex}^\prime := \gls*{run}_{seg}$; \hspace{1pt}\algorithmicreturn}
        \STATE {\small \algorithmicelse \hspace{1pt} \algorithmicif \hspace{1pt}$\lambda > \delta$ and $P < P^\ast$\hspace{1pt}\algorithmicthen \hspace{1pt}$\gls*{run}_{cex}^\prime := \gls*{run}_{seg}$; \hspace{1pt}\algorithmicreturn}
        \STATE {\small \algorithmicelse \hspace{1pt} \algorithmicif \hspace{1pt}$\lambda > \delta$\hspace{1pt}\algorithmicthen \hspace{1pt}$\lambda_{\ref{eq:feas}} := \lambda$; \hspace{1pt}\algorithmicreturn}
        \STATE {\footnotesize \algorithmicif \hspace{1pt} $i < K^\prime$ \algorithmicthen \hspace{1pt} $\lambda = $ \gls*{lp} (\ref{eq:feasprefix}) for $Prefix(\gls*{run}_p^\prime,P)$;}
        \STATE {\footnotesize \algorithmicelse \hspace{1pt} $\lambda = $ \gls*{lp} (\ref{eq:feasprefix}) for $Prefix(\gls*{run}_p^\prime,P)$ s.t. $\gls*{statec}_H = \gls*{statec}_{N-1}$;}
        \STATE {\small \algorithmicif \hspace{1pt}$\lambda \geq \lambda_{\ref{eq:feasprefix}}^{\ast} - \delta$ and $P = P^\ast$\hspace{1pt}\algorithmicthen \hspace{1pt}$\gls*{run}_{cex}^\prime := Prefix(\gls*{run}_p^\prime,P)$; \hspace{1pt}\algorithmicreturn}
        \STATE {\small \algorithmicelse \hspace{1pt} \algorithmicif \hspace{1pt}$\lambda > \delta$ and $P < P^\ast$\hspace{1pt}\algorithmicthen \hspace{1pt}$\gls*{run}_{cex}^\prime := Prefix(\gls*{run}_p^\prime,P)$; \hspace{1pt}\algorithmicreturn}
        \STATE {\small \algorithmicelse \hspace{1pt} \algorithmicif \hspace{1pt}$\lambda > \delta$\hspace{1pt}\algorithmicthen \hspace{1pt}$\lambda_{\ref{eq:feasprefix}} := \lambda$; \hspace{1pt}\algorithmicreturn}
    \ENDFOR
    \end{algorithmic}
\end{algorithm}

\added{We discuss this algorithm in the following proposition.}

\begin{proposition}\label{prop:search}
    Given a discrete plan $M_p$ and a dynamical system $M_{c}$, Algorithm \ref{alg:feascheck} only returns counter-examples that are \added{constraints of the} \gls*{iis} for a dynamically infeasible discrete plan $M_p$.
\end{proposition}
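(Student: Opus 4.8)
The plan is to derive the result from the necessary and sufficient characterisation of dynamic feasibility in Proposition~\ref{prop:nscond}, which states that a feasible run $\boldsymbol{\xi}_c \in M_p$ exists if and only if every segment is reachable and feasible, every prefix is feasible, and the loop problem~(\ref{eq:feasiblerun}) is feasible. Since these conditions are conjunctive, exhibiting a single violated condition already certifies that the plan $M_p$ is dynamically infeasible. Algorithm~\ref{alg:feascheck} emits a candidate counter-example $\boldsymbol{\xi}_{cex}^\prime$ precisely when one of its three linear programs---reachability~(\ref{eq:reach}), segment feasibility~(\ref{eq:feas}), or prefix feasibility~(\ref{eq:feasprefix})---reports a violation, so the first task is to show that each such violation is genuine (soundness of detection), and the second is to show that the returned constraint set is irreducible.

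First I would argue soundness of detection. When LP~(\ref{eq:feas}) or LP~(\ref{eq:feasprefix}) returns an optimal value exceeding $\delta$, the corresponding segment or prefix is infeasible directly by Definitions~\ref{def:feas} and~\ref{def:feasprefix}; the returned run therefore encodes a set of polytopic and dynamical constraints admitting no solution, so by Proposition~\ref{prop:nscond} no feasible $\boldsymbol{\xi}_c \in M_p$ exists. The reachability branch is more delicate: infeasibility of LP~(\ref{eq:reach}) in $\ell_{i+1}$ steps only rules out \emph{that} segment length. Here I would invoke the standard fact that the $k$-step reachable set of a linear system is non-decreasing in $k$ and saturates once $k \geq n$ (a Cayley--Hamilton argument on the reachability matrix $[B, AB, \dots, A^{n-1}B]$). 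Consequently the guard $\ell_{i+1} \geq n$ guarantees that an unreachable segment stays unreachable for every larger number of repetitions, so this branch yields a counter-example only when the obstruction is genuinely length-independent; when $\ell_{i+1} < n$ the algorithm correctly declines to emit a counter-example and defers to a larger bound.

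Next I would establish irreducibility. The outer loop processes $i$ in increasing order and, within each iteration, tests reachability, then segment feasibility, then prefix feasibility; the guards involving $P = P^\ast$ and $P < P^\ast$, together with the $\lambda^\ast$ comparisons, select a counter-example only at the shortest prefix length at which the current run first violates a condition (the reference values certifying that the emitted counter-example is no weaker than previously found ones). Thus at the moment of return, every strictly shorter prefix and every previously examined segment has certified LP value at most $\delta$, i.e.\ is feasible. Dropping the terminal constraint of $\boldsymbol{\xi}_{cex}^\prime$---the final polytope membership, or in the reachability case the last repetition---yields exactly one of these already-certified feasible subproblems. Hence the returned set is infeasible yet minimal: removing any one of its constraints restores feasibility, which is the defining property of an \gls*{iis}. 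Combining the two parts, every returned $\boldsymbol{\xi}_{cex}^\prime$ consists of constraints belonging to an \gls*{iis} of a dynamically infeasible plan.

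The main obstacle I expect is the reachability/dimension interaction in the second paragraph: one must verify that the saturation of the reachable set at $k = n$ applies \emph{segment-wise} over a tunnel of polytopes, so that a failed reach test with $\ell_{i+1} \geq n$ legitimately excludes all longer realisations of the same segment, and that the minimality argument remains valid when a segment uses repeated states $(s_i^p)^{\ell_i}$ rather than distinct ones. Making the bridge between the continuous reachability geometry and the combinatorial minimality of the returned constraint set rigorous is the crux; the feasibility branches, by contrast, follow almost immediately from Definitions~\ref{def:feas}--\ref{def:feasprefix} and Proposition~\ref{prop:nscond}.
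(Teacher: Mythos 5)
Your proposal shares the paper's skeleton---both rest on Proposition~\ref{prop:nscond}, and your Cayley--Hamilton saturation argument for the reachability LP is exactly the paper's first step (it cites the corresponding linear-systems result to conclude that an unreachable segment with $\ell_i \geq n$ stays unreachable for all longer lengths)---but there is a genuine gap in your treatment of the two feasibility branches. You claim that an optimal value of LP~(\ref{eq:feas}) or LP~(\ref{eq:feasprefix}) exceeding $\delta$ certifies infeasibility ``directly by Definitions~\ref{def:feas} and~\ref{def:feasprefix}.'' It does not. Both definitions are parameterized by the repetition counts $\ell_i$, and the plan $M_p$ contains runs with \emph{every} choice of repetition counts: that is precisely what Algorithm~\ref{alg:dec} builds in through the self-loops (Proposition~\ref{prop:dec}), and why Algorithm~\ref{alg:feas} keeps extending runs with $repeat\_at$. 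A value above $\delta$ at one particular length only rules out that length; it does not show that the segment or prefix is an infeasible constraint of the plan, which is what the IIS claim requires. Your opening premise about the algorithm is inaccurate on exactly this point: Algorithm~\ref{alg:feascheck} does \emph{not} emit a counter-example whenever an LP ``reports a violation''---when $\lambda > \delta$ but $\lambda$ has strictly decreased relative to the parent's value $\lambda^\ast$ (lines 9, 13, 18), it records $\lambda$ and returns \emph{without} a counter-example, leaving the run to be lengthened later. You read the $\lambda^\ast$/$P^\ast$ guards as irreducibility bookkeeping; their actual role is to supply the argument your proof is missing.

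That missing argument is the paper's monotonicity/saturation reasoning, which extends your (correct) reachability logic to the feasibility LPs: increasing any $\ell_i$ only adds decision variables (degrees of freedom), so the optimal values of (\ref{eq:feas}) and (\ref{eq:feasprefix}) are non-increasing in the lengths. Hence, when the value fails to decrease from parent to child ($\lambda \geq \lambda^\ast - \delta$ with $P = P^\ast$), it has reached its minimum over all lengths; if that minimum still exceeds $\delta$, the segment or prefix is infeasible for \emph{every} number of repetitions, and only then is it a legitimate constraint of the IIS. Without this step your rule would discard feasible plans: in Example~\ref{ex:running} with $\bar{\boldsymbol{x}} = [1,-5.5]^\intercal$, the satisfying plan needs several time steps inside each polytope (this is the very reason the self-loops exist), so its short-length LP values exceed $\delta$ even though the plan is dynamically feasible; emitting a counter-example there would falsify this proposition and, downstream, break the completeness claim of Theorem~\ref{theo:maincompleteness}. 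Example~\ref{ex:feas:cex} shows the intended mechanism: the reachability LP values keep decreasing as $\ell_0^\prime$ grows, so they never trigger a counter-example, while the feasibility LP saturates at $5 > \delta$, and only that saturation licenses the counter-example $s_{13}(s_{13})^2s_2$.
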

\begin{proof}
    First, from \cite[Theorem 5.25]{antsaklis2007linear}, an unconstrained discrete-time linear control system is reachable if and only if \gls*{lp} (\ref{eq:reach}) for a segment $s_b^p (s_a^p)^{\ell_a^\prime} s_e^p$ is feasible for $\ell_a^\prime \leq \gls*{statecnb}$ \added{(line 5)}. Moreover, the reachable states of those unconstrainted systems does not change by $\ell_a^\prime > \gls*{statecnb}$. Therefore, if increasing the length $\ell_a^\prime$ the solution of \gls*{lp} (\ref{eq:reach}), the solution does not reduce, the segment $s_b^p (s_a^p)^{\ell_a^\prime} s_e^p$ is not reachable \added{(lines 7 and 8)}. For this reason, from Proposition \ref{prop:nscond}, it is a \added{constraints of the} \gls*{iis}. 
    Now, note that \gls*{lp} (\ref{eq:feas}) for the segment $s_b^p (s_a^p)^{\ell_a^\prime} s_e^p$ is never infeasible if this segment is reachable. Second, if we increase $\ell_a^\prime$, it increases the degree of freedom in the state space. Thus, it must reduce the solution of \gls*{lp} (\ref{eq:feas}) up to its minimum. Therefore, again, if increasing the length $\ell_a^\prime$ the solution of \gls*{lp} (\ref{eq:feas}) does not reduce \added{(lines 11 and 12)}, the segment $s_b^p (s_a^p)^{\ell_a^\prime} s_e^p$ is not feasible and is a \added{constraint of the} \gls*{iis}. 
    Finally, we have a feasible prefix $Prefix(\gls*{run}_d^\prime,P-1)$ and add a feasible segment $s_b^p (s_a^p)^{\ell_a^\prime} s_e^p$ at the end when solving \gls*{lp} (\ref{eq:feasprefix}), where $b = P-1$. Thus, increasing $\ell_i^\prime$ for any $i = 1, \dots, P$ also increases the degree of freedom in the state space. Hence, any increment in an elements of the parameters $\ell_0,\ell_1,\dots,\ell_{P-1}$ reduces \gls*{lp} (\ref{eq:feasprefix}) up to its minimum. Therefore, it does not reduce, the discrete plan $M_p$ is infeasible and the prefix $Prefix(\gls*{run}_d^\prime,P)$ a \added{constraint of the} \gls*{iis} \added{(lines 16 and 17)}, which concludes our theorem.
\end{proof}

\begin{example}\label{ex:feas:cex}
    Consider the running example (\ref{ex:running}) with initial condition $\gls*{statecini} = [-4,-8]^\intercal$. Let us consider the segment $s_{13}(s_{13})^{\ell_0^\prime-1}s_2$ of a run of the discrete plan $M_p$ generated by the run $\gls*{run}_p = s_{13}(s_2s_9s_{11}s_5s_4s_6s_{14}s_{13})^\omega$. The solution of \gls*{lp} (\ref{eq:reach}) for $\ell_0^\prime \in \{1,2,3,4,5,6,7,8 \}$ is $2, 3, 2.33, 1.5, 0.9, 0.47, 0.18, -0.04$, respectively. In fact, after $\ell_0^\prime - 1 > \gls*{statecnb} = 2$, the solution value reduces until it reaches a negative value. In other words, the polytope $\gls*{poly}(s_2)$ is reachable from the initial state. However, if we solve \gls*{lp} (\ref{eq:feas}) for the same segment, we have $1, 5, \dots, 5$, respectively, for the same values of $\ell_0^\prime$. The reason that the solution does not change the value after $\ell_0^\prime = 3$ is that the bounded input turns the prefix $s_{13}(s_{13})^2s_2$ infeasible. Specifically, with input $|u| \leq 2$, the instant $k=2$ inevitably goes outside $s_{13}$ and $s_2$. Therefore, we discard any trajectory with the prefix $s_{13}(s_{13})^*s_2$.
\end{example}

\subsection{Planning Algorithm}

Now, we present how we use Algorithm \ref{alg:feascheck} to search for a dynamically feasible node in the discrete plan $M_p$. In Algorithm \ref{alg:feas}, we propose a search strategy which uses these necessary and sufficient conditions to find a feasible run $\gls*{run}_c \in M_p$. The basic idea is to check that a finite length feasible run exists by checking each \added{segment} and prefix. Each time we check reachability or feasibility, we solve a \gls*{lp} problem as defined in Definitions \ref{def:reach}, \ref{def:feas} and \ref{def:feasprefix}. When such a run cannot be found, this algorithm returns a counterexample representing infeasible prefixes or \added{segments} by computing an \gls*{iis} constraint as per Proposition \ref{prop:search}.

\begin{algorithm}
    \caption{Continuous Motion Planning}\label{alg:feas}
    \begin{algorithmic}
    \REQUIRE {\small $M_{c},M_p,\delta$}
    \ENSURE {\small $cexSet, \gls*{run}_c^*$}
    \STATE \added{{\small $\langle \lambda_{\ref{eq:reach}}, \lambda_{\ref{eq:feas}}, \lambda_{\ref{eq:feasprefix}}, P, \gls*{run}_{cex}^\prime \rangle \gets$ Alg. \ref{alg:feascheck}$(M_p.root,M_c,\delta,\infty,\infty,\infty,0)$;}}
    \STATE {\small $\gls*{run}_c^* := \emptyset$; $cexSet := \emptyset$;$openSet := \langle M_p.root, \lambda_{\ref{eq:reach}}, \lambda_{\ref{eq:feas}}, \lambda_{\ref{eq:feasprefix}}, P, \gls*{run}_{cex}^\prime \rangle$;}
    \WHILE{\small $openSet \neq \emptyset$}
        \STATE {\small $parent := $ {\footnotesize the lowest} $\min(\lambda_{\ref{eq:reach}}, \lambda_{\ref{eq:feas}}, \lambda_{\ref{eq:feasprefix}})$ {\footnotesize with highest} $P$ {\footnotesize from} $openSet$;}
        \STATE {\small remove $parent$ from $openSet$;}
        \STATE \algorithmicif \hspace{1pt} {\small $0 \leq \max(\lambda_{\ref{eq:reach}}, \lambda_{\ref{eq:feas}}, \lambda_{\ref{eq:feasprefix}}) \leq \gls*{tolfeas}$ \algorithmicthen \hspace{1pt} \textbf{break};}
        \FOR {\small $i = 0$ \TO $K^\prime$} 
            \STATE \added{{\small $\gls*{run}_p = repeat\_at(i,parent)$;}}
            \STATE \added{{\footnotesize $\langle \lambda_{\ref{eq:reach}}^\prime, \lambda_{\ref{eq:feas}}^\prime, \lambda_{\ref{eq:feasprefix}}^\prime, P^\prime, \gls*{run}_{cex}^\prime \rangle \gets$ Alg. \ref{alg:feascheck}$(\gls*{run}_p,M_c,\delta,\lambda_{\ref{eq:reach}},\lambda_{\ref{eq:feas}}, \lambda_{\ref{eq:feasprefix}}, P)$;}}
            \STATE {\small $child := \langle \gls*{run}_p, \lambda_{\ref{eq:reach}}^\prime, \lambda_{\ref{eq:feas}}^\prime, \lambda_{\ref{eq:feasprefix}}^\prime, P^\prime, \gls*{run}_{cex}^\prime \rangle$;}
            \STATE \algorithmicif \hspace{1pt} {\small $child.\gls*{run}_{cex}^\prime \neq \emptyset$} \algorithmicthen \hspace{1pt} {\small $cexSet \gets \gls*{run}_{cex}^\prime$} \algorithmicelse \hspace{1pt} {\small $openSet \gets child$;}
        \ENDFOR
        \FOR {\small $i = 1$ \TO $K^\prime$} 
            \STATE \added{{\small $\gls*{run}_p = backward\_at(i,parent)$;}}
            \STATE \added{{\footnotesize $\langle \lambda_{\ref{eq:reach}}^\prime, \lambda_{\ref{eq:feas}}^\prime, \lambda_{\ref{eq:feasprefix}}^\prime, P^\prime, \gls*{run}_{cex}^\prime \rangle \gets$ Alg. \ref{alg:feascheck}$(\gls*{run}_p,M_c,\delta,\lambda_{\ref{eq:reach}},\lambda_{\ref{eq:feas}}, \lambda_{\ref{eq:feasprefix}}, P)$;}}
            \STATE {\small $child := \langle \gls*{run}_p, \lambda_{\ref{eq:reach}}^\prime, \lambda_{\ref{eq:feas}}^\prime, \lambda_{\ref{eq:feasprefix}}^\prime, P^\prime, \gls*{run}_{cex}^\prime \rangle$;}
            \STATE \algorithmicif \hspace{1pt} {\small $child.\gls*{run}_{cex}^\prime \neq \emptyset$} \algorithmicthen \hspace{1pt} {\small $cexSet \gets \gls*{run}_{cex}^\prime$} \algorithmicelse \hspace{1pt} {\small $openSet \gets child$;}
        \ENDFOR
    \ENDWHILE
    \IF { $openSet \neq \emptyset$}
        \STATE {\small find $\gls*{run}_c^*$ solving \gls*{lp} (\ref{eq:feasprefix}) for $parent$ s.t. $\gls*{statec}_H = \gls*{statec}_{N-1}$;}
    \ENDIF
    \end{algorithmic}
\end{algorithm}

\begin{remark}
The root of the discrete plan $M_p.root$ is the shortest run that can be generated from this structure.
\end{remark}

%In fact, a prefix is actually an \gls*{iis} for the feasibility problem defined in Definition \ref{def:feasiblerun} (to see this, consider removing the dynamic constraints $\gls*{inputc}_k \in \gls*{inputcdom}$ and $\gls*{statec}_{k+1} \in Post_{\gls*{tolfeas}}(\gls*{statec}_k,\gls*{inputc}_k)$ or the logic specification constraint $\gls*{statec}_{k+1} \in \gls*{poly}_{k+1}$ for any instant $k$. This renders Problem (\ref{eq:feasiblerun}) feasible). Therefore, we discard any discrete plan that generates these prefixes in future plans. 

This algorithm is sound and complete, as shown by the following proposition:  

\begin{proposition}\label{prop:feas}
    Given a model $M_c$ and a plan $M_p$, Algorithm \ref{alg:feas} returns a run $\gls*{run}_c^*$ if and only if this is a feasible run of $M_p$.  
\end{proposition}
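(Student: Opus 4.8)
The plan is to prove both implications by reducing them to the necessary-and-sufficient characterization of dynamic feasibility in Proposition~\ref{prop:nscond}, while using Proposition~\ref{prop:dec} to guarantee that the target run lies in the search tree and Proposition~\ref{prop:search} to control the (infinite) tree through pruning and monotonicity. For soundness ($\Rightarrow$), I would trace the unique exit through which Algorithm~\ref{alg:feas} returns a nonempty $\gls*{run}_c^*$: the final \gls*{lp}~(\ref{eq:feasprefix}) is solved for the current $parent$ only after the \textbf{while} loop is left through the \textbf{break} testing $0 \leq \max(\lambda_{\ref{eq:reach}},\lambda_{\ref{eq:feas}},\lambda_{\ref{eq:feasprefix}}) \leq \gls*{tolfeas}$. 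I would argue that this test, by Definitions~\ref{def:reach}, \ref{def:feas}, and \ref{def:feasprefix}, is exactly the statement that every segment of the run stored in $parent$ is reachable and feasible and every prefix is feasible, i.e.\ conditions~(\ref{enum:nscond1})--(\ref{enum:nscond3}) of Proposition~\ref{prop:nscond}. The final line then solves \gls*{lp}~(\ref{eq:feasprefix}) with the loop constraint $\gls*{statec}_H = \gls*{statec}_{N-1}$, which is precisely Problem~(\ref{eq:feasiblerun}); its feasibility supplies condition~(\ref{enum:nscond4}). Invoking Proposition~\ref{prop:nscond} yields that the returned $\gls*{run}_c^*$ is indeed a feasible run of $M_p$.

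For completeness ($\Leftarrow$), I would begin from the hypothesis that some feasible run $\gls*{run}_c \in M_p$ exists. By Proposition~\ref{prop:nscond} there is a discrete-plan run $\gls*{run}_p$ all of whose segments are reachable and feasible, all of whose prefixes are feasible, and for which Problem~(\ref{eq:feasiblerun}) is feasible. By Proposition~\ref{prop:dec}, $\gls*{run}_p$ is produced from $M_p.root$ by a finite sequence of $repeat\_at$ and $backward\_at$ applications, so it is a node in the tree expanded by the \textbf{for} loops of Algorithm~\ref{alg:feas}. It then remains to show that the best-first search actually reaches a node satisfying the \textbf{break} test. Two facts are decisive: first, by Proposition~\ref{prop:search}, any segment or prefix whose \gls*{lp} objective fails to strictly decrease when its repetition count $\ell_i$ is increased is returned as a counter-example and routed to $cexSet$ rather than $openSet$, so dynamically infeasible branches are pruned; second, along $\gls*{run}_p$ the objectives are monotone nonincreasing in each $\ell_i$ and are driven down to a value $\leq \gls*{tolfeas}$.

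The main obstacle will be exactly this termination/progress argument, since $repeat\_at$ can lengthen any run without bound and the tree is therefore infinite; I must rule out the search wandering forever without selecting the feasible node. The crux is to convert the two monotonicity/stabilization properties of Proposition~\ref{prop:search} into a finite effective depth. Reachability saturates after at most $\gls*{statecnb}$ repetitions by \cite[Theorem~5.25]{antsaklis2007linear}, and the feasibility and prefix objectives, being monotone nonincreasing in each $\ell_i$, each attain their minimum at a finite \emph{stabilization length}; beyond these lengths an objective either already lies below $\gls*{tolfeas}$ (the segment is feasible) or can no longer improve (a counter-example is issued and the branch pruned). Hence only finitely many nodes can carry an objective strictly above $\gls*{tolfeas}$ yet strictly below a previously seen value, and the priority ordering (smallest $\min(\lambda_{\ref{eq:reach}},\lambda_{\ref{eq:feas}},\lambda_{\ref{eq:feasprefix}})$, then largest $P$) forces the frontier toward $\gls*{run}_p$ in finitely many iterations. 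I would make this precise by a descent argument on the triple of objectives along the target branch, concluding that a $parent$ meeting the \textbf{break} condition is dequeued after finitely many steps, at which point the final \gls*{lp} recovers $\gls*{run}_c^*$; conversely, when no feasible run exists every branch is eventually pruned, $openSet$ empties, and the algorithm returns $\gls*{run}_c^* = \emptyset$.
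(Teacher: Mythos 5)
Your proposal is correct and follows the same overall route as the paper's proof: both directions are reduced to the characterization in Proposition~\ref{prop:nscond}, with soundness read off from the algorithm's exit path and completeness from exhaustiveness of the search over runs of $M_p$. The difference is one of rigor, and it matters. The paper's proof is two sentences: soundness is dispatched with ``it can be easily seen that Algorithm~\ref{alg:feas} checks the conditions defined in Proposition~\ref{prop:nscond},'' and completeness with ``Algorithm~\ref{alg:feas} considers all possible prefixes of $M_p$.'' You do genuinely more work in two places. First, for soundness you trace the actual control flow --- the \textbf{break} test on $\max(\lambda_{\ref{eq:reach}},\lambda_{\ref{eq:feas}},\lambda_{\ref{eq:feasprefix}})$ certifying conditions~(\ref{enum:nscond1})--(\ref{enum:nscond3}), and the final \gls*{lp}~(\ref{eq:feasprefix}) with the loop-closure constraint $\gls*{statec}_H = \gls*{statec}_{N-1}$ supplying condition~(\ref{enum:nscond4}) --- where the paper merely asserts the correspondence. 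Second, and more importantly, you isolate the real difficulty that the paper's completeness sentence hides: since $repeat\_at$ can be applied indefinitely, the search tree is infinite, and ``considers all possible prefixes'' is meaningful only if one shows that the best-first search dequeues a node passing the \textbf{break} test (or empties $openSet$) after finitely many iterations. Your descent argument --- a child enters $openSet$ only when an objective improves by more than $\gls*{tolfeas}$ relative to its parent (otherwise Algorithm~\ref{alg:feascheck} emits a counter-example and the branch is pruned), objectives are bounded below, $P$ is bounded by the run length, and reachability saturates after $\gls*{statecnb}$ repetitions, so every branch has finite effective depth and the explored tree is finite --- is exactly the ingredient needed to make that sentence true, and you correctly note it is compatible with Proposition~\ref{prop:search}, which guarantees pruning only discards genuinely infeasible prefixes and segments (up to the $\gls*{tolfeas}$ tolerance inherent in the paper's $\delta$-completeness convention, cf.\ Remark~\ref{rem:delta}). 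You also make explicit the appeal to Proposition~\ref{prop:dec} to place the target run in the tree, which the paper leaves implicit. In short: same decomposition and same key lemma, but your version closes a termination gap that the paper's proof glosses over.
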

\begin{proof}
    ($\Rightarrow$) It can be easily seen that Algorithm \ref{alg:feas} checks the conditions defined in Proposition \ref{prop:nscond}; thus, any run generated from this algorithm is a feasible run $\gls*{run}_c^*$ of $M_p$. ($\Leftarrow$) Algorithm \ref{alg:feas} considers all possible prefixes of $M_p$. Thus, from Proposition \ref{prop:nscond}, there exists a feasible run $\gls*{run}_c^*$ of $M_p$ only if Algorithm \ref{alg:feas} returns a run $\gls*{run}_c^*$.
\end{proof}

\begin{example} 
Returning to Example \ref{ex:feas:cex}, the run $\gls*{run}_d^\prime = s_{13}(s_2s_9s_{11}s_5s_4s_6s_{14}s_{13})^\omega$ labels the shortest run of the discrete plan $M_p$. Executing Algorithm \ref{alg:feas}, we first obtain the values $\langle \lambda_{\ref{eq:reach}}, \lambda_{\ref{eq:feas}}, \lambda_{\ref{eq:feasprefix}}, P, \gls*{run}_{cex}^\prime \rangle$ for the root, which is $\langle \infty, \infty, \infty, 1, \emptyset \rangle$ because the segment $s_{13}s_2$ is not reachable and $\ell_0^\prime < \gls*{statecnb}$. Next, inside the while loop, we check all the root children. All of them, except the child $repeat\_at(0,parent)$, will return the same values. This child is different because $\ell_0^\prime = 2 = \gls*{statecnb}$; thus, its values are $\langle 3, 5, \infty, 1, \emptyset \rangle$. As a result, this is the next parent and its child generated by $repeat\_at(0,parent)$ and will have values $\langle 2.33, 5, \infty, 1, s_{13}(s_{13})^2s_2 \rangle$ because this node is reachable but not feasible. Since $\gls*{run}_{cex}^\prime$ is not empty and all other children has values $\lambda_{\ref{eq:reach}}$ and $\lambda_{\ref{eq:feas}}$ greater or equal than $2.33$ and $5$, the prefix $s_{13}(s_{13})^2s_2$ is our counter-example.
\end{example}

\section{Iterative Deepening Temporal Logic over Reals}\label{sec:idrtl}

In the previous sections, we presented how to get a discrete plan and check if it is feasible or return a counter-example. Now, we show how to combine both discrete and continuous planning to generate a run of the system (\ref{eq:system}) that ensures the specification. We call our strategy of combining discrete planning and continuous motion planning \emph{Iterative Deepening Temporal Logic over Reals} (\textsc{idRTL}). Algorithm \ref{alg:idrtl} describes this strategy. First, we check if there exists a satisfying solution in the discrete task planning layer. If such a solution exists, we check (continuous motion planning) if there is a corresponding feasible run $\gls*{run}_c$. If so, this is a solution for Problem \ref{prob:1}. 

Otherwise, the continuous planner returns a counter-example and we search for a new discrete plan. This search stops when the formula $\encform{M_d,\gls*{rtlformula},M_{cex},K}_C \gls*{and} \encform{SR}_K$ is unsatisfiable. We iteratively increase the length of the discrete plan $K$, which is the reason we call this algorithm \textsc{idRTL}. If we reach this stop condition, the algorithm returns \textit{no solution}. 

\begin{algorithm}[H]
    \caption{Iterative Deepening Temporal Logic over Reals}\label{alg:idrtl}
    \begin{algorithmic}
    \REQUIRE {\small $M_{c},\gls*{rtlformula},\delta$}
    \STATE {\small $K \gets 0$;}
    \WHILE{\small $\encform{M_d,\gls*{rtlformula},M_{cex},K}_C \gls*{and} \encform{SR}_K$ is \textbf{satisfiable}}
        \IF {\small $\encform{M_d,\gls*{rtlformula},M_{cex},K}$ is \textbf{satisfiable}}
            \STATE  \algorithmicif \hspace{1pt} {\small continuous motion planning returns a counter-example $M_{cex}^i$} \algorithmicthen \hspace{1pt} {\small $M_{cex} \gets M_{cex} \cup M_{cex}^i$}
            \STATE \algorithmicelse \hspace{1pt} \algorithmicreturn \hspace{1pt} {\small  ($\gls*{run}_p$, $\gls*{run}^*$);}
        \ELSE
            \STATE {\small $K \gets K + 1$;}
        \ENDIF
    \ENDWHILE
    \STATE \algorithmicreturn \hspace{1pt} {\small  No solution for Problem \ref{prob:1} }
    \end{algorithmic}
\end{algorithm}

\subsection{Soundness}

We show that Algorithm \ref{alg:idrtl} is sound in the sense that any run $\gls*{run}^*$ generated by Algorithm \ref{alg:idrtl} solves Problem \ref{prob:1}.  

\begin{theorem}\label{theo:mainsoundness}
    Given an \gls*{rtl} formula \gls*{rtlformula} and a dynamical system $M_{c}$, any solution $\gls*{run}_c^*$ of Algorithm \ref{alg:idrtl} is a solution for Problem \ref{prob:1}.
\end{theorem}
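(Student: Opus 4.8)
The plan is to observe that Algorithm \ref{alg:idrtl} returns a run $\gls*{run}_c^*$ in exactly one branch: the branch reached when (i) the discrete encoding $\encform{M_d,\gls*{rtlformula},M_{cex},K}$ is satisfiable and (ii) continuous motion planning returns a feasible run rather than a counter-example. Consequently, soundness reduces to chaining the three soundness results already established --- Proposition \ref{prop:dplan_soundness} for the discrete layer, Proposition \ref{prop:dec} for the construction of the \gls*{rtl}-equivalent Kripke structure $M_p$, and Proposition \ref{prop:feas} for the continuous layer --- and then transferring satisfaction from the discrete plan to the continuous trajectory by exploiting the fact that \gls*{rtl} satisfaction is determined solely by the sequence of labels along a run.

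Concretely, I would argue as follows. Since a run is returned, the branch in which $\encform{M_d,\gls*{rtlformula},M_{cex},K}$ is satisfiable was taken, so by Proposition \ref{prop:dplan_soundness} the decoded run $\gls*{run}_d$ of $M_d$ (discarding the accumulated counter-examples $M_{cex}$) satisfies $\gls*{run}_d \gls*{sat} \gls*{rtlformula}$; the discarded counter-examples only prune dynamically infeasible runs and therefore do not affect this satisfaction. Next, Algorithm \ref{alg:dec} builds from $\gls*{run}_d$ the Kripke structure $M_p$, and by Proposition \ref{prop:dec} together with the definition of an \gls*{rtl}-equivalent Kripke structure, every run $\gls*{run}_p$ of $M_p$ satisfies exactly the same \gls*{rtl} formulas as $\gls*{run}_d$, so in particular $\gls*{run}_p \gls*{sat} \gls*{rtlformula}$. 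Because the algorithm returned a run and not a counter-example, Proposition \ref{prop:feas} guarantees that $\gls*{run}_c^*$ is a dynamically feasible run of $M_p$ in the sense of Definition \ref{def:feasiblerun}; hence there is a run $\gls*{run}_p$ of $M_p$ with $\gls*{statec}_k \in \gls*{poly}\big(L_p(s_{p,k})\big)$ for all $k$. Finally, since each continuous state lies in the polytope labelled by the corresponding discrete state, the labelling functions agree along the trajectory, i.e. $L_c(\gls*{statec}_k) = L_d\big(L_p(s_{p,k})\big)$ for every $k$ (this is condition~3 of the simulation relation, Definition \ref{def:simulation}), so $\gls*{run}_c^*$ and $\gls*{run}_p$ induce the same sequence of labels. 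As \gls*{rtl} satisfaction depends only on this label sequence, $\gls*{run}_c^* \gls*{sat} \gls*{rtlformula}$, which is precisely a solution to Problem \ref{prob:1}.

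I expect the main obstacle to be the stuttering mismatch between the discrete plan and the continuous trajectory: a dynamically feasible run may dwell in a single polytope for several time steps, so the label sequence of $\gls*{run}_c^*$ is a \emph{stuttered} copy of the path of $\gls*{run}_d$ rather than an exact one. The care is to select the particular run $\gls*{run}_p$ of $M_p$ whose dwell times match those chosen by Algorithm \ref{alg:feas}, and to invoke Proposition \ref{prop:dec} --- specifically that the \emph{repeat} and \emph{backward} operators, which generate exactly these stuttered runs, preserve \gls*{rtl} satisfaction. A secondary subtlety is the infinite, periodic nature of accepted runs: one must verify that the loop structure of the feasible run $\gls*{run}_c^{H,N}$ in Definition \ref{def:feasiblerun} is aligned with the fair loop of $M_p$, so that the infinitely-repeated suffix of $\gls*{run}_c^*$ corresponds to a run of $M_p$ visiting the accepting set infinitely often, and that the tolerance $\gls*{tolfeas}$ in $Post_{\gls*{tolfeas}}$ does not push any state outside its prescribed polytope. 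Once these alignments are made explicit, satisfaction transfers verbatim and the theorem follows.
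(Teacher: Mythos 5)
Your proposal is correct and follows essentially the same route as the paper's proof, which simply chains Proposition \ref{prop:dplan_soundness} (the discrete plan enforces the specification) with Proposition \ref{prop:feas} (the returned run is a dynamically feasible run of $M_p$) to conclude. Your version is in fact more complete than the paper's terse argument, since you make explicit the two links the paper leaves implicit --- Proposition \ref{prop:dec} (that stuttered/backward runs of $M_p$ preserve \gls*{rtl} satisfaction) and the label-transfer step via condition~3 of Definition \ref{def:simulation} --- both of which are genuinely needed to pass from satisfaction by the discrete plan to satisfaction by the continuous trajectory.
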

\begin{proof}
    From Proposition \ref{prop:dplan_soundness}, the discrete plan $M_p$ generated in the discrete task planning phase enforces satisfaction of the specification $\varphi$. From Proposition \ref{prop:feas}, the solution $\gls*{run}_c^*$ generated in the continuous motion planning phase is a feasible run of a $M_p$. Therefore, the continuous run $\gls*{run}_c^*$ is a solution for Problem \ref{prob:1}.
\end{proof}

\subsection{Completeness}

We show that Algorithm \ref{alg:idrtl} is complete in the sense that if no solution is found, then no solution to Problem \ref{prob:1} exists. 

\begin{theorem}\label{theo:maincompleteness}
    Given an \gls*{rtl} formula \gls*{rtlformula} and a dynamical system $M_{c}$, Algorithm \ref{alg:idrtl} returns no solution only if there exists no solution for Problem \ref{prob:1}.
\end{theorem}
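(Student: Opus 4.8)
The plan is to prove the contrapositive: assuming a solution to Problem \ref{prob:1} exists, I show Algorithm \ref{alg:idrtl} cannot reach its final \emph{no solution} line. That line is reached only when the while-loop guard fails, i.e.\ when for the current bound $K$ the completeness formula $\encform{M_d,\gls*{rtlformula},M_{cex},K}_C \gls*{and} \encform{SR}_K$ is unsatisfiable and, from the branch that increments $K$, either $\encform{M_d,\gls*{rtlformula},M_{cex},K-1}$ is unsatisfiable or $K=0$. This is exactly the hypothesis of Proposition \ref{prop:dplan_completeness}, which therefore certifies that \emph{no} run $\gls*{run}_d$ of $M_d$ avoiding the accumulated counterexamples $M_{cex}$ satisfies $\gls*{rtlformula}$.

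The crux is to establish that $M_{cex}$ is \emph{conservative}: discarding it never eliminates a discrete run that admits a dynamically feasible continuous realization. Each counterexample $M_{cex}^i$ was produced by the continuous layer (Algorithm \ref{alg:feas}) through Algorithm \ref{alg:feascheck}, and by Proposition \ref{prop:search} it encodes an IIS constraint for Problem \ref{prob:1}, namely an infeasible prefix or segment. Hence, by the necessary direction of Proposition \ref{prop:nscond}, every discrete run discarded by some $M_{cex}^i$ contains an infeasible prefix or segment and thus admits no feasible $\gls*{run}_c \in M_p$.

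I then close the contradiction through the simulation relation. Suppose $\gls*{run}_c^*$ solves Problem \ref{prob:1}, so $\gls*{run}_c^* \gls*{sat} \gls*{rtlformula}$. By Proposition \ref{prop:simulation} the induced discrete run $\gls*{run}_d$ satisfies $\gls*{run}_d \gls*{sat} \gls*{rtlformula}$ with $(\gls*{statec}_k,s_k)\in R_d$ for all $k$. Since $\gls*{run}_c^*$ is dynamically feasible, the conservativity argument shows that $\gls*{run}_d$ is discarded by no $M_{cex}^i$; hence $\gls*{run}_d$ is a satisfying run of $M_d$ modulo $M_{cex}$, contradicting the conclusion drawn from Proposition \ref{prop:dplan_completeness}. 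Therefore, whenever the algorithm reports no solution, none exists.

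Finally, I must confirm that the loop actually reaches its guard so the verdict is genuinely issued, and this is where I expect the main difficulty. Two monotonicity facts are needed. First, as observed at the start of the proof of Proposition \ref{prop:dplan_completeness}, adding counterexamples never restores satisfiability of a previously unsatisfiable completeness check, so the ceiling on $K$ enforced by $\encform{SR}_K$ (the recursive diameter, serving as the completeness threshold) remains valid as $M_{cex}$ grows. Second, at each fixed $K$ only finitely many discrete plans exist and each returned counterexample rules out at least one of them, so only finitely many counterexample-generating iterations occur before $\encform{M_d,\gls*{rtlformula},M_{cex},K}$ becomes unsatisfiable and $K$ is incremented. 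Together these guarantee termination; the delicate point to verify carefully is precisely that the completeness-threshold reasoning of Proposition \ref{prop:dplan_completeness} is preserved under the monotone accumulation of counterexamples.
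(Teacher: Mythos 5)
Your proposal is correct and follows essentially the same route as the paper's proof: it combines Proposition \ref{prop:dplan_completeness} (the stop condition certifies no satisfying discrete run modulo $M_{cex}$), the continuous-layer results (Propositions \ref{prop:nscond}, \ref{prop:search}, \ref{prop:feas}) to certify that counterexamples only discard dynamically infeasible runs, and Proposition \ref{prop:simulation} to connect continuous solutions to discrete runs. The only differences are presentational --- you argue the contrapositive and make explicit two points the paper leaves implicit (monotonicity of unsatisfiability under added counterexamples, and conservativity of $M_{cex}$); the termination discussion in your last paragraph, while sensible, is not needed for the theorem as stated, since it only asserts partial correctness of the \emph{no solution} verdict.
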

\begin{proof}
    As discussed in Section \ref{sec:fsearch}, we assume that a solution must be finite or periodic. As a result, from Proposition \ref{prop:feas}, if Algorithm \ref{alg:feas} returns a counter-example for a discrete plan $M_p$, then there exists no feasible run for $M_p$. Since the counter-example is a \gls*{iis}, next $M_p$ is always different. From Proposition \ref{prop:dplan_soundness} and Proposition \ref{prop:dplan_completeness}, if for some $K\geq 0$ $\encform{M_d,\gls*{rtlformula},M_{cex},K}_C \gls*{and} \encform{SR}_K$ is unsatisfiable and either $K = 0$ or $\encform{M_d,\gls*{rtlformula},M_{cex},K-1}$ is unsatisfiable, then there exists no discrete plan $M_p$ that satisfies the specification. From Proposition \ref{prop:simulation}, it then follows that there exists no solution for Problem \ref{prob:1}.
\end{proof}

\subsection{Complexity}

The complexity of \textsc{idRTL} depends on the \gls*{rtl} formula and the number of continuous variables in $M_c$. First, the worst case number of discrete states in the abstraction $M_d$ is exponential in number of atomic predicates, $O(2^{|\gls*{predset}|})$. Our proposed encoding $\encform{M_d,\gls*{rtlformula},M_{cex},K}$ depends linearly on the length $K$ and number of subformulas i.e., $O(K|cl(\gls*{rtlformula})|)$, and quadraticaly on $\encform{SR}_K$, i.e., $O((K|cl(\gls*{rtlformula})|)^2)$. Finally, the complexity of \gls*{lp} for continuous motion planning depends linearly on the length $H$ of a continuous run $\gls*{run}_c$ and linearly on the number of continuous variables of $M_c$, i.e., $O(H(\gls*{statecnb}+\gls*{inputcnb})+\gls*{statecnb})$ for the number of variables and $O(H\gls*{statecnb})$ for the number of constraints. 

\begin{remark}
Note that the complexity does not directly depend on the complexity of the atomic predicates. As a result, we can achieve the same performance for arbitrary polytopic constraints as for simpler (rectangular) constraints.
\end{remark}

\section{Simulation}\label{sec:results}

In this section, we demonstrate the scalability of our approach and compare with state-of-the-art solvers in three scenarios. First, we show that our approach quickly determines whether a given initial state has a dynamically feasible satisfying trajectory. We contrast these results with LanGuiCS solver \cite{belta2017formal} which can also determine the inexistence of a solution given an initial state.  Next, we evaluate the performance of our approach on a motion planning problem, where the algorithm must be scalable to long trajectories and non-convex constraints. We compare these results with the OMPL solver \cite{sucan2012the-open-motion-planning-library} (sampling-based motion planning), and LTLOpt solver \cite{wolff2016optimal} (\gls*{milp} motion planning). Finally, we validate the scalability of our approach to high-dimensional dynamical systems by considering a quadrotor model with 18 continuous variables. We also compare these results with the SatEX solver \cite{shoukry2018smc}, whose main focus is scalability to high-dimensional systems. 

We implemented our approach using Z3 SMT solver \cite{de2008z3}, Gurobi LP solver \cite{gurobi}, and \textit{lrs} vertex enumeration solver \cite{avis2013portable} and is available at \url{https://bitbucket.org/rafael_rodriguesdasilva/idrtl/}. All experiments were executed on an Intel Core i7 processor with 32GB RAM.

\subsection{Determining Existence of a Satisfying Trajectory}

In this experiment, we use \cite[Example 11.5]{belta2017formal} as a benchmark problem. This problems uses the same system but with different specification, i.e., $\gls*{rtlformula} = (\idstluntil{{\gls*{neg}} a}{}{b}) \gls*{and} (\idstluntil{\neg b}{}{c})$, where $a$ is represented by the blue regions in Figure \ref{fig:completeness}, $b$ corresponds to the red region, and $c$ to the yellow regions.
This problem is a particularly challenging problem \cite{gol2013language}. The forbidden regions create areas in the workspace where no solution exists. Thus, it is particularly important to decide when the dynamical constraints render a given initial state infeasible.  

In this scenario, we selected $10$ feasible and $10$ unfeasible initial states and executed the idRTL and LanGuiCS solvers. Fig.~\ref{fig:completeness-idrtl} illustrates the solutions generated by idRTL for these initial states. Black stars are initial states for which Algorithm \ref{alg:idrtl} returned \textit{no solution}. The curves are trajectories for initial states where a solution exists. idRTL took $52.6 \pm 5.9ms$ to compute a solution when a solution exists, and $156.9 \pm 96.1ms$ to determine that no solution exists. 

Fig.~\ref{fig:completeness-languics} shows the same results for LanGuiCS. We can see that both algorithms correctly determined the existence of a solution. LanGuiCS took $1352.6 \pm 784.0ms$ to generate solutions for feasible initial states and $22.3 \pm 1.5ms$ to decide that no solution exists. This solver can achieve a quicker decision for unfeasible initial states because it computes the feasible regions offline. However, LanGuiCS required more than $10$min ($612.606$s) to compute these regions. 

These results show that idRTL makes decisions on the existence of a solution in a reasonable time without offline computation. Furthermore, the scalability of our approach is even more evident when a solution exists. The reason for this is that it only takes one valid discrete plan to decide that a solution exists, but it can take several of these plans to rule out the existence of a feasible solution.   

\begin{figure}[htp]
\centering
\subfloat[idRTL]{\label{fig:completeness-idrtl}
\includegraphics[width=0.245\textwidth]{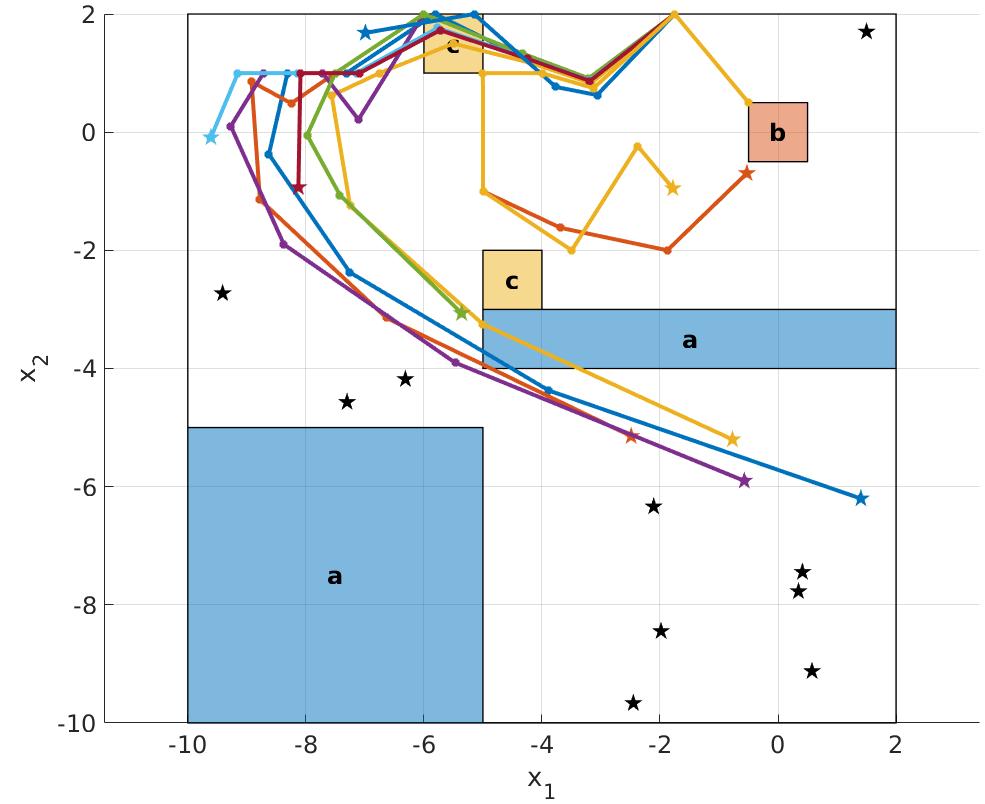}}
\subfloat[LanGuiCS]{\label{fig:completeness-languics}
\includegraphics[width=0.245\textwidth]{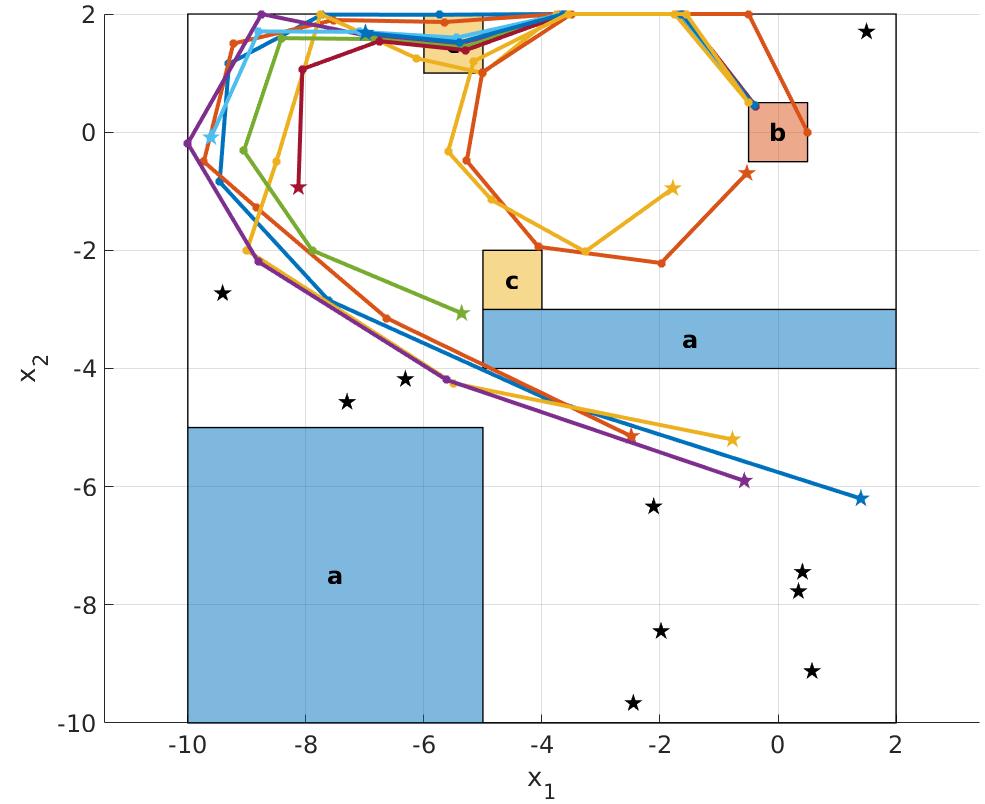}}
\caption{Comparation between idRTL and LanGuiCS. \added{Note that some continuous trajectories appear to pass through the forbidden blue region, which is an artifact of the time discretization.  In practice, we can avoid it by increasing the obstacles' size or changing the discretization strategy.}}
\label{fig:completeness}
\end{figure}

\subsection{Application to Motion Planning}

We consider a motion planning problem scenario where a mobile robot must reach a target region while avoiding collisions with obstacles. The main challenge is in considering narrow passages. As the number of passages grows, so does the size of the discrete abstraction.

We model the robot as a two-dimensional double integrator with sampling time $T_s = 0.5s$, where $\gls*{statec}_k \in \gls*{statecdom} \subseteq \mathbb{R}^4$, $\gls*{inputc}_k \in \gls*{inputcdom} \subseteq \mathbb{R}^2$, $\gls*{statecdom} = \{ \gls*{statec} \in \mathbb{R}^4 : 0 \leq x_1 \leq 30, 0 \leq x_2 \leq 30, -2 \leq x_3 \leq 2, -2 \leq x_4 \leq 2 \}$, $\gls*{inputcdom} = \{ \gls*{inputc} \in \mathbb{R}^2 : \| \gls*{inputc} \|_{\infty} \leq 0.5 \}$. The workspace is illustrated in Fig.~\ref{fig:mp_runs}, where grey regions indicate obstacles and green is the target region. 

\begin{figure}[htp]
\centering
\subfloat[One passage.]{\label{fig:mp_runs1}
\includegraphics[width=0.22\textwidth]{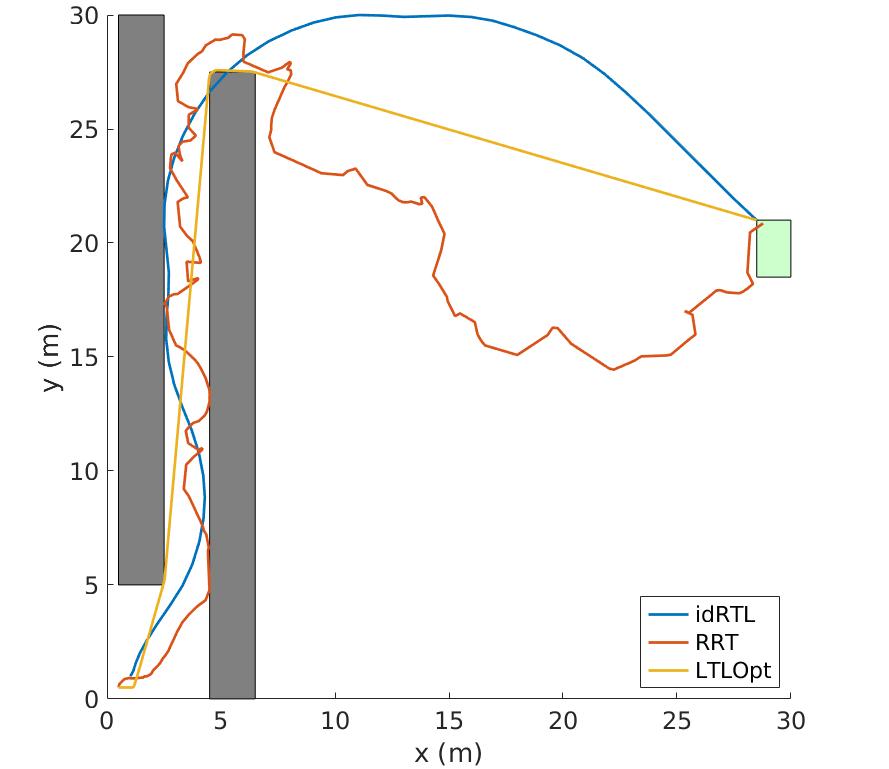}}
\subfloat[Four passages.]{\label{fig:mp_runs4}
\includegraphics[width=0.22\textwidth]{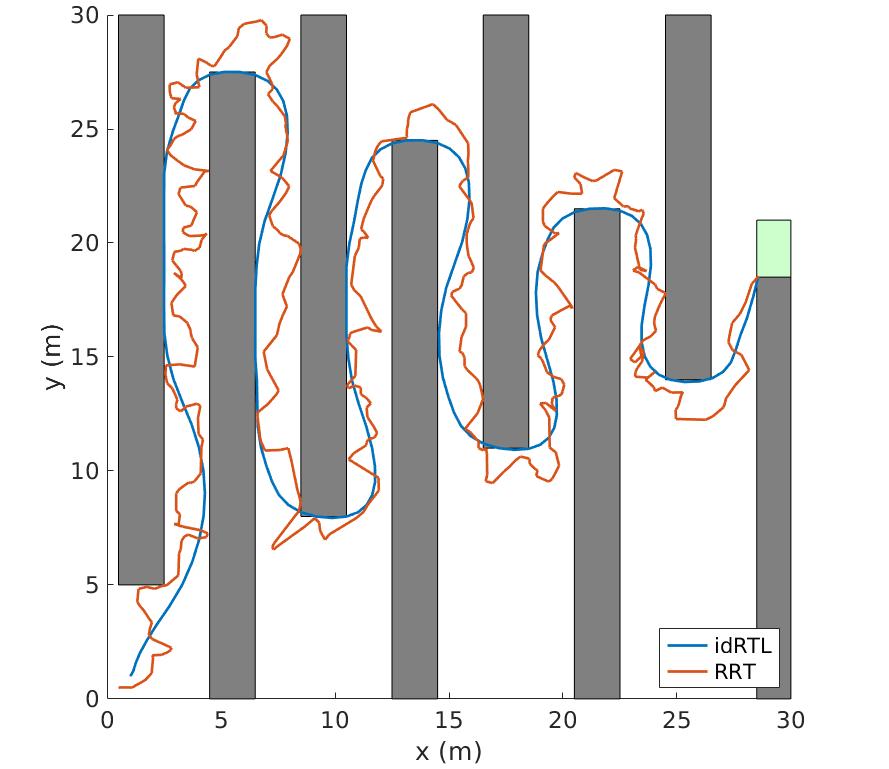}}
\caption{Comparison with sampling-based and MILP-based solvers on a maze problem.}
\label{fig:mp_runs}
\end{figure}

Table \ref{tab:compare} and Fig.~\ref{fig:mp_runs} demonstrate the scalability of idRTL in this motion planning application. In Table \ref{tab:compare}, we present the computation performance of idRTL and compare with state-of-art sampling-based and \gls*{milp}-based (LTLOpt \cite{wolff2016optimal}) approaches. This table shows the average run-time of $10$ executions for different numbers of passages. The number of passages increases the number of obstacles and the length of a satisfying run, as shown in Fig.~\ref{fig:mp_runs1} and \ref{fig:mp_runs4}. idRTL is consistently faster than the other approaches. This demonstrates that considering logical and dynamical constraints with a comination of SAT/SMT and optimization solvers is more efficient than solving a \gls*{milp} problem. The encoding of logical constraints in a \gls*{milp} problem is especially costly for longer runs. We also compared our approach with RRT  \cite{Lavalle98rapidly-exploringrandom}, a probabilistically complete motion planning algorithm \cite{karaman2011sampling}. Even though RRT solves a much less expressive problem, idRTL can sill be be an order of magnitude faster.

\begin{table}[] 
\caption[Run-time comparation between idRTL and different state-of-art approaches for a maze-like workspace.]{Run-time comparation between idRTL and different state-of-art approaches for a maze-like workspace. }
\centering
\begin{tabular}{c|c|c|cc}
                                      &            &                           & \multicolumn{2}{c}{LTLOpt (s)}   \\
\multirow{-3}{1.4cm}{Number of passages} &  \multirow{-2}{*}{idRTL(s)} & \multirow{-2}{*}{RRT (s)} & Solver         & Yalmip        \\ \hline
 1                                    & \cellcolor[HTML]{9AFF99}0.360    & 3.504                     & 146.93         & 1.536     \\
 2                                    & \cellcolor[HTML]{9AFF99}1.415    & 9.574                     & timeout            & timeout     \\
 3                                    & \cellcolor[HTML]{9AFF99}3.488    & 16.601                    & timeout            & timeout       \\
 4                                    & \cellcolor[HTML]{9AFF99}6.611    & 20.956                    & timeout            & timeout        \\
\end{tabular}
\label{tab:compare}
\end{table}

\subsection{High-dimensional dynamical systems}

We demonstrate the scalability of our approach to high-dimensional systems with an example of motion planning for a quadrotor. The quadrotor moves in 3-dimensional Euclidean space and operates with linearized dynamics having $18$ continuous variables. We compare these results with SatEx solver, which is also scalable to high-dimensional systems. We consider the collision avoidance scenario presented in \cite{shoukry2018smc}, and increase the complexity of the problem by increasing the length of the $x$-axis and the number of obstacles. 
 
Fig.\ref{fig:smc} shows that our approach can find satisfying solutions for this high-dimensional system even when we increase the problem complexity. idRTL is consistently faster in computing these runs than SatEx. The main reason for this is that our approach searches first for shorter discrete plans to generate the continuous runs. Since the non-convex nature of the problem is generated by the logical constraints, starting with shorter discrete runs reduces the non-convexity in the problem. 

\begin{figure}[htp]
\centering
\begin{tikzpicture}[thick,scale=0.4, every node/.style={scale=1.8}] 
	\begin{semilogyaxis}[ xlabel={$x$-axis length}, ylabel={Run-time (s)},legend style={at={(0.8,0.05)},anchor=south},xtick={6,7,8,9,10,11,12,13,14,15},width=16cm,height=8cm  ] 
		\addplot coordinates{ (6, 1.334) (7,1.339) (8,1.955) (9,2.410) (10,3.037) (11,3.346) (12,3.932) (13,4.754) (14,5.522) (15,5.956) }; 
		\addplot coordinates{ (6, 5.35142302513) (7,5.6817920208) (8,10.6562080383) (9,11.1644887924) (10,27.4904651642) (11,19.1369888783) (12,49.2101881504) (13,39.9829249382) (14,83.6151809692) (15,45.2980251312) }; 
		\legend{$idRTL$,$SatEx$} 
	\end{semilogyaxis} 
\end{tikzpicture}
\caption{The experiment in \cite[Sec. 6.3]{shoukry2017smc}, for which code is available.
%\footnote{\url{https://yshoukry.bitbucket.io/SatEX/}}
Longer $x$-axis also have more obstacles. }
\label{fig:smc}
\end{figure}
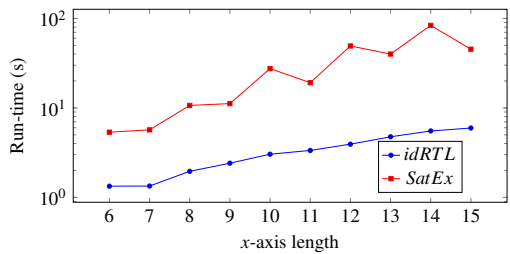

\section{Conclusion}\label{sec:conclusion}

We proposed a fast, scalable, and provably complete symbolic control method for unbounded temporal logic specifications. To address the coupling between nonconvex logical constraints and physical dynamic constraints, we designed a two-layer control architecture which separates discrete task planning and continuous motion planning on-the-fly. By directly addressing this core problem, our approach scales well to high-dimensional systems and complex specifications, as well as offering order-of-magnitude speed improvements over the current state-of-the-art. We hope that this work will provide a step towards safe and provably correct control of complex autonomous \gls*{ips}s. Future work will focus on extensions to unknown/dynamic environments and non-linear/hybrid systems. 

\bibliographystyle{IEEEtran}
\bibliography{IEEEabrv,library}

\begin{IEEEbiography}
    [{\includegraphics[width=1in,height=1.25in,clip,keepaspectratio]{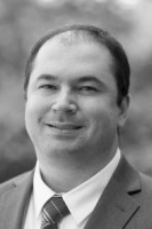}}]{Rafael Rodrigues da Silva} (M'16) received a bachelor in science (BS) degree in Electrical Engineering from the State University of Santa Catarina, in Joinville, Brazil, his hometown. Before going into academia, he had worked for almost ten years in the steel industry as a control engineer, where he lead a team of high qualified engineers and technicians in several projects. During this time, he received a master in science (MS) degree in Electrical Engineering and Industrial Computing at UTFPR, in Curitiba, Brazil, when he focused his research in high-performance Genetic Algorithms in FPGA (Field Programmable Gate Array) for Computer Vision. He is currently pursuing the Ph.D. degree in electrical engineering at University of Notre Dame, Notre Dame, IN, USA.
    
    His research interest focus on the design of intelligent physical systems, and, currently, he is working with verification and automatic synthesis of hybrid systems from high-level specifications.
\end{IEEEbiography}

\begin{IEEEbiography}
    [{\includegraphics[width=1in,height=1.25in,clip,keepaspectratio]{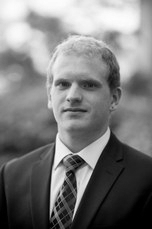}}]{Vince Kurtz}
Vince Kurtz studied physics at Goshen College (Goshen Indiana, 46526) and is currently a PhD student in electrical engineering at the University of Notre Dame (Notre Dame Indiana, 46556). His research interest in long-term autonomy for uncertain systems lies at the intersection of robotics, control theory, and formal methods. 
\end{IEEEbiography}

\begin{IEEEbiography}
    [{\includegraphics[width=1in,height=1.25in,clip,keepaspectratio]{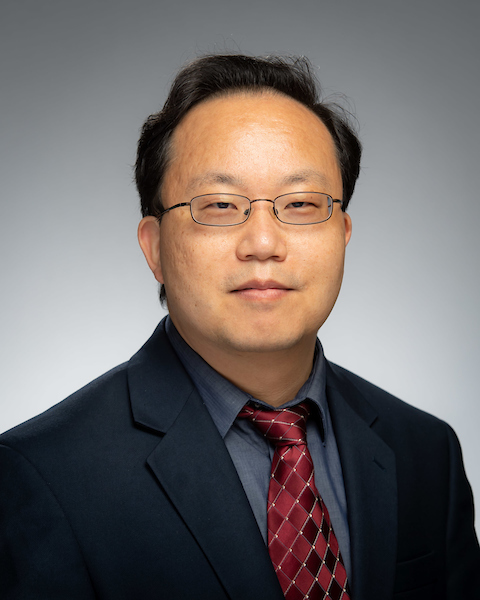}}]{Hai Lin} 
(SM’10) is currently a full professor at the Department of Electrical Engineering, University of Notre Dame, where he got his Ph.D. in 2005. Before returning to his {\em alma mater}, he has been working as an assistant professor in the National University of Singapore from 2006 to 2011. Dr. Lin's recent teaching and research activities focus on the multidisciplinary study of fundamental problems at the intersections of control theory, machine learning and formal methods. His current research thrust is motivated by challenges in cyber-physical systems, long-term autonomy, multi-robot cooperative tasking, and human-machine collaboration. Dr. Lin has been served in several committees and editorial board, including {\it IEEE Transactions on Automatic Control}. He served as the chair for the IEEE CSS Technical Committee on Discrete Event Systems from 2016 to 2018, the program chair for IEEE ICCA 2011, IEEE CIS 2011 and the chair for IEEE Systems, Man and Cybernetics Singapore Chapter for 2009 and 2010. He is a senior member of IEEE and a recipient of 2013 NSF CAREER award.
\end{IEEEbiography}

\end{document}